 \newtheorem{thm}{Theorem}[section]
  \newtheorem*{Lenard}{Lenard Scheme}
 \newtheorem{cor}[thm]{Corollary}
 \newtheorem{lem}[thm]{Lemma}
 \newtheorem{prop}[thm]{Proposition}
  \theoremstyle{definition}
 \newtheorem{defn}[thm]{Definition}
  \newtheorem{defn-thm}[thm]{Definition-Theorem}
 \theoremstyle{remark}
 \newtheorem{rem}[thm]{Remark}
 \newtheorem{ex}[thm]{Example}
 \newtheorem{notation}[thm]{Notation}
\numberwithin{equation}{section}
\numberwithin{thm}{section}
\numberwithin{table}{section}
\numberwithin{figure}{section}
\newcommand{\ZZ}{\mathbb{Z}}
\newcommand{\RR}{\mathbb{R}}
\newcommand{\CC}{\mathbb{C}}
\newcommand{\End}{\mbox{End\ }}
\newcommand{\g}{\mathfrak{g}}
\newcommand{\m}{\mathfrak{m}}
\newcommand{\n}{\mathfrak{n}}
\newcommand{\vac}{\left| 0 \right>}
\newcommand{\lt}{\left<}
\newcommand{\rt}{\right>}
\newcommand{\ad}{\text{ad}}
\newcommand{\gr}{\text{gr}}
\newcommand{\sll}{\text{sl}}
\newcommand{\WW}{\mathcal{W}}
\begin{document}

\title{ Structure of classical affine and classical affine fractional $\WW$-algebras }
\author{ Uhi Rinn Suh}
\date{}
\maketitle

\begin{abstract}
We show that one can construct a classical affine $\WW$-algebra via a classical BRST complex. This definition clarifies that classical affine $\WW$-algebras can be considered as  quasi-classical limits of quantum affine $\WW$-algebras. 
 
We also give a definition of a classical affine fractional $\WW$-algebra as a Poisson vertex algebra. As in the classical affine case, a classical affine fractional $\WW$-algebra has two compatible $\lambda$-brackets and is isomorphic to an algebra of differential polynomials as a differential algebra. When a classical affine fractional $\WW$-algebra is associated to a minimal nilpotent,  we describe explicit forms of free generators and compute $\lambda$-brackets between them.  Provided some assumptions on a classical affine fractional $\WW$-algebra, we find an infinite sequence of integrable systems related to the algebra, using the generalized Drinfel'd and Sokolov  reduction.

\end{abstract}

\tableofcontents

\setcounter{tocdepth}{-1}

\pagestyle{plain}

\section{Introduction}\label{Sec:Introduction}

An affine classical $\WW$-algebra is closely related to the theory of integrable systems. The connection between two structures  was constructed by Fateev and Lukyanov \cite{FL}. Moreover, Drinfel'd and Sokolov \cite{DS} explained that, given a Lie algebra and a principle element, an affine classical $\WW$-algebra is obtained by the Drin'feld-Sokolov Hamiltonian reduction. 

To be precise, let  $\g$ be a simple Lie algebra, let $(e,h,f)$ be an $\sll_2$-triple with a principle nilpotent $f$ and let $\g(i)$ be the eigenspace $\{ g\in \g | [\frac{h}{2},g]= i g\}.$ 
Consider the Lax operator
\begin{equation}\label{DSred} 
L= \partial_x+q(x, t)+\Lambda,\ \ \ x \in S^1, \ t\in \RR,
\end{equation}
 where $\Lambda=-f-pz^{-1}\in \g[z,z^{-1}]$, $p\in\g$ is a central element of $\n=\bigoplus_{i\geq 0} \g(i)$ and $q(x,t) \in \bigoplus_{i>-1} \g(i)$. The gauge transformation by $S \in \n$ on the space of matrix valued smooth  functions, $ S^1 \to \bigoplus_{i>-1} \g(i)$,  is defined by  
 $q(x,t)\mapsto \widetilde{q}(x,t)$, where $ e^{\ad S}L=\partial_x+\widetilde{q}(x,t)+ \Lambda $. Then the gauge invariant functionals consist of the affine classical $\WW$-algebra associated to $\g$ and $f$. Moreover, the $\WW$-algebra is associated to an integrable system obtained by a commutator of the Lax operator.
 
Burroughs, De Groot, Hollywood and Miramontes generalized this idea replacing $\Lambda$ by $\Lambda_m= z^{-m}\cdot \Lambda \in \g[z,z^{-1}]$ and $q(x)$ by $q_m(x) \in \bigoplus_{j=-m+1}^{0} \g z^j \oplus \bigoplus_{i>-1} \g(i) z^{-m}$ (see \cite{BDHM, DHM}). In this way, a classical affine fractional $\WW$-algebra associated to $\g$ and $\Lambda_m$ was described as a differential algebra using gauge invariant functionals. The authors also explained relations between fractional $\WW$-algebras and integrable systems. 

 On the other hand, Barakat, De Sole and Kac \cite{BDK} developed the theory of Hamiltonian equations in the language of the Poisson vertex algebra (PVA) theory. In particular, the authors described the Drinfel'd-Sokolov Hamiltonian reduction using $\lambda$-brackets and defined a classical affine $\WW$-algebra as a PVA. Indeed, the $\WW$-algebra in Drinfel'd-Sokolov \cite{DS} and the $\WW$-algebra in Barakat-De Sole-Kac \cite{BDK} are equivalent as differential algebras (see \cite{BDK, DKV}).

As one predicts from the name of a ``classical affine" $\WW$-algebra, there are other three types of $\WW$-algebras: classical finite, quantum finite and quantum affine $\WW$-algebras. These three types of $\WW$-algebras have algebraic structures of Poisson algebra, Lie algebra and Vertex algebra, respectively. In \cite{GG}, Gan and Ginzburg introduced two equivalent definitions of a quantum (classical) finite $\WW$-algebra, 
by a Lie algebra cohomology and  by a Hamiltonian reduction. In the quantum affine case,  a $\WW$-algebra is defined by a BRST-complex, which can be considered as a quantization of the Lie algebra cohomology in \cite{GG}  (see \cite{DK}). It is still open that if a quantum affine $\WW$-algebra has an analogous definition to the definition of a finite $\WW$-algebra by a Hamiltonian reduction. \\

In this paper, we give a new construction of a classical affine $\WW$-algebra by a classical BRST-complex and we prove that the two definitions,  via a classical BRST-complex and via a Hamiltonian reduction, are equivalent. Moreover, we define a classical affine fractional $\WW$-algebra as a PVA with two compatible $\lambda$-brackets. \\

Another main result of this paper is finding free generators of classical affine $\WW$-algebras, as differential algebras. In general, finding free generators of $\WW$-algebras is not easy. However, provided that $f$ is a minimal nilpotent in $\g$, Premet found the generating elements of finite quantum $\WW$-algebras associated to  $f$  as associative algebras. Also, they computed the Lie brackets between the generators.  In the affine quantum cases,   Kac and Wakimoto found the generating elements of $\WW$-algebras associated to $f$ as differential algebras and they computed $\lambda$-brackets between the generators. ( See \cite{KW}, \cite{P1} and \cite{P2}.)\\
  
As in the other cases, with the assumption that $f$ is a minimal nilpotent, one can find free generators of  affine classical $\WW$-algebras and  Poisson $\lambda$-brackets between them \cite{S}. Moreover, in this paper, we describe  generating elements of classical affine fractional $\WW$-algebras and $\lambda$-brackets between them.\\
\\

\centerline{$\textbf{Outline of this paper}$}

In Section \ref{Section:PVA,IS}, we review several notions which are used in following sections. In Section \ref{Subsec:PVA},  (nonlinear) Lie conformal  algebras and Poisson vertex algebras are introduced, and in Section \ref{Subsec:IS} integrable systems are explained in the theory of PVAs.
 
In Section \ref{Subsec:3.1} and \ref{Subsec:3.2}, we give a definition of a classical affine $\WW$-algebra by a classical BRST complex  and by a Hamiltonian reduction, respectively. In Section \ref{Subsec:3.3}, we prove the two definitions in Section \ref{Subsec:3.1} and Section \ref{Subsec:3.2} are equivalent, which is the first goal of this paper. 

In Section \ref{Sec:7}, we derive a definition of a classical affine fractional $\WW$-algebra as a PVA. In Section \ref{Subsec:7.1} we review the construction of  a classical affine fractional $\WW$-algebra explained in  \cite{BDHM, DHM} and show this algebra has two local Poisson brackets. In Section \ref{Subsec:7.2}, we give PVA structures  on the algebra using relations between local Poisson brackets and $\lambda$-brackets. The definition of a classical affine fractional $\WW$-algebra in Section \ref{Subsec:7.2} is analogous to the definition of a classical affine $\WW$-algebra via Hamiltonian reduction.

In Section \ref{Sec:8}, we find generating elements of classical affine fractional $\WW$-algebras when the algebras are associated to semisimple elements. In Section \ref{Subsec:8.1} and  \ref{Subsec:8.2}, we explain how to find generators of classical affine fractional $\WW$-algebras and we give an example using the method.  In Section \ref{Subsec:8.3}, we describe explicit formulas of generating elements of a classical affine fractional $\WW$-algebra when it is associated to a minimal nilpotent element. We notice that all the results in this section hold for classical affine $\WW$-algebras as corollaries and these are written in \cite{S}. 
  
In Section \ref{Sec:9}, we explain relations between integrable systems and classical affine fractional $\WW$-algebras using the language of PVAs. Here, the idea comes from  \cite{BDHM, DHM}.  \\
\\

\centerline{$\textbf{Acknowledgement}$}

I would like to thank my Ph.D. thesis advisor, Victor Kac, for valuable discussions.

\section{ Poisson vertex algebras and Integrable systems} \label{Section:PVA,IS}

\subsection{Lie conformal superalgebras and Poisson vertex algebras} \label{Subsec:PVA}

An associative algebra $\mathcal{D}$ endowed with a linear operator $\partial: \mathcal{D} \to \mathcal{D}$ is called a differential algebra if the operator $\partial$ satisfies
$$ \partial(AB)= A\partial(B)+\partial(A)B, \ \text{ for } A,B \in \mathcal{D}.$$
Let $I=\{1, \cdots , l\}$ be an index set. An important example of differential algebras is  the algebra of differential polynomials 
$
\CC_{\text{diff}}[ \ a_i\ |\  i\in  I \ ]:=\CC[\ a_i^{(n)}| \ a_i^{(n)}:= \partial^n a_i,   i\in I,  n \in \ZZ_{\geq 0}]
$
in the variables $a_1, \cdots, a_l$. \\

Now we introduce a Lie conformal superalgebra and a Poisson vertex algebra.  For this purpose, we review $\ZZ/2\ZZ$-graded algebraic structures and a $\lambda$-bracket on a $\CC[\partial]$-module. 

\begin{defn}
\begin{enumerate}[(i)]
\item
A vector superspace $V$ is a vector space with a $\ZZ/2\ZZ$-graded  decomposition $V=V_{\bar{0}} \oplus V_{\bar{1}}$. We call $V_{\bar{0}}$ the even space and $V_{\bar{1}}$ the odd space.
\item
Let $V$ be a vector superspace and $a \in V_{\bar{i}}$ . Then we say parity $p$ of $a$ is $i$ and we write $p(a)=i.$  
\item
Given a vector superspace $V=V_{\bar{0}} \oplus V_{\bar{1}}$, the algebra $\End V$ acquires a $\ZZ/2\ZZ$-grading by letting 
$$(\End V)_{\bar{\alpha}} = \{ A \in \End V | A(V_{\bar{\beta}}) \subset V_{\bar{\alpha}+\bar{\beta}}\}$$
for $\bar{\alpha}, \bar{\beta} \in \ZZ/2\ZZ.$
\item 
A commutative superalgebra $A=A_{\bar{0}}\oplus A_{\bar{1}}$ is a superalgebra such that 
$ab={p(a,b)}ba$, where  $a, b\in A_{\bar{0}} \cup A_{\bar{1}}$ and $p(a,b)=(-1)^{ p(a)p(b)}$. 
\end{enumerate}
\end{defn}

\begin{defn}[$\lambda$-bracket]
Let $R$ be a $\CC[\partial]$-module. A $\lambda$-bracket $[ \cdot_\lambda \cdot]: R \otimes R \to R[\lambda]$ on $R$ is a $\CC$-bilinear map satisfying
\begin{equation}\label{sesqui}
 [\partial a _\lambda b] = -\lambda[a _\lambda b]\ \text{and}\  \ [a _\lambda \partial b] = (\lambda + \partial) [a _\lambda b],
 \end{equation}
which are called sesquilinearities.
\end{defn}
We denote by $a_{(n)}b$ the $n$-th coefficient in $[a _\lambda b]$, i.e. $[a_\lambda b]= \sum_{n\in \ZZ_{\geq 0}}\lambda^n a_{(n)}b,$ $a_{(n)}b \in R$. Since the formal variable $\lambda$ commutes with $\partial$, sesquiliearities (\ref{sesqui}) are equivalent to 
\begin{equation}
 [\partial a _\lambda b] = -\sum_{i\geq 1}\lambda^i a_{(i-1)}b \ \text{and} \ \ [a_\lambda \partial b] = \sum_{i\geq 0} \lambda^i (\partial a_{(i)}b +a_{ (i-1)}b), \ \text{by letting} \ a_{(-1)}b:=0.
\end{equation}
We also note that $\partial [a_\lambda b]=[\partial a_\lambda b]+[a_\lambda \partial b].$

\begin{defn} [Lie conformal superalgebra] \label{Def:0402_2.2}
 A Lie conformal superalgebra $R$ is a $\CC[\partial]$-module endowed with a $\lambda$-bracket $[\cdot_\lambda\cdot]$ satisfying
\begin{enumerate}[(i)]
\item skewsymmetry  : $[ b _\lambda a ]= - p(a,b)[a _{-\partial-\lambda} b]= \sum_{j\geq 0} \frac{(-\partial-\lambda)^j}{j!}a_{(j)} b,$ 
 \item Jacobi identity  :  $[a _\lambda [b _\mu c]] =p(a,b) [b _\mu [a _\lambda c]] + [[a _\lambda b]_{\lambda+\mu} c],$
\end{enumerate}
where $p(a,b)=(-1)^{p(a)p(b)}$.
\end{defn}

\begin{defn}[Poisson vertex algebra] \label{def-thm} 
A Poisson vertex algebra (PVA) is a quintuple  $(\mathcal{V}, \vac, \partial, \{ \cdot _\lambda \cdot \}, \cdot )$ satisfying the following three properties:
\begin{enumerate}[(i)]
\item $ (\mathcal{V}, \partial, \{ \cdot _\lambda \cdot \})$  is a Lie conformal superalgebra,
\item $ (\mathcal{V}, \vac, \partial, \cdot  )$ is a unital differential associative commutative superalgebra, 
\item $\{a _\lambda bc\} = p(a,b)b\{a _\lambda c\} +  \{ a _\lambda b\} c $, \ \ $a,b,c \in \mathcal{V}.$
\end{enumerate}
\end{defn}

\begin{rem}
Property (iii) in  Definition \ref{def-thm} is called the left Leibniz rule. Along with the skewsymmetry of the $\lambda$-bracket, the right Leibniz rule (\ref{rLeib}) follows:
\begin{equation} \label{rLeib}
\{ab_\lambda c\} =  \{b_{\lambda+\partial}c\}_{\to} a
+  \{a_{\lambda+\partial}c\}_{\to}b, \ \ a,b,c\in \mathcal{V}
\end{equation}
Here the small arrows in (\ref{rLeib}) indicate that the operator $\partial$ acts on the right side of the $\lambda$-brackets. For instance,  $\{a_{\lambda+\partial} b\}_{\to}c= \sum_{i\geq 0} a_{(i)}b (\lambda+\partial)^i c.$ As in the skewsymmetry in Definition \ref{Def:0402_2.2}, without arrows, the derivation $\partial$ acts on the left side of the $\lambda$-brackets. 
\end{rem}

\begin{ex} The Virasoro-Magri PVA on
 $\mathcal{V}=\CC_{\text{diff}}[u]$ with central charge $c \in \CC$ is defined by the $\lambda$-bracket
 \begin{equation}\label{Vir}
\{u_\lambda u\}= (\partial+2\lambda) u+\lambda^3 c.
\end{equation}
The $\lambda$-bracket on $\mathcal{V}$  is completely determined by (\ref{Vir}) using the sesquilinearities and Leibniz rules. Also, one can check the skewsymmetry and Jacobi identity by direct computations. 
\end{ex}

As in the previous example, a $\lambda$-bracket structure on a PVA  is determined by $\lambda$-brackets between generating elements. If a PVA is an algebra of differential polynomials and the $\lambda$-brackets between generating elements are given, then one can find a $\lambda$-bracket  between any two elements in the PVA, by the master formula (\ref{master formula}). 

\begin{prop}
Let $A=\CC_{\text{diff}}[\ a_i\ |\ i \in I\ ]$ be a PVA endowed with the $\lambda$-bracket $\{\cdot_\lambda\cdot\}$. Then $A$ satisfies the following equation called ``master formula'' \cite{DK} :
\begin{equation}\label{master formula}
\{f_{ \ \lambda \ } g\} = \sum_{i,j \in I, m,n \in \ZZ_{\geq0}} \frac{\partial g}{\partial a_j^{(n)}}(\partial+\lambda)^n \{a_{i \ \lambda+\partial \ } a_j \}_{\to}(-\partial-\lambda)^m \frac{\partial f}{\partial a_i^{(m)}}.
\end{equation}
\end{prop}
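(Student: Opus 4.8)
The plan is to derive the master formula from the axioms of a Poisson vertex algebra, exploiting that $A=\CC_{\text{diff}}[a_i\mid i\in I]$ is freely generated as a differential algebra, so every element is a polynomial in the $a_i^{(n)}$. First I would record the two ingredients that carry the proof: the sesquilinearity relations \eqref{sesqui}, which tell us how $\partial$ on either argument pulls out factors of $(-\partial-\lambda)$ or $(\lambda+\partial)$, and the left and right Leibniz rules (Definition \ref{def-thm}(iii) and \eqref{rLeib}), which reduce a $\lambda$-bracket of products to $\lambda$-brackets of the factors. Since $f,g$ are polynomials in the variables $a_i^{(m)}$, repeated application of the Leibniz rules expresses $\{f_\lambda g\}$ as a sum, over pairs of variables appearing in $f$ and $g$, of terms involving $\{a_i^{(m)}{}_\lambda a_j^{(n)}\}$ multiplied by the partial derivatives $\partial f/\partial a_i^{(m)}$ and $\partial g/\partial a_j^{(n)}$; the combinatorial bookkeeping here is exactly the statement that the $\lambda$-bracket is a biderivation in each argument, so one is really differentiating $f$ and $g$ formally.

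The key computational step is to rewrite $\{a_i^{(m)}{}_\lambda a_j^{(n)}\}$ in terms of $\{a_i{}_\lambda a_j\}$. Using the first sesquilinearity $[\partial a_\lambda b]=-\lambda[a_\lambda b]$ iterated $m$ times gives $\{a_i^{(m)}{}_\lambda b\}=(-\lambda)^m\{a_i{}_\lambda b\}$, and using the second sesquilinearity $[a_\lambda\partial b]=(\lambda+\partial)[a_\lambda b]$ iterated $n$ times gives $\{a_i^{(m)}{}_\lambda a_j^{(n)}\}=(-\lambda)^m(\lambda+\partial)^n\{a_i{}_\lambda a_j\}$. One must be careful about the placement of the arrows: in the master formula the operator $(\lambda+\partial)^n$ acts to the left (onto $\partial g/\partial a_j^{(n)}$) and the operator $(-\partial-\lambda)^m$ also acts to the left (onto $\partial f/\partial a_i^{(m)}$), which is precisely what the right Leibniz rule \eqref{rLeib} produces when the variable $a_i^{(m)}$ sits inside the first (left) argument of the bracket. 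So the route I would take is: expand $\{f_\lambda g\}$ by the \emph{left} Leibniz rule in the second argument $g$, peeling off one variable $a_j^{(n)}$ at a time with a factor $\partial g/\partial a_j^{(n)}$ and a $(\lambda+\partial)^n$ coming from sesquilinearity; then expand the remaining $\{f_\lambda a_j\}$-type bracket by the \emph{right} Leibniz rule in the first argument $f$, peeling off $a_i^{(m)}$ with a factor $\partial f/\partial a_i^{(m)}$ carried to the far right and a $(-\partial-\lambda)^m$; the surviving core is $\{a_i{}_{\lambda+\partial}a_j\}_\to$.

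The main obstacle is the arrow bookkeeping: one has to keep straight, at each application of a Leibniz rule, whether a derived factor $\partial$ (produced by sesquilinearity) acts on material to its left or to its right, and whether the variable being extracted ends up to the left or the right of the bracket. A clean way to manage this is to first prove two reduction lemmas as separate displays — one stating $\{f_\lambda a_j^{(n)}\}=\sum_{j,n}\{a_j{}_{\lambda+\partial}\,?\,\}$-free intermediate form with the left derivative $\partial g/\partial a_j^{(n)}$ pulled out, and one stating the analogous right-hand reduction for the first slot — and then compose them. The induction is on the number of variables (with multiplicity) occurring in the monomials of $f$ and $g$; the base case is $f=a_i^{(m)}$, $g=a_j^{(n)}$, which is exactly the double sesquilinearity computation above, and the inductive step is an application of the Leibniz rule plus the observation that $\partial/\partial a_i^{(m)}$ is itself a derivation, so it matches the Leibniz expansion term by term. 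Once the arrows are pinned down, the remaining manipulations are routine and the two sides of \eqref{master formula} coincide.
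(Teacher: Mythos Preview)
The paper does not prove this proposition at all: it is stated with a reference to \cite{DK} (the master formula is proved in full in \cite{BDK}), and the paper simply quotes it as a known result. So there is no ``paper's own proof'' to compare against; your sketch is supplying what the paper omits.

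Your overall strategy is the standard one from \cite{BDK}: expand in the second argument by the left Leibniz rule, expand in the first argument by the right Leibniz rule, and use sesquilinearity to strip the $\partial^m$, $\partial^n$ from the generators. The induction you describe is correct and the base case is exactly the double sesquilinearity computation.

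One point of confusion in your write-up: you say the operator $(\lambda+\partial)^n$ ``acts to the left (onto $\partial g/\partial a_j^{(n)}$)''. That is not what the master formula says. In \eqref{master formula} the factor $\frac{\partial g}{\partial a_j^{(n)}}$ sits as a \emph{multiplicative} coefficient on the far left, and $(\partial+\lambda)^n$ is a differential operator acting on everything to its \emph{right}, namely on $\{a_i{}_{\lambda+\partial}a_j\}_\to(-\partial-\lambda)^m\frac{\partial f}{\partial a_i^{(m)}}$. This is exactly what comes out of the left Leibniz rule together with the second sesquilinearity $\{f_\lambda a_j^{(n)}\}=(\lambda+\partial)^n\{f_\lambda a_j\}$: the $\frac{\partial g}{\partial a_j^{(n)}}$ is pulled out front as a scalar, and the $(\lambda+\partial)^n$ produced by sesquilinearity continues to act on the remaining bracket. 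Likewise $(-\partial-\lambda)^m$ acts to its right, on $\frac{\partial f}{\partial a_i^{(m)}}$. If you run your two reduction steps carefully you will see this; the arrows are the only delicate part, as you yourself note, and that is precisely where your description slipped.
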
 

\vskip 5mm

To construct an affine $\WW$-algebra via a BRST complex, we recall the main ingredient called a nonlinear Lie conformal superalgebra in the remaining part of this subsection. 

Let $\Gamma_+$ be a discrete additive subset of $\RR_+$ containing $0$ and let $\Gamma_+' := \Gamma_+ \backslash \{0\}$. Let $R$ be a $\CC[\partial]$-module endowed with the $\Gamma_+$-grading
$ R = \bigoplus_{ \zeta \in \Gamma_+ ' } R_\zeta,$
where $R_\zeta$ is a $\CC[\partial]$-submodule. We denote by $\zeta(a)$ the $\Gamma_+'$-grading of $a\in R.$ 

 The tensor superalgebra $\mathcal{T}(R)$  of $R$ is also endowed with the $\Gamma_+$-grading
$$\mathcal{T} (R):= \bigoplus_{\zeta \in \Gamma_+} \mathcal{T}(R)[\zeta], \text{  \ where\  } \mathcal{T}(R)[\zeta] = \{ a \in \mathcal{T}(\g)| \zeta(a) =\zeta\},$$
 such that $ \zeta(a \otimes b) = \zeta(a) +\zeta(b)$ and  $ \zeta(1)=0.$  For $\zeta\in \Gamma_+'$, let us denote by $\zeta_-$ the largest element in $\Gamma_+$ strictly smaller than $\zeta$ and let $ \mathcal{T}_\zeta(R)$ be the direct sum  $\bigoplus_{\alpha \leq \zeta_-} \mathcal{T}(R)[\alpha]$ of $\CC[\partial]$-modules.

\begin{defn}
A nonlinear Lie conformal superalgebra $R$ is a $\Gamma_+'$-graded $\CC[\partial]$-module endowed with a nonlinear $\lambda$-bracket 
$$[\cdot_\lambda \cdot] : R \otimes R \to \CC[\lambda]\otimes \mathcal{T}(R)$$
satisfying
\begin{enumerate}[(i)]
\item grading condition : $ [R_{\zeta_1 \ \lambda}  R_{\zeta_2} ]  \subset \mathcal{T}(R)_{\zeta_1+\zeta_2}, $
\item sesquilinearity :  $[\partial a _\lambda b]=-\lambda[a_\lambda b], \ [a_\lambda \partial b] = (\lambda +\partial)[a_\lambda b] ,$
\item skewsymmetry : $[a_\lambda b] = -p(a,b) [b_{-\lambda-\partial} a],$
\item Jacobi identity : $[a_\lambda[b_\mu c]]-p(a,b)[b_\mu [a_\lambda c]]-[[a_\lambda b]_{\lambda+\mu}  c] \in \CC[\lambda, \mu] \otimes \mathcal{M}(R),$
where $\mathcal{M}(R)$  is the left ideal of $\mathcal{T}(R)$ generated by $ a \otimes b \otimes C -p(a,b) b \otimes a \otimes C - \left( \int_{-\partial}^0  [a_\lambda b] d\lambda \right)\otimes  C$,  
\end{enumerate}
for $a,b,c \in R$ and $C \in \mathcal{T}(R).$
\end{defn}

\begin{ex}
Let $p\in \g$ commute with $\n$, let $k,c \in \CC$ and let $R=Cur^{cp}_k(\g)$ be the free $\CC[\partial]$-module $\CC[\partial]\otimes \g$ endowed with the nonlinear $\lambda$-bracket
$$[a_\lambda b] = [a,b]+\lambda k(a,b) + c(p,[a,b]), \ \ \text{ for } a,b\in \g.$$
Take $\Gamma_+=\ZZ_{\geq 0}$ and let $\gamma(a)=1$ for any $a \in R$. Then $R$ satisfies the grading condition. By computations, one can check that  $R$ is a nonlinear Lie conformal algebra. In addition, let $S(R)$ be the symmetric algebra  of the vector space $R$ over $\CC$. Define the $\lambda$-bracket on  $S(R)$ by the $\lambda$-bracket on $R$ and Leibniz rules. Then one can check that $S(R)$ is a Poisson vertex algebra.
\end{ex}

\subsection{Integrable systems} \label{Subsec:IS}
Let  $u_i=u_i(x,t)$, $i \in I=\{1, \cdots, l\}$,  be smooth functions with the coordinate $x\in S^1$ and time $t \in \RR$ and  let 
$u^{(n)}=(u_i^{(n)})_{i\in I}= \left(\frac{\partial^n}{\partial x^n} u_i\right)_{i\in I}$. An infinite dimensional evolution equation is a system of partial differential equations of the form
\begin{equation}\label{EE}
\frac{du_i}{dt}= P_i(u, u', u^{(2)}, \cdots), \ i \in  I,
\end{equation}
where $P_i$ are in the algebra of differential polynomials
\begin{equation} \label{DA}
A= \CC[u_i^{(n)}| i\in I , n=0,1, \cdots]=\CC_{\text{diff}}[\ u_i\ |\ i\in I \ ].
\end{equation}
Here, the derivation $\partial$ on $A$ is defined by
\begin{equation}
\partial =\sum_{n \in \ZZ_{\geq0}} \sum_{i=1}^l  u_i^{(n+1)} \frac{\partial}{\partial u_i^{(n)}}, \ \text{ or } \ \partial u_i^{(n)}= u_i^{(n+1)} . 
\end{equation}
We note that the commutator between $\partial$ and $\frac{\partial}{\partial u_i}$ acts on $f\in A$ as follows:
\begin{equation}
\left[ \partial, \frac{\partial}{\partial u_i^{(n)}}\right] f = \left( \partial \frac{\partial}{\partial u_i^{(n)}} - \frac{\partial}{\partial u_i^{(n)}} \partial \right) f = - \frac{\partial f}{\partial u_i^{(n-1)}}.
\end{equation}
Also, we say the total derivative order of $f\in A$ is $n$ 
if  the following two conditions hold:\\
\begin{equation*}  \text{
(i) there is $i \in I$ such that 
$\frac{\partial f}{\partial u_i^{(n)}}\neq 0$
\ \ (ii)  for any $j \in I$, the derivative $\frac{\partial f}{\partial u_j^{(n+1)}}=0$ }.
\end{equation*}

\vskip 5mm

Now we introduce integrals of motion in algebraic way and Hamiltonian systems using PVAs. 

\begin{defn} [local functional] \label{Def:LocalFunc} 
A local functional $\int f$ is the image of $f \in A$ under the projection map
\begin{equation}
\int: A \to A/\partial A.
\end{equation}
\end{defn}
The space of local functionals $\int  A$ is the universal space for the algebras satisfying the integration by parts. 
\begin{defn}  [integral of motion] \label{Def:IntMotion}
A local functional $\int f$ is called an integral of motion if
\begin{equation} \label{Eqn:0124_1.6}
\int \frac{df}{dt} =0.
\end{equation}
\end{defn}

\begin{rem}
When we need to clarify that $x$ is the coordinate, we use $\partial_x$, $f(x)$, and $\int f(x) dx$ instead of $\partial$, $f$, and $\int f$.
\end{rem}

Equivalently, using the chain rule and the integration by parts, a functional $\int f$ is called an integral of motion if
\begin{equation}\label{Eqn:2.8_010914}
\int \sum_{i\in I, n \in \ZZ_{\geq0}} \frac{\partial f}{\partial u_i^{(n)}}  \frac{du_i^{(n)}}{dt} = \int  \sum_{i\in I} \left(\sum_{n \in \ZZ_{\geq0}} (-\partial)^n\frac{\partial f}{\partial u_i^{(n)}}\right)  \frac{du_i}{dt}=0.
\end{equation}
To make the formula (\ref{Eqn:2.8_010914}) simpler, we recall the following definition.

\begin{defn}[fractional derivative]\label{FracDerivative}  The functional derivative of the differential polynomial $f \in A$ is defined as follows:
\begin{equation}
\frac{\delta f}{\delta u_i}= \sum_{n \in \ZZ_{\geq0}} (-\partial)^n \frac{\partial f}{\partial u_i^{(n)}}, \text{ and }  \frac{\delta f}{\delta u} = \left( \frac{\delta f}{\delta u_i} \right)_{i\in I}.
\end{equation}
\end{defn}
It is easy to see that (\ref{Eqn:2.8_010914}) is equivalent to 
\begin{equation}
\int \frac{\delta f}{\delta u} \cdot P=0,
\end{equation}
where $P=(P_i)_{i \in I}$ is same as in (\ref{EE}) and the dot product between $a=(a_i)_{i \in I}$ and $b= (b_i)_{i \in I}$ is $ \sum_{i \in I } a_i b_i$. Moreover, we remark that
\begin{equation} \label{Eqn:1.5}
\frac{\delta }{\delta u}\circ \partial =0.
\end{equation}

\vskip 5mm

In order to discuss Hamiltonian systems, 
let us consider an $l\times l$ matrix\\  $H(\partial)= H_{ij}(u,u', \cdots; \partial)_{i,j\in I}$, where entries are finite order  differential operators in $\partial$ with coefficients in $A$. 
Let $\{\cdot, \cdot\}_H$ be a local Poisson brackets on $A$ associated to $H(\partial)$, i.e. 
\begin{equation*}
\{u_i(x), u_j(y)\}_H =H_{ji} (u(y), u'(y), \cdots; \partial_y) \delta(x-y).
\end{equation*}
Here the distribution $\delta(x-y)$ is defined as follows. ( See \cite{FT}, Introduction in \cite{BDK} for details.)

\begin{defn} \label{Def:delta}
Let $x, y \in S^1$. The distribution $\delta(x-y)$ satisfies $\int_{S^1} \phi(x)  \delta(x-y) dx = \phi(y)$  for any smooth function $\phi\in A.$
\end{defn}

\begin{rem} \label{Rmk:delta}
One can check that  
$\delta(x-y)=\delta(y-x).$ Also, we let the distribution $\partial_x^n \delta(x-y)$ be defined by the equation $\int \phi(x) \partial_x^n \delta(x-y) dx = (- \partial_y)^n \phi(y)$. Then $\partial_x^n \delta(x-y)= (- \partial_y)^n \delta(x-y)$ follows by simple computations.
\end{rem}

Using Leibniz rules of the local Poisson bracket, we have
\begin{equation}\label{Eqn:1.6}
\{f(x), g(y)\} = \sum_{i,j \in I ,m,n \in \ZZ_{\geq0}} \frac{\partial f(x) }{\partial u_i^{(m)}} \frac{\partial g (y)}{\partial u_j^{(n)}} \partial_x^m \partial_y^n \{u_i(x), u_j(y)\}.
\end{equation}
By taking the double integral $\int\int dx dy$ of (\ref{Eqn:1.6}), we get
\begin{equation*}
\left\{ \int f(x) dx, \int g(y) dy \right\}=  \sum_{i,j \in I} \int \frac{\delta g(y)}{ \delta u_j} H_{ji}(u(y), u'(y), \cdots, \partial_y)  \frac{\delta f(y)}{ \delta u_i} dy
\end{equation*}
and, by taking $\int dx$ of both sides of  (\ref{Eqn:1.6}), we obtain
\begin{equation*}
\left\{ \int f(x) dx,  g(y) \right\}=  \sum_{i,j \in I, n \in \ZZ_{\geq0}}  \frac{\partial g(y)}{ \partial u_j^{(n)}} \partial_y^nH_{ji}(u(y), u'(y), \cdots, \partial_y)  \frac{\delta f(y)}{ \delta u_i}. 
\end{equation*}

\begin{defn} \label{Def:bracket}
Let $H(\partial)= H_{ij}(u,u', \cdots; \partial)_{i,j\in I}$ be an $l\times l$ matrix, where entries are finite order  differential operator in $\partial$ with coefficients in $A$. The bracket $\{ \cdot, \cdot\}_H: \int A \otimes \int A \to \int A$ between two local functionals is defined by
\begin{equation} \label{Eqn:1.7}
\textstyle \left\{ \int f, \int g\right\}_H = \int \frac{\delta g}{\delta u} \left( H( \partial) \frac{\delta f}{\delta u}  \right),
\end{equation}
where $H(\partial):= H(u_i, u_i', u_i^{(2)}, \cdots ; \partial)$,
 and the bracket $\{ \cdot, \cdot\}_H: \int A \otimes A \to  A$ between a local functional and a function is defined by
\begin{equation} \label{Eqn:1.8}
\textstyle \left\{ \int f,  g\right\}_H: =  \sum_{i,j \in I, n \in \ZZ_{\geq0}}  \frac{\partial g}{ \partial u_j^{(n)}} \partial^nH_{ji}( \partial)  \frac{\delta f}{ \delta u_i},
\end{equation}
where $ H(\partial):= (H_{ij}(\partial))_{i,j \in I}. $
\end{defn}

Note that the brackets in Definition \ref{Def:bracket} are well-defined by (\ref{Eqn:1.5}). Also,  If we let $H_{ji}(\lambda)=\{u_{i \ \lambda \  } u_j \}_H$, by (\ref{Eqn:1.7}), (\ref{Eqn:1.8}) and (\ref{master formula}), we have
\begin{equation} \label{Eqn:1.10}
\textstyle \left\{\ \int f\   ,\  g\ \right\}_H = \left\{\ f_{ \ \lambda \ } g\ \right\}_H|_{\lambda=0}, \ \ \ \ \left\{\ \int f\ , \ \int g\ \right\}_H = \int \left\{\ \int f\ ,\  g\ \right\}_H.
\end{equation}

\begin{defn}
\begin{enumerate}[(i)]
\item
An $l\times l$ matrix-valued differential operator $H(\partial)$ is called a Poisson structure if 
\begin{equation}
H_{ji}(\lambda):=\{u_{i\ \lambda} u_j\}_H
\end{equation}
gives a PVA structure on $A$.  
\item
Let $H(\partial)$ be a Poisson structure. An evolution equations of the form
\begin{equation}
\frac{du}{dt} = H(\partial) \frac{\delta h}{\delta u}
\end{equation}
is called a Hamiltonian system.  Equivalently, in terms of a $\lambda$-bracket on $A$ 
\begin{equation}  \label{Def:IntSys}
\frac{du}{dt}= \{h\ _\lambda \ u\}_H|_{\lambda=0}
\end{equation}
is called a Hamiltonian system.
\item
Local functionals $\int h, \int h' \in A$ are called to be in involution if $\{ \int h , \int h'\}_H= 0$ and the Hamiltonian system (\ref{Def:IntSys})
is called an integrable system if there are infinitely many linearly independent local functionals $\int h_0, \cdots, \int h_n$, $n=1,2, \cdots$, which are in involution. 
\end{enumerate}
\end{defn}

One can see that if $\int h_n$ for $n \in \ZZ_{\geq0}$ are linearly independent and they are all in involution, then the hierarchy of Hamiltonian equations
$$\left( \frac{du}{dt} = \{ h_{n\ \lambda \ } u\}_H|_{\lambda=0}\right)_{n\in \ZZ_{\geq 0}},$$    
consists of integrable equations. Hence the following observation called Lenard scheme plays a key role in finding integrals of motion.

\begin{Lenard}
Let $H(\partial)$ and $K(\partial)$ be Poisson structures such that 
$$H_{ji}(\lambda):= \{u_{i\ \lambda \ } u_j\}_H, \ \ K_{ji}(\lambda ):= \{u_{i\ \lambda \ } u_j\}_K.$$
If  there are local functions $\int h_0 , \int h_1, \cdots, \int h_n$  satisfying 
\begin{equation} \label{Eqn:Len}
\begin{aligned}
& \{h_{0\ \lambda\ }u_i\}_K|_{\lambda=0}=0, \\
& \{h_{t \ \lambda\ } u_i \}_H|_{\lambda=0}= \{h_{t+1 \ \lambda\ } u_i\}_K|_{\lambda=0}, \ \ t=0, \cdots, n-1,
\end{aligned}
\end{equation}
for $i \in I$, then  $\int h_0, \cdots, \int h_n$ are all in involution with respect to both brackets. 
\end{Lenard}

\begin{thm} \cite{BDK} Let $H(\partial)$ and $K(\partial)$ be a bi-Hamiltonian pair, i.e. $(H+K)(\partial)$ is also a Hamiltonian operator on $A^l$. If the following conditions hold:
\begin{enumerate}[(i)] 
\item $K(\partial)$ is non-degenerate,
\item $H(\partial) \frac{\delta h_0}{\delta u} = K(\partial) \frac{\delta h_1}{\delta u} $,
\item $ \left(\text{span}_\CC\left\{ \frac{\delta h_0}{\delta u}, \frac{\delta h_1}{\delta u}\right\} \right)^\perp\subset K(\partial)$,
\end{enumerate}
then there exists a sequence of local functionals $\int h_i$, $i \in \ZZ_{\geq 0}$, satisfying (\ref{Eqn:Len}).
\end{thm}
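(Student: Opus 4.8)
The plan is to follow the classical Lenard--Magri recursion argument adapted to the Poisson vertex algebra setting, using the hypothesis that $H$ and $K$ form a bi-Hamiltonian pair together with the non-degeneracy of $K$. First I would construct the sequence $\left(\int h_i\right)_{i\geq 0}$ inductively. The base cases $\int h_0$, $\int h_1$ are given, and condition (ii) is precisely the first instance of (\ref{Eqn:Len}). Suppose that $\int h_0, \dots, \int h_n$ have been produced satisfying $H(\partial)\frac{\delta h_t}{\delta u} = K(\partial)\frac{\delta h_{t+1}}{\delta u}$ for $t=0,\dots,n-1$. The goal is to find $\int h_{n+1}$ with $H(\partial)\frac{\delta h_n}{\delta u} = K(\partial)\frac{\delta h_{n+1}}{\delta u}$. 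Since $K$ is non-degenerate, there is at most one candidate $\frac{\delta h_{n+1}}{\delta u}$, namely $K(\partial)^{-1}H(\partial)\frac{\delta h_n}{\delta u}$ in the appropriate localized/pseudodifferential sense; the real content is to show (a) that this element lies in the image of $K(\partial)$ acting on actual functions in $A^l$, and (b) that the resulting vector $P := H(\partial)\frac{\delta h_n}{\delta u}$ is a variational derivative, i.e. $P = \frac{\delta h_{n+1}}{\delta u}$ for some $\int h_{n+1}$.

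For step (b), the key is the standard criterion: a vector $P\in A^l$ is a variational derivative of a local functional if and only if its Fr\'echet derivative $D_P(\partial)$ is self-adjoint. To verify this, I would use the bi-Hamiltonian property. The condition that $H$, $K$, and $H+K$ are all Poisson structures is equivalent (by \cite{BDK}) to the compatibility of the two $\lambda$-brackets, i.e. $\{\cdot_\lambda\cdot\}_{H+K} = \{\cdot_\lambda\cdot\}_H + \{\cdot_\lambda\cdot\}_K$ satisfies the Jacobi identity, which forces a bilinear ``Schouten-type'' identity relating $H$ and $K$. Combined with the inductive hypothesis that the earlier $\frac{\delta h_t}{\delta u}$ are genuine variational derivatives and with the skewadjointness of the Poisson operators, a computation using $\frac{\delta}{\delta u}\circ\partial = 0$ (equation (\ref{Eqn:1.5})) shows that $D_P$ is self-adjoint, hence $P$ integrates to some $\int h_{n+1}$. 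Condition (iii), that $\left(\mathrm{span}_\CC\{\frac{\delta h_0}{\delta u},\frac{\delta h_1}{\delta u}\}\right)^\perp \subset K(\partial)$ (meaning this orthogonal complement lies in the image of $K$), is what guarantees that the recursion does not get stuck: one shows inductively that each $\frac{\delta h_n}{\delta u}$ is orthogonal to both $\frac{\delta h_0}{\delta u}$ and $\frac{\delta h_1}{\delta u}$ with respect to the relevant pairing $\int \frac{\delta h}{\delta u}\cdot(\,\cdot\,)$, using the involutivity that the Lenard scheme already delivers for the functionals constructed so far, so that $H(\partial)\frac{\delta h_n}{\delta u}$ falls inside $\mathrm{Im}\,K(\partial)$ and (a) holds.

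Concretely the induction would run as follows. Assume $\int h_0,\dots,\int h_n$ satisfy (\ref{Eqn:Len}) and, as an auxiliary invariant, that $\left\{\int h_i,\int h_j\right\}_H = \left\{\int h_i,\int h_j\right\}_K = 0$ for all $i,j\le n$ (this follows from the Lenard scheme already proved in the excerpt) and that $\frac{\delta h_n}{\delta u}\perp \mathrm{span}_\CC\{\frac{\delta h_0}{\delta u},\frac{\delta h_1}{\delta u}\}$. Then by (iii) the vector $H(\partial)\frac{\delta h_n}{\delta u}$, which is orthogonal to the same span (using skewadjointness of $H$ and the involution relations), lies in $\mathrm{Im}\,K(\partial)$, so there is $\xi\in A^l$ with $K(\partial)\xi = H(\partial)\frac{\delta h_n}{\delta u}$. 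Next, using compatibility of the two Poisson structures and self-adjointness of the Fr\'echet derivatives of $\frac{\delta h_{n-1}}{\delta u}$ and $\frac{\delta h_n}{\delta u}$, one checks $D_\xi = D_\xi^*$, so $\xi = \frac{\delta h_{n+1}}{\delta u}$ for some $\int h_{n+1}$; finally one checks the orthogonality of $\frac{\delta h_{n+1}}{\delta u}$ to $\frac{\delta h_0}{\delta u},\frac{\delta h_1}{\delta u}$ to close the induction, again via the involution relations propagated by the Lenard scheme.

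The main obstacle I expect is step (b): proving that $H(\partial)\frac{\delta h_n}{\delta u}$ is exact, i.e. that its Fr\'echet derivative is self-adjoint. This is where the full strength of the bi-Hamiltonian (compatibility) hypothesis must be used, and it requires the precise ``Lenard--Magri'' cohomological argument — essentially that the obstruction to exactness at stage $n+1$ is measured by a term that vanishes because of the Jacobi identity for $\{\cdot_\lambda\cdot\}_H + \{\cdot_\lambda\cdot\}_K$ and the exactness already established at stages $n-1$ and $n$. Handling the non-degeneracy of $K$ carefully (so that ``$K(\partial)^{-1}$'' makes sense as an operation landing back in $A^l$ rather than in a larger space of pseudodifferential symbols) is the other delicate point, and is exactly what hypotheses (i) and (iii) are designed to control.
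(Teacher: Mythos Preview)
The paper does not prove this theorem at all: it is stated with the citation \cite{BDK} and no argument is given in the present paper. So there is nothing here to compare your proposal against; the theorem is quoted as background from Barakat--De~Sole--Kac.

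That said, your sketch is essentially the standard Lenard--Magri argument as carried out in \cite{BDK}, and the overall architecture is right: use (iii) to guarantee $H(\partial)\frac{\delta h_n}{\delta u}\in\text{Im}\,K(\partial)$ via the orthogonality $\int\frac{\delta h_j}{\delta u}\cdot H(\partial)\frac{\delta h_n}{\delta u}=\{\int h_n,\int h_j\}_H=0$ for $j=0,1$, then use the bi-Hamiltonian compatibility together with non-degeneracy of $K$ to propagate closedness (self-adjointness of the Fr\'echet derivative) from $\frac{\delta h_0}{\delta u},\frac{\delta h_1}{\delta u}$ to all $\xi_n$. One point to tighten: you phrase step~(b) as checking that $P=H(\partial)\frac{\delta h_n}{\delta u}$ is exact, but what must be shown exact is the preimage $\xi$ with $K(\partial)\xi=P$, not $P$ itself. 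In \cite{BDK} this is isolated as a separate lemma (if $K$ is non-degenerate and $(H,K)$ compatible, closedness of $\xi_0,\xi_1$ forces closedness of every $\xi_n$ in a Lenard chain), and the argument there goes through the identity expressing Jacobi for $H+K$ rather than an ad hoc Fr\'echet computation; you have the right ingredients but should separate the ``closedness propagates'' lemma from the ``image condition'' step more cleanly.
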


\begin{ex} [KdV hierarchy] (See \cite{BDK} for details) \\
Let $R= \CC_{\text{diff}}[u]$ be a PVA with two $\lambda$-brackets $\{\cdot_\lambda\cdot\}_H$ and $\{\cdot_\lambda \cdot\}_K$ such that 
$$\{u_\lambda u\}_H=(\partial+2\lambda)u + c \lambda^3, \ \ \{u_\lambda u\}_K= \lambda,\ \ c\in \CC$$
which means that $H(\partial)= u'+2u\partial +c\partial^3$ and $K(\partial)=\partial.$ Then we have  
\begin{equation}
\begin{aligned}
&  \{u_\lambda u\}_K|_{\lambda=0}=0,\ \ \{u_\lambda u\}_H|_{\lambda=0}= \{\frac{1}{2}u^2_\lambda u\}_K|_{\lambda=0}=\partial u, \\ 
&\{\frac{1}{2}u^2_\lambda u\})_H|_{\lambda=0} = \{\frac{1}{2} u^3+\frac{1}{2}cuu''_\lambda u\}_K|_{\lambda=0} = 3uu'+cu''' .
\end{aligned}
\end{equation}
Hence $h_0=u$, $h_1=\frac{1}{2} u^2$, $h_2= \frac{1}{2} u^3+\frac{1}{2} cuu''$ satisfy (\ref{Eqn:Len}). Recursively, one can find $h_i$ for any $i\in \ZZ_{\geq0}$ and the equations $\frac{du}{dt}= \{h_{i\ \lambda\ }u_i\}_K|_{\lambda=0}$ are all integrable systems. Especially, $\frac{du}{dt}= \{h_{2\ \lambda\ }u_i\}_K|_{\lambda=0}= 3uu'+cu'''$ is the KdV equation.  
\end{ex}

\section{Two equivalent definitions of classical affine $\WW$-algebras}\label{Sec:3}

Let $\g$ be a finite dimensional simple Lie algebra over $\CC$ with an $\sll_2$ triple $(e,h,f).$ The Lie algebra $\g$ can be decomposed into direct sum of eigenspaces of $\ad \frac{h}{2}$, $\g= \bigoplus_{i \in \frac{\ZZ}{2} } \g(i)$, where
$\g(i)=\{g \in \g | [\frac{h}{2},g]=ig\}$. Let 
$$\n=\bigoplus_{i \geq \frac{1}{2} }\g(i) \subset  \m=\bigoplus_{i \geq 1}\g(i)$$
 be two nilpotent subalgebras of $\g$  and let 
  $$\g_f=\{g \in \g| [g,f]=0\}.$$ We fix a symmetric bilinear form $(\cdot, \cdot)$ such that $(e,f)=1$ and $(h,h)=2.$\\
 
\subsection{The first definition of classical affine $\WW$-algebras}\label{Subsec:3.1}
The following three nonlinear Lie conformal superalgebras are needed to
define classical affine $\WW$-algebras. 

\begin{enumerate}
\item
Let $p\in \g$ commute with $\n$, let $k,c \in \CC$ and let $Cur_k^{cp}
(\g)$ be the free $\CC[\partial]$-module $\CC[\partial] \otimes \g$ endowed with the nonlinear $\lambda$-bracket
\begin{equation*}
[a_\lambda b]= [a,b] + \lambda k (a,b)+c(p,[a,b]), \text{ for } \ a,b \in \g.
\end{equation*}
\item
Let $\Pi$ be the parity reversing map and let $\phi_\n$ and
$\phi^{\n^*}$ be purely odd vector superspaces isomorphic to $\Pi\n$ and
$\Pi\n^*$, respectively. The charged free fermion nonlinear Lie
conformal superalgebra $F_{ch}$ is the free $\CC[\partial]$-module
$\CC[\partial] \otimes (\phi_\n \oplus \phi^{\n^*})$ endowed with the nonlinear
$\lambda$-bracket defined by
\begin{equation*}
[\phi_{a \ \lambda} \ \phi^{\theta}] = \theta(a), \ \ a\in \n, \
\theta\in \n^*,
\end{equation*}
\begin{equation*}
[\phi_{a \ \lambda} \ \phi_b]=[\phi^{\theta_1 }_{ \ \ \lambda}
 \ \phi^{\theta_2}]=0, \ \ \ a, b \in \n , \ \ \theta_1, \theta_2 \in
\n^*.
\end{equation*}
\item
Let $\Phi_{\n/\m}$ be a vector space isomorphic to
$\g\left(\frac{1}{2}\right)$ with the skew symmetric bilinear form $\lt
\Phi_{[a]}, \Phi_{[b]}\rt = (f,[a,b]).$ The Neutral free fermion
nonlinear Lie conformal algebra $F_{ne}$ is the free $\CC[\partial]$-module 
$\CC[\partial]\otimes \Phi_{\n/\m}$ endowed with the nonlinear $\lambda$-bracket
\begin{equation*}
[\Phi_{[a] \ \lambda} \Phi_{[b]}] = \lt \Phi_{[a]}, \Phi_{[b]}\rt =
(f,[a,b]), \ \ \ [a],[b] \in \n/\m.
\end{equation*}
\end{enumerate}

We denote the direct sum of the three nonlinear Lie conformal superalgebras $Cur_k^{cp}(\g)$, $F_{ch}$ and $F_{ne}$ by
\begin{equation} \label{Eqn:R_k^{cp}}
R_k^{cp}(\g,f):= \CC[\partial]\otimes
(\g\oplus \phi_\n \oplus \phi^{\n^*} \oplus \Phi_{\n/\m}).
\end{equation}

Let $S(R_k^{cp}(\g,f))$ be the Poisson vertex algebra endowed with the $\lambda$-bracket such that
\begin{align*}
\{ a _\lambda b\}=[a_\lambda b], \ \ a, b \in R_k^{cp}(\g,f),
\end{align*}
and extended by left and right Leibniz
rules.

Let $X: \n \to R_k^{cp}(\g, f)$ be a linear map such that $ X(a)=a +(f,a)+ \Phi_{[a]}, \ a\in \n$, and take the odd element
\begin{equation}\label{Eqn:3.2}
\begin{aligned}
 d &= \sum_{\alpha \in S} \phi^{v^\alpha} (u_\alpha+(f,u_\alpha)+
\Phi_{[u_\alpha]})+\frac{1}{2} \sum_{\alpha, \beta\in S}
\phi^{v^\alpha}\phi^{v^\beta} \phi_{[u_\beta, u_\alpha]}\\
&=\sum_{\alpha \in S} \phi^{v^\alpha} X_{u_\alpha} + \frac{1}{2}
\sum_{\alpha, \beta \in S} \phi^{v^\alpha} \phi^{v^\beta}
\phi_{[u_\beta, u_\alpha]} 
\end{aligned}
\end{equation}
  in $S(R_k^{cp}(\g,f))$, where
$\{u_\alpha| \alpha\in \mathcal{S}\}$ and $\{v^\alpha| \alpha \in \mathcal{S}\}$
are dual bases of $\n$ and $\n^*$. We also note that 
\begin{align*}
\{X_{a\ \lambda} \ X_b\}=X_{[a,b]}.
\end{align*}

\begin{prop} \label{Prop:3.1}
Let the operator $d_{(0)}$ on $S(R_k^{cp}(\g,f))$ be defined by $d_{(0)}(a)= \{d_\lambda a\}|_{\lambda=0}$. Then $d_{(0)}$ is a differential on the complex
$S(R_k^{cp}(\g,f))$.
\end{prop}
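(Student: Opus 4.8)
The plan is to verify the three things that make $d_{(0)}$ a differential on the complex $S(R_k^{cp}(\g,f))$: that $d_{(0)}$ is an odd derivation respecting the differential algebra structure, that it is homogeneous of degree $+1$ for the natural charge grading, and that $d_{(0)}^2=0$; only the last of these requires work. For the formal part, I would note that since $\{d_\lambda\,\cdot\,\}$ satisfies the left Leibniz rule of Definition \ref{def-thm}(iii), evaluating at $\lambda=0$ gives $d_{(0)}(bc)=p(d,b)\,b\,d_{(0)}(c)+d_{(0)}(b)\,c$, so $d_{(0)}$ is a derivation of odd parity (recall $d$ is odd); and the sesquilinearity $\{d_\lambda\partial b\}=(\lambda+\partial)\{d_\lambda b\}$ at $\lambda=0$ gives $d_{(0)}\circ\partial=\partial\circ d_{(0)}$. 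Grading $S(R_k^{cp}(\g,f))$ by charge, with $\phi^\theta$ of charge $+1$, $\phi_a$ of charge $-1$, and the generators coming from $\g$ and $\Phi_{\n/\m}$ of charge $0$, both summands of $d$ in \eqref{Eqn:3.2} are homogeneous of charge $1$, so $d_{(0)}$ raises charge by one. It therefore remains to prove $d_{(0)}^2=0$.

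Next I would reduce $d_{(0)}^2=0$ to the classical master equation $\{d_\lambda d\}=0$. Applying the Jacobi identity of Definition \ref{Def:0402_2.2} with $d$ in the first two slots, and using $p(d,d)=-1$, one gets, for every $g\in S(R_k^{cp}(\g,f))$,
\[
\{d_\lambda\{d_\mu g\}\}+\{d_\mu\{d_\lambda g\}\}=\{\{d_\lambda d\}_{\lambda+\mu}\,g\},
\]
and setting $\lambda=\mu=0$ yields $2\,d_{(0)}^2(g)=(d_{(0)}d)_{(0)}(g)$. Since $d$ contains no $\partial$, the bracket $\{d_\lambda d\}$ is in fact $\lambda$-independent, hence equals $d_{(0)}d$, so it is enough to check $\{d_\lambda d\}=0$.

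Finally I would compute $\{d_\lambda d\}$ directly. Writing $d=d_1+d_2$ with $d_1=\sum_{\alpha}\phi^{v^\alpha}X_{u_\alpha}$ and $d_2=\tfrac12\sum_{\alpha,\beta}\phi^{v^\alpha}\phi^{v^\beta}\phi_{[u_\beta,u_\alpha]}$, I would expand
\[
\{d_\lambda d\}=\{d_{1\,\lambda}d_1\}+\{d_{1\,\lambda}d_2\}+\{d_{2\,\lambda}d_1\}+\{d_{2\,\lambda}d_2\}
\]
with the left and right Leibniz rules, using that $R_k^{cp}(\g,f)$ is a direct sum of $Cur_k^{cp}(\g)$, $F_{ch}$ and $F_{ne}$, so the only contractions are those between $\phi^{\n^*}$ and $\phi_\n$, governed by $\{\phi_{a\,\lambda}\phi^\theta\}=\theta(a)$, and those between the $X$'s, governed by $\{X_{a\,\lambda}X_b\}=X_{[a,b]}$. (There is no anomaly from the terms $k(a,b)$, $c(p,[a,b])$ and $(f,[a,b])$: for $a,b\in\n$ one has $[\n,\n]\subset\m$, the invariant form vanishes on $\n\otimes\n$, and $[p,\n]=0$, while $(f,[a,b])$ is already absorbed into $X_{[a,b]}$.) The result is a sum of terms of two shapes. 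The terms cubic in $\phi^{\n^*}$ (with a $\phi_\n$-coefficient) arise only from $\{d_{2\,\lambda}d_2\}$ and, after the antisymmetrization forced by $\phi^{v^\alpha}\phi^{v^\beta}\phi^{v^\delta}$, are governed by the double brackets $[u_\delta,[u_\beta,u_\alpha]]$, hence vanish by the Jacobi identity of $\n$. The terms quadratic in $\phi^{\n^*}$ (with an $X$-coefficient) coming from $\{d_{1\,\lambda}d_1\}$, namely $\pm\sum_{\alpha,\beta}\phi^{v^\alpha}\phi^{v^\beta}X_{[u_\alpha,u_\beta]}$, are cancelled exactly by those coming from $\{d_{1\,\lambda}d_2\}+\{d_{2\,\lambda}d_1\}$ — this cancellation is precisely why the cubic correction $d_2$ is included in \eqref{Eqn:3.2} with coefficient $\tfrac12$ and fermionic coefficient $\phi_{[u_\beta,u_\alpha]}$. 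Thus $\{d_\lambda d\}=0$, and therefore $d_{(0)}^2=0$.

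The step I expect to be the main obstacle is the bookkeeping in this last computation: tracking carefully the Koszul signs produced when the odd generators $\phi^{v^\alpha}$ and $\phi_a$ are reordered, and organizing the double and triple sums of structure constants of $\n$ so that its Jacobi identity applies cleanly. Morally this is the quasi-classical limit of the nilpotency of the quantum BRST charge, and one could alternatively deduce $\{d_\lambda d\}=0$ from that identity (cf.\ \cite{DK, KW}).
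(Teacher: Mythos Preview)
Your proposal is correct and follows essentially the same approach as the paper: reduce $d_{(0)}^2=0$ to the classical master equation $\{d_\lambda d\}=0$ via the Jacobi identity, then split $d=d_1+d_2$ and check that the $\phi^{\n^*}\phi^{\n^*}X$ terms from $\{d_{1\,\lambda}d_1\}$ cancel against those from the cross brackets while the $\phi^{\n^*}\phi^{\n^*}\phi^{\n^*}\phi_\n$ terms from $\{d_{2\,\lambda}d_2\}$ vanish by the Jacobi identity of $\n$. Your preliminary remarks on $d_{(0)}$ being an odd $\partial$-commuting derivation and on the charge grading are helpful context that the paper omits, but the core computation is identical.
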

\begin{proof}
Since $d$ is an odd element in $S(R_k^{cp}(\g,f))$, we have $\{d_\lambda\{d_\mu
A\}\}=-\{d_\lambda \{d_\mu A\}\}+\{\{d_\lambda d\}_{\lambda+\mu}
A\}$. This implies that if $\{d _\lambda d\}=0$, then $ d_{(0)}^2 A=
\frac{1}{2}\{\{d _\lambda
d\}_{\lambda+\mu} A\}|_{\lambda=\mu=0}=0$. Hence it suffices to show that $\{d_\lambda d\}=0$. \\
Let us compute $\{d_\lambda d\}$ using Jacobi identity and
the skewsymmetry.
The $\lambda$-brackets between the summands of $d$ are as follows:
\begin{enumerate}[(i)]
\item
\begin{equation*}
\sum_{\alpha, \beta \in S}\{ \phi^{v^{\alpha}} X_{u_\alpha \ \lambda}
 \phi^{v^{\beta}} X_{u_\beta} \}= \sum_{\alpha,\beta\in S} \phi^{v^\beta}\phi^{v^\alpha}
X_{[u_\beta,u_\alpha]},
\end{equation*}

\item
\begin{align*}
&\sum_{\alpha, \beta, \gamma \in S}   \{\phi^{v^\alpha}X_{u_\alpha \
\lambda}\phi^{v^\beta} \phi^{v^\gamma} \phi_{[u_\gamma,
u_\beta]}\} \nonumber \\
& =  \sum_{\alpha, \beta, \gamma \in S} \phi^{v^\beta}
\phi^{v^\gamma} v^\alpha( [u_\gamma, u_\beta])X_{u_\alpha}\nonumber = \sum_{\alpha, \beta, \gamma \in S} \phi^{v^\beta} \phi^{v^\gamma}
X_{[u_\gamma, u_\beta]},
\end{align*}

\item
\begin{align*}
&\sum_{\alpha, \beta, \gamma, \delta \in S}
\{\phi^{v^\alpha}\phi^{v^\beta} \phi_{[u_\beta, u_\alpha] \ \lambda}
\phi^{v^\gamma}\phi^{v^\delta} \phi_{[u_\delta,
u_\gamma]}\}\nonumber\\
&=4\sum_{\alpha, \beta, \gamma, \delta \in
S}v^\delta([u_\beta,u_\alpha])\phi^{v^\gamma}\phi^{v^\alpha}\phi^{v^\beta}
\phi_{[u_\gamma, u_\delta]}=4\sum_{\alpha, \beta, \gamma \in
S}\phi^{v^\gamma}\phi^{v^\alpha}\phi^{v^\beta} \phi_{[u_\gamma,
[u_\beta,u_\alpha]]}=0.
\end{align*}
\end{enumerate}
The last equality in $(iii)$ follows from: 
\begin{equation*}
\phi^{v^\gamma}\phi^{v^\alpha}\phi^{v^\beta} \phi_{[u_\gamma,
[u_\beta,u_\alpha]]}+\phi^{v^\alpha}\phi^{v^\beta}\phi^{v^\gamma}
\phi_{[u_\alpha,
[u_\gamma,u_\beta]]}+\phi^{v^\beta}\phi^{v^\gamma}\phi^{v^\alpha}
\phi_{[u_\beta,
[u_\alpha,u_\gamma]]}=0.
\end{equation*}
Hence we have
$$\{d_\lambda d\}=(i)+(ii)+\frac{1}{4}(iii)=\sum_{\alpha, \beta, \gamma \in S}\phi^{v^\gamma}\phi^{v^\alpha}\phi^{v^\beta} \phi_{[u_\gamma,
[u_\beta,u_\alpha]]}=0$$
and $d_{(0)}$ is a differential on the complex $S(R_k^{cp}(\g,f))$.
\end{proof}

\begin{defn}[classical affine $\WW$-algebra (1)] \label{Def:W1aff}
Let $C_1=S(R_k^{cp}(\g,f))$ and $d_1 = d_{(0)}$, where $d$ is defined in (\ref{Eqn:3.2}). The cohomology $H(C_1, d_1)$ is the classical affine $\WW$-algebra $\WW_1(\g, k,f)$ associated to  $\g$, $f$ and $k\in \CC$ .
\end{defn}

\begin{rem} \label{Note:3.3}
\begin{enumerate}[(i)]
\item

Consider the PVA isomorphism
\begin{equation}\label{Eqn:3.2_2}
\psi^{cp}: S(R_k^0(\g,f) )\to S( R_k^{cp}(\g,f))  
\end{equation}such that 
$\psi^{cp}(a)= \left\{
\begin{array}{ll}
a+[cp, a] & \text{if}\ a\in \g, \\
a & \text{if}\ a \in \phi_n\oplus \phi^{n^*} \oplus \Phi_{\n/\m}.
\end{array}\right.$
 Then $\psi^{cp}$ induces the well-defined PVA isomorphism $\overline{\psi^{cp}}: H(S(R_k^0(\g,f)), d_1) \to H(S(R_k^{cp}(\g,f)), d_1).$ Hence the PVA structure on $H(S(R_k^{cp}(\g,f)), d_1)$ is independent on $c$ and $p$.

\item Let $\{u_\alpha| \alpha \in \overline{\mathcal{S}}\}$ and $\{u^\alpha|\alpha \in \overline{\mathcal{S}}\}$ be dual bases of $\g$, where $u_\alpha \in \g(j_\alpha)$. Let $\{z_1, \cdots ,z_{2s}\}$ and  $\{z_1^*, \cdots ,z_{2s}^*\}$ be bases of $ \g\left(\frac{1}{2}\right)$ such that $(f, [z_i, z_j^*])=\delta_{i,j}$. Then there is an energy-momentum field $\bar{L}$ of $\WW_1(\g,k,f)$ which is the projection of the element    
\begin{equation} \label{Eqn:3.3}
L=\frac{1}{2k}\sum_{\alpha \in \overline{S}} u_\alpha u^\alpha +\partial x -\sum_{\alpha \in S} j_\alpha\phi^{v^\alpha} \partial\phi_{u_\alpha}+\sum_{\alpha \in S} (1-j_\alpha)\partial \phi^{v^\alpha} \phi_{u_\alpha}+\frac{1}{2}\sum_{i=1}^{2s}\partial \Phi_{[z_i^*]}\Phi_{[z_i]}
\end{equation}
 of $S(R_k^{cp}(\g,f))$. By direct computations, we have $\{\bar{L}_\lambda \bar{L}\}=(\partial+2\lambda) \bar{L} -\frac{1}{2} k \lambda^3.$
\end{enumerate}
\end{rem}

\subsection{The second definition of classical affine $\WW$-algebras}\label{Subsec:3.2}
Let $S(\CC[\partial]\otimes\g)$ be the symmetric algebra generated by
$\CC[\partial]\otimes\g$ endowed with the $\lambda$-bracket induced from the $\lambda$-bracket on $Cur_k^{cp}(\g)$ and Leibniz
rules. Consider the differential algebra ideal 
\begin{equation}\label{Eqn:3.4}
J = \lt m-\chi(m) | m \in \m , \chi(m)=(f,m) \rt
\end{equation} 
of $S(\CC[\partial]\otimes\g)$. The $\lambda$-adjoint action of $\n$ on $S(\CC[\partial]\otimes \g)/J$ is defined by 
$$(\ad_\lambda n) (a+J) = \{n_\lambda a\} + J[\lambda].$$
For given $n \in \n$, $x \in \m$ and $A \in S(\CC[\partial]\otimes \g)$, we have
\begin{equation*}
\{n _\lambda A(m-\chi(m))\} =A\{n_\lambda
m-\chi(m)\}+\{n_\lambda A\}(m-\chi(m)) \in J[\lambda].
\end{equation*}
Hence the $\ad_\lambda \n$ action is well-defined and we can define the vector space:
$$ \WW_2(\g,k,f)=\left(S(\CC[\partial]\otimes\g)/ \lt m+ \chi(m) : m\in \m\rt
\right)^{\ad_\lambda \n}.$$
The following proposition shows that the multiplication and
the Poisson $\lambda$-bracket on $\WW_2(\g,f,k)$ can be defined by $(a+J)\cdot (b+J) =ab+J$ and $\{a+J
_\lambda  b+J\}=\{a_\lambda b\}+J[\lambda]$.

\begin{prop} The multiplication and the Poisson $\lambda$-bracket are
well-defined on $\WW_2(\g,k,f)$.
\end{prop}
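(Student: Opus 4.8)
The plan is to treat each operation with two separate checks: \emph{representative-independence}, that the value in $S(\CC[\partial]\otimes\g)/J$ is unchanged if we replace the lifts $a,b\in S(\CC[\partial]\otimes\g)$ of $\bar a,\bar b$ by other lifts; and \emph{closure}, that the value again lies in the $\ad_\lambda\n$-invariant subspace $\WW_2(\g,k,f)$. Three facts will be used repeatedly: $J$ of (\ref{Eqn:3.4}) is a differential ideal, so $\partial J\subseteq J$ and $S(\CC[\partial]\otimes\g)\cdot J\subseteq J$; $\chi(m)\in\CC$ for $m\in\m$, hence $\{a_\lambda\chi(m)\}=\{\chi(m)_\lambda a\}=0$; and $\m\subseteq\n$, so that $\bar b\in\WW_2(\g,k,f)$ forces $\{m_\lambda b\}\in J[\lambda]$ for every $m\in\m$ (this is exactly invariance under $\ad_\lambda\n$ restricted to $\m$).

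For the multiplication, representative-independence is automatic since $J$ is an ideal. For closure, take $\bar a,\bar b\in\WW_2(\g,k,f)$ and $n\in\n$; the left Leibniz rule gives $\{n_\lambda ab\}=p(n,a)\,a\{n_\lambda b\}+\{n_\lambda a\}\,b$, and since $\{n_\lambda a\},\{n_\lambda b\}\in J[\lambda]$ and $J$ is an ideal, this lies in $J[\lambda]$; hence $\overline{ab}\in\WW_2(\g,k,f)$.

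The core of the argument for the $\lambda$-bracket is the auxiliary claim: \emph{if $j\in J$ and $\bar b\in\WW_2(\g,k,f)$, then $\{j_\lambda b\}\in J[\lambda]$, and symmetrically $\{a_\lambda j\}\in J[\lambda]$ whenever $\bar a\in\WW_2(\g,k,f)$.} For the first half, write $j=\sum_k A_k\bigl(m_k-\chi(m_k)\bigr)$ with $A_k\in S(\CC[\partial]\otimes\g)$, $m_k\in\m$, and expand each $\{(A_k(m_k-\chi(m_k)))_\lambda b\}$ via the right Leibniz rule (\ref{rLeib}). The summand $\{A_{k\,\lambda+\partial}b\}_{\to}(m_k-\chi(m_k))$ is a sum of $S(\CC[\partial]\otimes\g)$-multiples of the elements $(\lambda+\partial)^i(m_k-\chi(m_k))$, each of which is in $J[\lambda]$ because $J$ is a differential ideal. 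In the other summand $\{(m_k-\chi(m_k))_{\lambda+\partial}b\}_{\to} A_k$ one has $\{(m_k-\chi(m_k))_\mu b\}=\{m_{k\,\mu}b\}\in J[\mu]$ by the second and third facts, so after $\mu\mapsto\lambda+\partial$ (with $\partial$ acting on $A_k$) the result stays in $J[\lambda]$. The symmetric statement is proved the same way from the left Leibniz rule, using skewsymmetry to rewrite $\{a_\lambda m_k\}=-p(a,m_k)\{m_{k\,-\lambda-\partial}a\}$, which lies in $J[\lambda]$ since $J$ is $\partial$-stable. Granting the claim, representative-independence of the bracket follows immediately: for $j,j'\in J$ and for lifts $a,a+j$ of $\bar a$ and $b,b+j'$ of $\bar b$,
\[
\{(a+j)_\lambda(b+j')\}-\{a_\lambda b\}=\{j_\lambda(b+j')\}+\{a_\lambda j'\}\in J[\lambda].
\]

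Closure of the $\lambda$-bracket is then obtained from the Jacobi identity: for $n\in\n$ and $\bar a,\bar b\in\WW_2(\g,k,f)$,
\[
\{n_\mu\{a_\lambda b\}\}=p(n,a)\{a_\lambda\{n_\mu b\}\}+\{\{n_\mu a\}_{\mu+\lambda}b\}.
\]
Since $\{n_\mu b\},\{n_\mu a\}\in J[\mu]$, expanding in powers of $\mu$ and applying the auxiliary claim coefficientwise shows that both terms on the right lie in $J[\lambda,\mu]$, whence $\overline{\{a_\lambda b\}}\in\WW_2(\g,k,f)[\lambda]$. I expect the only genuinely delicate point to be the fact underlying the auxiliary claim: $J$ is \emph{not} stable under the full $\ad_\lambda\g$, so one cannot merely assert that the $\lambda$-bracket descends to $S(\CC[\partial]\otimes\g)/J$; one must invoke both the invariance of $a$ and $b$ and the inclusion $\m\subseteq\n$. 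Everything else is routine bookkeeping with the Leibniz rules, skewsymmetry, and the Jacobi identity.
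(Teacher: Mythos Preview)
Your proof is correct and follows the same approach as the paper's (closure of the product via the Leibniz rule, representative-independence of the bracket via your auxiliary claim, and closure of the bracket via the Jacobi identity), only spelled out in more detail where the paper simply says ``one can check.'' One minor point: since $J$ is the \emph{differential} ideal generated by $m-\chi(m)$, a general $j\in J$ has the form $\sum_k A_k\,\partial^{n_k}(m_k-\chi(m_k))$ rather than $\sum_k A_k(m_k-\chi(m_k))$, but your argument carries over unchanged by sesquilinearity, since $\{\partial^{n}(m-\chi(m))_{\mu} b\}=(-\mu)^{n}\{m_{\mu} b\}\in J[\mu]$.
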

\begin{proof}
Let $a+J, b+J \in \WW_2(\g, k, f)$ and $n \in \n$. Then we have
 $$(\ad_\lambda \ n) \ ((a+J)\cdot
(b+J))= (b+J)((\ad_\lambda \ n) \ (a+J))+(a+J)((\ad_\lambda \ n)
(b+J))\in 0+J[\lambda]$$ by the definition of $\WW_2(\g, k, f)$. Hence the
multiplication is well-defined on $\WW_2(\g, k, f)$. 

Moreover, one can check that $ \{a+A(m_1-\chi(m_1)) _\lambda
b+B(m_2-\chi(m_2))\} \in \{a_\lambda b\}+J[\lambda]$, for any $A, B \in S(\CC[\partial]\otimes \g)$,
and 
$(\ad_\lambda n) \{a_\mu b\}+J[\mu] \subset 0+J[\lambda,\mu]$, by Jacobi identity.  Hence the Poisson $\lambda$-bracket is also well-defined on $\WW_2(\g,k,f)$.
\end{proof}

\begin{defn}[classical affine $\WW$-algebra (2)] \label{Def:W2aff}
The classical affine $\WW$-algebra associated to $\g$, $f$ and $k$
is the PVA $$\WW_2(\g,f,k)=\left(S(\CC[\partial]\otimes\g)/ \lt m+ \chi(m) |  m\in
\m\rt \right)^{\ad_\lambda \n}.$$
\end{defn}

\begin{rem}
Let  ($\WW^{(cp)}(\g, f,k), \{\cdot_\lambda \cdot \}^{(cp)}$), where $c\in \CC$ and $p \in \mathfrak{z}(\n)$,  be the classical affine $\WW$-algebra associated to  $\g$ and $f$ endowed with the Poisson bracket induced from the bracket $\{ \cdot_\lambda \cdot\}^{(cp)} $ on $S(\CC[\partial]\otimes \g)$ such that  
$$\{a_\lambda b\}^{(cp)} = [a,b]+c(p,[a,b])+\lambda k(a,b), \ \ a, b \in \g.$$
We have the PVA isomorphism
\begin{equation}\label{Eqn:equiv}
 \psi^{(cp)}:(S(\CC[\partial]\otimes \g), \{\cdot_\lambda \cdot\}^{(0)}) \simeq  (S(\CC[\partial]\otimes \g), \{\cdot_\lambda \cdot\}^{(cp)})
\end{equation}
defined by $\psi^{(cp)}(a) = a+[cp,a]$,  $a \in \g$. Moreover, $\psi^{(cp)}$ induces the well-defined PVA isomorphism between $\WW^{(0)}(\g,f,k) $ and $\WW^{(cp)}(\g,f,k).$ Hence the Poisson algebra structure on $\WW(\g,k,f)$ is independent on $c$ and $p$. 
\end{rem}

\subsection{Equivalence of the two definitions of classical affine $\WW$-algebras}\label{Subsec:3.3}
In this section, we assume $p=0$.
Given $a \in \g$, let us denote 
\begin{equation}\label{Eqn:3.5}
 J_{a}:= a+ \sum_{\alpha \in S} \phi^{v^\alpha}
\phi_{\{u_\alpha,a\}}.
\end{equation}
 Then
\begin{equation*}
S(\CC[\partial]\otimes (\g \oplus \phi_\n \oplus \phi^{\n^*} \oplus
\Phi_{\n/\m})) = S(\CC[\partial]\otimes (J_\g \oplus \phi_\n \oplus
\phi^{\n^*} \oplus \Phi_{\n/\m})).
\end{equation*}
The Poisson $\lambda$-brackets between $J_a$ and elements in $R_k^0=J_\g \oplus \phi_\n \oplus
\phi^{\n^*} \oplus \Phi_{\n/\m}$ are as follows:

\begin{align}
&\{J_a \ _\lambda\ \phi_n \} = \{a+ \sum_{\alpha \in S}
\phi^{v^\alpha} \phi_{[u_\alpha,a]} \ _\lambda\ \phi_n\} = -
\sum_{\alpha \in S} v^\alpha(n) \phi_{[u_\alpha,
a]}=\phi_{[a,n]},\nonumber\\ 
&\{J_a \ _\lambda\ \phi^\theta\}  =\{a+ \sum_{\alpha \in S}
\phi^{v^\alpha}\phi_{[u_\alpha, a]} \ _\lambda\  \phi^\theta\}=
\sum_{\alpha\in S} \phi^{v^\alpha} \{\phi_{[u_\alpha,a]} \ _\lambda\
\phi^\theta \} =\phi^{a \cdot \theta},\nonumber\\
&\{J_a \ _\lambda\ \Phi_{[n]}\}  =0, \nonumber\\
\label{Eqn:3.6}
&\{J_a \ _\lambda\ J_b\} = J_{[a,b]}  +  k \lambda (a,b)  + \sum_{\alpha \in S}
\phi^{v^\alpha} (-\phi_{[\pi_\leq[u_\alpha, a],
b]}+\phi_{[\pi_\leq{[u_\alpha, b],a]}}),
\end{align}
where $\pi_\leq : \g\to \bigoplus_{\leq 0} \g(i)$ is the projection map and  $a, b \in \g$, $n \in \n$, $\theta\in \n^*$. 
Equation (\ref{Eqn:3.6}) shows that if $a$ and $b$ are both in
$\bigoplus_{i\geq 0}\g(i)$ or both in $\bigoplus_{i\leq 0}\g(i)$ then
\begin{equation}\label{Eqn:3.7}
\{J_a \ _\lambda\ J_b\} = J_{[a,b]}+\lambda k (a,b).
\end{equation}
Let  $J_\leq=\{J_a | a \in \bigoplus_{i\leq 0} \g(i)\}$ and let
$r_+= \phi_\n \oplus d_{(0)}\phi_\n,$ $r_-=J_\leq  \oplus
\phi^{\n^*} \oplus \Phi_{\n/\m},$ $R_+=\CC[\partial]\otimes r_+ ,$ and $R_-=\CC[\partial]\otimes r_-.$  Since
\begin{equation}\label{Eqn:3.8}
 \{d \ _\lambda J_a\}  = \sum_{\alpha\in S} \left(\phi^{v^\alpha}
(J_{\pi_\leq [u_\alpha, a]} -  \Phi_{[u_\alpha, a]}- (u_\alpha ,
[a,f])) + k (\partial+\lambda)(u_\alpha, a) \phi^{v^\alpha} \right)
\end{equation}
and $d_{(0)} (\phi_a)= J_a +(a,f)+\Phi_{[a]}$, we have
$$ d_{(0)} (S(R_+)) \subset S(R_+) \text{\ and\ } d_{(0)} (S(R_-)) \subset S(R_-).$$
Hence we can define cohomologies $H(S(R_+), d_+)$ and $H(S(R_-),d_-)$, where $d_+= d_{(0)}|_{S(R_+)}$ and  $d_-= d_{(0)}|_{S(R_-)}$.  By K$\ddot{\text{u}}$nneth lemma, we have $H(C_1, d_{(0)})= H(S(R_+), d_+)\otimes
H(S(R_-),d_-) $. Moreover, it is easy to see that   $\ker(d_+)= \text{im}(d_+)= d_{(0)} \phi_\n$  which implies that $H(S(R_+),d_+)=S(H(R_+,d_+))=\CC$ and 
\begin{equation}\label{Eqn:3.9}
H(C_1,d_1)=H(S(R_-),d_-)=S(H(R_-,d_-)).
\end{equation}

\vskip 5mm

Let us define the degree and the charge on $R_-$ as follows:
\begin{equation}\label{Eqn:3.10}
\begin{aligned}
 \deg (J_a)=& -j+ \frac{1}{2},\ \  \deg (\phi^{v^\alpha})=j_\alpha-\frac{1}{2},\ a \in \g(j),\ u_\alpha \in \g(j_\alpha),\\
 &\deg (\Phi_{[n]})=0, \ \  \deg (\partial)=0,\  n \in \n, \\
 \text{charge}(J_a)=\text{charge}(\Phi_{[n]})&=\text{charge}(\partial)=0, \text{charge} (\phi^\theta)=1, \ a \in \bigoplus_{i \leq 0} \g(i), n \in \n, \theta \in \n^*.
\end{aligned}
\end{equation}
We denote by $\{F_i\}_{i \in \frac{\ZZ}{2}}$ the increasing filtration  with respect to the degree on $S(R_-)$: 
\begin{equation}\label{Eqn:filt_2}
\cdots \subset F_{p-\frac{1}{2}}(S(R_-)) \subset F_{p}(S(R_-)) \subset F_{p+\frac{1}{2}}(S(R_-)) \subset \cdots,
\end{equation}
where $F_p(S(R_-)) = \bigoplus_{i \leq p} S(R_-)(i)$ and $S(R_-)(i) = \{A \in S(R_-)|\deg(A) =i\}$. \\

Then we have the graded superalgebra  $\gr S(R_-):=\bigoplus_{i\in \frac{\ZZ}{2}} F_{i+\frac{1}{2}}(S(R_-))/F_i(S(R_-))$ of $S(R_-)$ and the graded differential  $\gr d_-: \gr S(R_-) \to \gr S(R_-)$ which is induced from the differential $d_-$  on $S(R_-)$. Since $\ \text{im}(\gr d_-|_{S(R_-)})=\CC[\partial]\otimes \phi^{\n^*}$ and $\ \ker (\gr d_-|_{S(R_-)})=\CC[\partial] \otimes\phi^{\n^*} \oplus\CC[\partial] \otimes  \lt J_a | a \in \g_f  \rt ,$ by K$\ddot{\text{u}}$nneth lemma, we obtain 
\begin{equation}\label{Eqn:3.11_2}
H(\gr S(R_-), \gr d_-) =H^0(\gr S(R_-), \gr d_-)= S(\CC[\partial]\otimes \lt J_a| a\in \g_f\rt).
\end{equation}


\begin{prop} \label{Prop:3.7}
Let $\{u_i| i \in I\} \text{ be a basis of } \g_f$ consisting of eigenvectors of $\ad \frac{h}{2}$ and assume that $u_i \in  \g\left(\frac{n_i}{2}\right) $. Then there exists  $A_{u_i}\in F_{-\frac{n_i}{2}}(S(R_-))$ with charge $0$ such that $H(S(R_-),d_-)$ is generated by 
$$ \{J_{u_i}+ A_{u_i}\}_{i \in I}$$
as a differential algebra.
\end{prop}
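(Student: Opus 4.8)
To prove Proposition \ref{Prop:3.7} the plan is to run the standard filtered-complex argument built on the increasing degree filtration $\{F_i(S(R_-))\}$ of (\ref{Eqn:filt_2}), for which the associated graded cohomology is already known: $H(\gr S(R_-),\gr d_-)=S(\CC[\partial]\otimes\langle J_a\mid a\in\g_f\rangle)$ by (\ref{Eqn:3.11_2}), and this is concentrated in charge $0$. First I would record the features of this filtration that drive the argument: every generator $J_a,\phi^\theta,\Phi_{[n]}$ of $S(R_-)$ has nonnegative degree, so $F_p(S(R_-))=0$ for $p<0$ and any induction that lowers the degree must terminate; the differential $d_-$ preserves the filtration and raises the charge by $1$ (visible on generators from (\ref{Eqn:3.8}) and from $d_{(0)}\phi_a=J_a+(a,f)+\Phi_{[a]}$), while the induced map $\gr d_-$ preserves the degree. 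The crucial point is that because $u_i\in\g_f$, i.e.\ $[u_i,f]=0$, the top-degree component of $d_-(J_{u_i})$ — which is by definition $\gr d_-$ applied to the class of $J_{u_i}$, and which (\ref{Eqn:3.8}) exhibits as the sum of the $\phi^{v^\alpha}$-terms weighted by the pairings $(u_\alpha,[u_i,f])$ — vanishes. Hence $\gr d_-(J_{u_i})=0$, the class of $J_{u_i}$ is the generator of $H(\gr S(R_-),\gr d_-)$ attached to $u_i$, and $d_-(J_{u_i})\in F_{-n_i/2}(S(R_-))$.

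\textbf{Step 1: producing $A_{u_i}$.} I would build $A_{u_i}$ by killing obstructions one degree at a time. Put $x_0=J_{u_i}$. Given $x_m$ with $x_m-J_{u_i}\in F_{-n_i/2}(S(R_-))$ of charge $0$ and $d_-(x_m)\in F_q(S(R_-))$ with $q<\deg J_{u_i}$, the top-degree component of $d_-(x_m)$ in $\gr S(R_-)(q)$ is a $\gr d_-$-cocycle of charge $1$; since $H(\gr S(R_-),\gr d_-)$ lives in charge $0$, it is a coboundary $\gr d_-(c_m)$ for some $c_m$ of degree $q$ and charge $0$. Lifting $c_m$ to $C_m\in F_q(S(R_-))$ and setting $x_{m+1}=x_m-C_m$ strictly lowers the filtration level of $d_-(x_{m+1})$. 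Since $F_p(S(R_-))=0$ for $p<0$ this stops after finitely many steps, and the last $x_m$ is a $d_-$-cocycle of the form $J_{u_i}+A_{u_i}$ with $A_{u_i}=-\sum_j C_j\in F_{-n_i/2}(S(R_-))$ of charge $0$.

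\textbf{Step 2: the $J_{u_i}+A_{u_i}$ generate $H(S(R_-),d_-)$.} Let $z\in\ker d_-$; since $d_-$ raises the charge we may assume $z$ is of pure charge $c$. If $c\ge 1$, its top-degree component $\bar z$ is a $\gr d_-$-cocycle of positive charge, hence a coboundary $\gr d_-(w)$; subtracting $d_-(\tilde w)$ for a lift $\tilde w$ of $w$ lowers the top degree of $z$, and iterating (once the top degree drops below $0$ the element is $0$) gives $[z]=0$. If $c=0$, then $\bar z$ is a charge-$0$ $\gr d_-$-cocycle; as $\gr d_-$ raises the charge there are no charge-$0$ coboundaries, so by (\ref{Eqn:3.11_2}) $\bar z=Q(J_{u_i})$ for a unique differential polynomial $Q$, homogeneous of degree $\deg z$. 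Since $\deg A_{u_i}<\deg J_{u_i}$ and $\partial$ preserves the degree, $Q(J_{u_i}+A_{u_i})$ has top-degree component $Q(J_{u_i})=\bar z$, and it lies in $\ker d_-$ because $d_{(0)}$ is an odd derivation commuting with $\partial$ and the $J_{u_i}+A_{u_i}$ are cocycles. Thus $z-Q(J_{u_i}+A_{u_i})$ is a cocycle of strictly smaller top degree, and induction on the top degree yields $[z]=[P(J_{u_i}+A_{u_i})]$ for a differential polynomial $P$. Hence $H(S(R_-),d_-)$ is generated as a differential algebra by $\{J_{u_i}+A_{u_i}\}_{i\in I}$ (and, comparing the associated graded of the induced surjection with (\ref{Eqn:3.11_2}) and (\ref{Eqn:3.9}), these generators are in fact free).

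The step I expect to be the real work is the degree-and-charge bookkeeping underlying the two inductions: checking that $d_-$ is filtration-preserving and that the induced $\gr d_-$ is exactly the differential whose cohomology is computed in (\ref{Eqn:3.11_2}), so that the vanishing $\gr d_-(J_{u_i})=0$ comes down precisely to $[u_i,f]=0$; and verifying that at each stage in Step 1 the obstruction can be removed by a correction that is simultaneously of charge $0$ and in the correct filtration piece — this is exactly where the charge grading and the charge-$0$ concentration of $H(\gr S(R_-),\gr d_-)$ are used. Termination of both procedures is then free, since the degree on $S(R_-)$ is bounded below by $0$.
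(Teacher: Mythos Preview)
Your argument is correct. It differs from the paper's in one interesting way: the paper introduces the energy--momentum field $L$ of (\ref{Eqn:3.3}), uses the conformal weight decomposition $S(R_-)=\bigoplus_i L_i$ into finite-dimensional $d_-$-stable pieces to declare the complex locally finite, and then invokes the general principle $H(\gr S(R_-),\gr d_-)\simeq \gr H(S(R_-),d_-)$ for locally finite filtered complexes, reading off the generators from (\ref{Eqn:3.11_2}). You bypass the conformal weight grading entirely: having checked that every generator of $R_-$ has nonnegative degree (so $F_p=0$ for $p<0$), you run the lifting of $J_{u_i}$ to a cocycle and the generation-of-cohomology argument as finite inductions on the top degree, using only that $H(\gr S(R_-),\gr d_-)$ is concentrated in charge $0$. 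This is the same spectral-sequence collapse, but done by hand; what you gain is that the proof is self-contained and constructive (the $A_{u_i}$ are built explicitly), and you never need the field $L$ or any auxiliary finiteness; what the paper gains is brevity, since once local finiteness is in place the conclusion is a one-line appeal to a standard result. Your closing remark that the bookkeeping hinges on $\gr d_-(J_{u_i})=0\Leftrightarrow [u_i,f]=0$ is exactly right: in (\ref{Eqn:3.8}) the only term of degree $\deg J_{u_i}$ is $-\sum_\alpha \phi^{v^\alpha}(u_\alpha,[u_i,f])$.
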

\begin{proof}
Let $L$ be the energy-momentum field introduced in  (\ref{Eqn:3.3}). The complex $S(R_-)$ is locally finite,
since, for each $i \in \frac{\ZZ}{2}$, the subspace 
$$L_i:=\{a\in S(R_-) | \{L_\lambda a\}=(\partial+i\lambda)a+o(\lambda) \}$$ 
 is finite dimensional and $d_-|_{L_i}\subset L_i$. Hence  we have 
$$ H(\gr S(R_-), \gr d_-) \simeq \gr H(S(R_-), d_-).$$
Recall that  $H(\gr S(R_-), \gr d_-)=  S(\CC[\partial]\otimes \lt J_a| a\in \g_f\rt)$ and $\deg (J_{u_i})= -\frac{n_i}{2}+\frac{1}{2}$. Thus the cohomology $H(S(R_-), d_-)$ is generated by the set of the form  $\{J_{u_i}+ A_{u_i}\}_{i\in I}$, where $A_{u_i}\in F_{-\frac{n_i}{2}}(S(R_-))$. 
Moreover, by equation (\ref{Eqn:3.11_2}),  any element in $H(S(R_-),d_-)$ has a representative of charge $0$. Hence we can choose $A_{u_i}$ with charge $0$. 

\end{proof}

The next theorem is our main purpose of this section.

\begin{thm} \label{Thm:3.11}
Let $i: S(R_-) \to S(\CC[\partial]\otimes \g)/\lt m-\chi(m)| m \in \m \rt$ be an associative 
superalgebra homomorphism defined by
\begin{equation}\label{Eqn:3.11}
\partial^k J_a  \mapsto \partial^k a, \ \partial^k \Phi_b  \mapsto - \partial^k b, \ \partial^k \phi^\theta  \mapsto 0
\end{equation}
where  $k\in \ZZ_{\geq 0}$, $a \in \bigoplus_{i\leq 0}\g(i)$,
$b \in \g\left(\frac{1}{2}\right)$ and $\theta \in \n^*$. Then
\begin{enumerate}
\item the homomorpshism $i$ induces an associative algebra isomorphism 
\begin{equation} \label{Eqn:3.12}
j:H(S(R_-), d_-) \to (S(\CC[\partial]\otimes \g)/\lt m-\chi(m): m \in \m \rt)^{\ad_\lambda \n}.
\end{equation}
\item the isomorphism (\ref{Eqn:3.12}) is a Poisson vertex algebra isomorphism.
\end{enumerate}
Hence $\WW_1(\g,f,k)$ and $\WW_2(\g,f,k)$ are isomorphic as PVAs and we denote the classical affine $\WW$-algebra associated to $\g$, $f$ and $k$ by $\WW(\g,f,k).$
\end{thm}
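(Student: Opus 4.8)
The plan is to prove Theorem \ref{Thm:3.11} in three stages: first check that the map $i$ is well-defined and a chain map in the appropriate sense, then establish that it induces an isomorphism of associative algebras on cohomology, and finally upgrade this to a PVA isomorphism. For the first stage, I would verify that the assignment \eqref{Eqn:3.11} is compatible with the defining relations. Since the target is $S(\CC[\partial]\otimes\g)/\langle m-\chi(m)\mid m\in\m\rangle$, I must check that the images of $J_a$, $\Phi_b$, $\phi^\theta$ land correctly, and crucially that $i$ descends to cohomology: an element $A\in\ker d_-$ should map to an element whose class in the $\ad_\lambda\n$-invariants is well-defined, and $d_-$-exact elements should map to elements that vanish modulo $\langle m-\chi(m)\rangle$. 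The key computation here uses \eqref{Eqn:3.8} and the formula $d_{(0)}(\phi_a)=J_a+(a,f)+\Phi_{[a]}$: modulo the ideal, $d_{(0)}(\phi_a)\mapsto \pi_{\leq}(\ldots)$ terms, and one checks these generate exactly the correct relations so that $i\circ d_- \equiv 0$ on the image and so that $\ad_\lambda\n$-invariance on the target corresponds to $d_-$-closedness on the source.

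For the second stage, the cleanest route is to use the filtration \eqref{Eqn:filt_2} by degree and the associated graded picture. By Proposition \ref{Prop:3.7}, $H(S(R_-),d_-)$ is a polynomial algebra on the generators $J_{u_i}+A_{u_i}$, $i\in I$, where $\{u_i\}$ is a basis of $\g_f$. On the other side, $\WW_2(\g,f,k)$ is known (from \cite{BDK}, or by a parallel associated-graded argument) to be a differential polynomial algebra on generators indexed by a basis of $\g_f$ as well. I would show $j$ sends $J_{u_i}+A_{u_i}$ to an element congruent to $u_i$ modulo lower-degree terms and modulo the ideal, and that these images are precisely the free generators of $\WW_2$; since $j$ is a homomorphism of differential algebras (it commutes with $\partial$ by construction, as $i$ does) that takes a free generating set to a free generating set, it is an isomorphism. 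An equivalent and perhaps more robust formulation: pass to $\gr$ on both sides, identify $\gr j$ with the obvious map $S(\CC[\partial]\otimes\langle J_a\mid a\in\g_f\rangle)\to S(\CC[\partial]\otimes\g_f)$ sending $J_a\mapsto a$, which is manifestly an isomorphism by \eqref{Eqn:3.11_2}, and conclude that $j$ itself is an isomorphism because the filtrations are exhaustive and the complex is locally finite (as established in the proof of Proposition \ref{Prop:3.7}).

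For the third stage, I would check that $j$ intertwines the $\lambda$-brackets. It suffices to verify $j(\{a_\lambda b\}) = \{j(a)_\lambda j(b)\}$ on generators, and since both sides are PVAs and $j$ is already an associative-commutative-differential isomorphism, the Leibniz rules and sesquilinearity then propagate the identity to all elements. Concretely, I would compare \eqref{Eqn:3.6}--\eqref{Eqn:3.7} (the brackets among the $J_a$ in $S(R_-)$) with the brackets in $S(\CC[\partial]\otimes\g)/\langle m-\chi(m)\rangle$ induced from $Cur_k^{0}(\g)$: under $i$, a generator $J_a$ with $a\in\bigoplus_{i\leq0}\g(i)$ maps to $a$, and by \eqref{Eqn:3.7}, $\{J_a{}_\lambda J_b\}=J_{[a,b]}+\lambda k(a,b)$ maps to $[a,b]+\lambda k(a,b)=\{a_\lambda b\}$ — so the bracket is preserved on the level of representatives when restricted to the relevant range, and the correction terms $A_{u_i}$ (having charge $0$ and lying in lower filtration) do not obstruct this because the discrepancies are controlled by $d_-$-exact terms that vanish on cohomology. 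The main obstacle I anticipate is exactly this last point: showing that the bracket identity survives the passage to cohomology when the generators are the corrected elements $J_{u_i}+A_{u_i}$ rather than the naive $J_{u_i}$, i.e.\ tracking that all the extra terms produced by the fermionic contributions and by $\pi_{\leq}$ in \eqref{Eqn:3.6} and \eqref{Eqn:3.8} are either killed by $i$ or are coboundaries. I expect this to be handled by a degree/charge bookkeeping argument: the obstruction terms have strictly lower degree and can be absorbed, and one invokes the local finiteness and the structure of $H(\gr S(R_-),\gr d_-)$ from \eqref{Eqn:3.11_2} to conclude they must vanish in cohomology.
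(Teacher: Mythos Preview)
Your outline is reasonable in spirit but glosses over the central computation and, in doing so, leaves a genuine gap. The paper's proof pivots on the equivalence
\[
d_-\Bigl(\sum_i \partial^{R_i}J_{A_i}\,\partial^{S_i}\Phi_{N_i}\Bigr)=0
\quad\Longleftrightarrow\quad
\{\n\,{}_\lambda\,\sum_i \partial^{R_i}A_i\,\partial^{S_i}N_i\}=0,
\]
which is established by introducing the elements $K_g:=J_g-\Phi_{[g]}-(g,f)$ for $g\in\bigoplus_{i\leq 1}\g(i)$, computing $d_-(K_g)=\sum_\alpha\phi^{v^\alpha}K_{[u_\alpha,g]}+k\sum_\alpha\partial\phi^{v^\alpha}(u_\alpha,g)$, and then matching the coefficient of $\partial^p\phi^{v^\alpha}$ in $d_-(\partial^T K_G)$ with the coefficient of $\lambda^p$ in $\{u_\alpha\,{}_\lambda\,\partial^T G\}$ term by term. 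You assert this correspondence in stage~1 (``$\ad_\lambda\n$-invariance on the target corresponds to $d_-$-closedness on the source'') but give no mechanism to prove it; your filtration argument in stage~2 cannot substitute for it, because even to know that $j$ lands in the $\ad_\lambda\n$-invariants (rather than merely in the quotient $S(\CC[\partial]\otimes\g)/J$) you need at least the forward implication. Invoking that $\WW_2$ is a polynomial algebra on $\g_f$-indexed generators from \cite{BDK} does not help here: that gives you the size of the target, not that $\overline{i}(\ker d_-)$ is contained in it.

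For part~(2), the obstacle you flag at the end is real for your strategy but is sidestepped entirely by the paper. Rather than checking brackets on the corrected generators $J_{u_i}+A_{u_i}$ and worrying about propagation through Leibniz rules, the paper computes $\lambda$-brackets directly on \emph{arbitrary} charge-zero representatives $\sum_i\partial^{R_i}J_{A_i}\partial^{S_i}\Phi_{N_i}$. The crucial simplification is that $\{J_a\,{}_\lambda\,\Phi_{[n]}\}=0$, so the bracket splits into a $J$--$J$ part and a $\Phi$--$\Phi$ part; on each part one has exact formulas ($j(\{J_a\,{}_\lambda\,J_b\})=\{a\,{}_\lambda\,b\}$ for $a,b\in\bigoplus_{i\leq0}\g(i)$, and $j(\{\Phi_{n_1}\,{}_\lambda\,\Phi_{n_2}\})=(f,[n_1,n_2])=-\{n_1\,{}_\lambda\,n_2\}$ modulo~$J$), and these combine to give \eqref{Eqn:3.21} directly. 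No coboundary-chasing or degree bookkeeping is needed.
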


\begin{proof}
In this proof, we use the following notations.
\begin{notation}\label{Notation:3.9}
We denote
$$A_i:= a_{i,1} \cdots a_{i,k_i}, N_i:= n_{i,1} \cdots n_{
i,l_i}, G:= g_1 g_2 \cdots g_m,$$
$$\partial^{R_i} A_i:=\partial^{r_{i,1}}a_{i,1}
 \cdots \partial^{r_{i,k_i}}a_{i,k_i}, \partial^{S_i}N_i:=\partial^{s_{i,1}}n_{i,1}\cdots
\partial^{s_{i,l_i}}n_{i,l_i}, \partial^{T}G:= \partial^{t_1} g_1 \cdots \partial^{t_m} g_m$$
for $a_{i,j}\in \bigoplus_{i\leq 0}\g(i), \  n_{i,j}\in \g\left(\frac{1}{2}\right),\ g_i \in \bigoplus_{\alpha \leq 1} \g_\alpha$ and $R_i=(r_{i,1}, \cdots, r_{i,k_i}), S_i=(s_{i,1}, \cdots, s_{i,l_i}), T= (t_1, \cdots, t_m).$ Also, let
$$J_{A_i}:= J_{a_{i,1}} \cdots J_{a_{i,k_i}}, \ \partial^{R_i} J_{A_i}:=\partial^{r_{i,1}}J_{a_{i,1}}\cdots \partial^{r_{i,k_i}}J_{a_{i,k_i}},$$
$$  \Phi_{N_i}:=\Phi_{[n_{i,1}]}\cdots \Phi_{[n_{i,l_i}]}, \ \partial^{S_i}
\Phi_{N_i}:=\partial^{s_{i,1}}\Phi_{[n_{i,1}]}\cdots
\partial^{s_{i,l_i}}\Phi_{[n_{i,l_i}]},$$
$$ K_{g_i}:= J_{g_i}-\Phi_{[g_i]}-(g_i,f), \ K_G:= K_{g_1} \cdots K_{g_m}, \ \partial^{T}K_G:= \partial^{t_1} K_{g_1} \cdots \partial^{t_m} K_{g_m}.$$
\end{notation}

\vskip 5mm 

 Since $i(\phi^{\n^*})=0$, we have $i(d_-(S(R_-))=0$. Hence the homomorphism $i$ induces the well-defined associative algebra homomorphism 
 $$\overline{i}:H(S(R_-), d_-) \to S(\CC[\partial]\otimes \g)/\lt m-\chi(m)| m \in
 \m\rt.$$ 
 By Proposition \ref{Prop:3.7}, any element in $H(S_-, d_-)$ has a
unique representative of the form $\sum_{i\in I}\partial^{R_i}J_{A_i} \partial^{S_i}\Phi_{N_i}$.  Hence $\overline{i}$ is injective.

Let us show that the image under  $\overline{i}$ is
$(S(\CC[\partial]\otimes \g)/\lt m-\chi(m)| m \in \m \rt)^{\ad_\lambda \n}$, i.e.
\begin{equation}\label{Eqn:3.13}
d_-\left(\sum_{i\in I} \partial^{R_i} J_{A_i} \partial^{S_i}
\Phi_{N_i}\right)=0 \text{ if and only if }\{\n _\lambda \sum_{i\in
I}\partial^{R_i} A_i
\partial^{S_i} N_i \}=0.
\end{equation}
We notice that $K_g=J_g$ if $g \in \bigoplus_{i\leq 0}\g(i) $ and $K_g=-\Phi_{[g]}$ if $g \in \g\left(\frac{1}{2}\right)$.
Hence any element in $H(C_1, d_1)$ can be written as a polynomial in
$\partial^t K_g$ where $g \in \bigoplus_{i\leq 1} \g(i)$ and $t\in
\ZZ_{\geq 0}$.  By (\ref{Eqn:3.8}), we obtain 
\begin{equation}\label{Eqn:3.14}
d_-(K_g)= \sum_{\alpha\in S} \phi^{v^\alpha} K_{\{u_\alpha, g\}}
+k\sum_{\alpha\in S}
\partial \phi^{v^\alpha} (u_\alpha, a).
\end{equation}
Since $d$ is a derivation, by equation (\ref{Eqn:3.14}),  we have
\begin{equation}\label{Eqn:3.15}
\begin{aligned}
 d_-(\partial^T K_G) = &\sum_{j=1}^m
\partial^{t_j}\left(\sum_{\alpha \in S} \phi^{v^\alpha}K_{[u_\alpha,
g_j]} + k \sum_{\alpha \in S}\partial
\phi^{v^\alpha}(u_\alpha,g_j)\right) \\
&
\ \ \ \ \ \ \ \ \cdot \partial^{t_1}K_{g_1} \cdots \partial^{t_{j-1}}K_{g_{j-1}}\partial^{t_{j+1}}K_{g_{j+1}}\cdots\partial^{t_m}K_{g_m}. \\
\end{aligned}
\end{equation}
Here,
\begin{align}
&\partial^{t_j}\left(\sum_{\alpha \in S} \phi^{v^\alpha}K_{[u_\alpha, g_j]}
+ k  \sum_{\alpha \in S} \partial \phi^{v^\alpha} (u_\alpha,
g)\right) \nonumber \\
&  =\sum_{\alpha \in S} \sum_{p=0}^{t_j} { t_j \choose p}
\partial^p \phi^{v^\alpha} \partial^{t_j-p} K_{[u_\alpha, g_j]} + k
\sum_{\alpha \in S} \partial^{t_j+1} \phi^{v^\alpha} (u_\alpha,
g_j).\label{Eqn:3.16}
\end{align}
On the other hand,  $\partial^T G=\overline{i}\left(\partial^T K_G \right)$ and we have 
\begin{equation}\label{Eqn:3.17} 
\{u_\alpha \ _\lambda\ \partial^T G\} = \sum_{j=1}^t
(\partial+\lambda)^{t_j} \left([u_\alpha, g_j]+ k\lambda(u_\alpha,
g_j)\right)\partial^{t_1}g_1 \cdots
\partial^{t_{j-1}}g_{j-1}\partial^{t_{j+1}}g_{j+1}\cdots\partial^{t_m}g_m.
\end{equation}
Here, 
\begin{equation}\label{Eqn:3.18}
(\partial+\lambda)^{t_j}([u_\alpha, g_j]+k \lambda(u_\alpha, t_j))=\sum_{p=0}^{t_j} {t_j \choose p} \lambda^p
\partial^{t_j-p}[u_\alpha, g_j]+k\lambda^{t_j+1}(u_\alpha, g_j).
\end{equation}
Let us assume that $X\in H(R_-, d_-)$ and  $Z_{i,\alpha} \in S[\partial^j K_a| j\in \ZZ_{\geq 0}, a\in
\bigoplus_{i\leq 1} \g(i)]$  satisfy the equation: $
d_-(X)=\sum_{\alpha \in S, i\in \ZZ_{\geq0}} Z_{i,\alpha} \partial^i
\phi^{v^\alpha}.
$
 Then, by (\ref{Eqn:3.15})-(\ref{Eqn:3.18}), the image $z_{i,\alpha}$ of $Z_{i,\alpha}$ under $\overline{i}$ satisfies the equation
$
 \{u_\alpha \ _\lambda\ \overline{i}(X)\}= \sum_{i\in \ZZ_{\geq0}}
z_{i,\alpha} \lambda^i.
$
Hence we conclude that
[$ d_-(X)=0 \Leftrightarrow Z_{i,\alpha}=0 \text{  for any  }i, \alpha \Leftrightarrow
 z_{i, \alpha}=0 \text{  for any  } i, \alpha \Leftrightarrow \{\n \ _\lambda\
\overline{i}(X)\}=0 $]. 

To prove the second part of the theorem,  let us pick two elements 
$$\sum_{i\in I} \partial^{R_i} J_{A_i} \partial^{S_i}\Phi_{N_i}, \sum_{i'\in
I'}\partial^{R_{i'}} J_{A_{i'}} \partial^{S_{i'}}\Phi_{N_{i'}}\in H(S(R_-), d_-).$$
Then the images of them under $j$ are
 $\sum_{i\in I}(-1)^{l_i}
\partial^{R_i} A_i \partial^{S_i}N_i,$ and $\sum_{i'\in I'}(-1)^{l_{i'}}\partial^{R_{i'}}
A_{i'} \partial^{S_{i'}}N_{i'}$ in $(S(\CC[\partial]\otimes \g)/\lt
m-\chi(m)\rt)^{\ad_\lambda \n}$. We want to show that
\begin{equation}\label{Eqn:3.21}
\begin{aligned}
j&\left(\left\{ \sum_{i\in I}
\partial^{R_i} J_{A_i} \partial^{S_i}\Phi_{N_i} \ _\lambda\ \sum_{i'\in
I'}\partial^{R_{i'}} J_{A_{i'}}  \partial^{S_{i'}}\Phi_{N_{i'}}
\right\}\right)\\
 &= \left\{ \sum_{i\in I} (-1)^{l_i}
\partial^{R_i} A_i \partial^{S_i}N_i \  _\lambda\ \sum_{i'\in I'}(-1)^{l_{i'}}\partial^{R_{i'}}
A_{i'} \partial^{S_{i'}}N_{i'}\right\}.
\end{aligned}
\end{equation}
Since $\{J_a \ _\lambda \Phi_{[n]}\}=0$, we have
\begin{equation}\label{Eqn:3.22}
\begin{aligned}
& \left\{ \sum_{i\in I} \partial^{R_i} J_{A_i}
\partial^{S_i}\Phi_{N_i} \ _\lambda\ \sum_{i'\in I'}\partial^{R_{i'}}
J_{A_{i'}} \partial^{S_{i'}}\Phi_{N_{i'}}
\right\}  \\
& \ \ \ \ \ =  \sum_{i\in I, i'\in I} \partial^{S_{i'}}\Phi_{N_{i'}} \{
\partial^{R_i}J_{A_i} \
_{\lambda+\partial}\partial^{R_{i'}}J_{A_{i'}}\}_{\to}
\partial^{S_i} \Phi_{N_i}\\
&\ \ \ \ \ +  \sum_{i\in I, i'\in I'} \partial^{R_{i'}}J_{A_{i'}} \{
\partial^{S_i} \Phi_{N_i} \ _{\lambda+\partial}\ \partial^{S_{i'}} \Phi_{N_{i'}}\}_{\to}
\partial^{R_i} J_{A_i}.
\end{aligned}
\end{equation}
Using the following two equalities:
\begin{align*}
 & j(\{J_a \ _\lambda J_b\})=i(J_{[a,b]}
+\lambda k(a,b))= \{a \ _\lambda\ b\} \text{ for } a,b \in
\bigoplus_{i\leq 0} \g(i);\\
&j\{\Phi_{n_1} \ _\lambda\ \Phi_{n_2}\}=(f,
[n_1,n_2]) =-\{n_{1 \ \lambda\ }n_2\} \text{ in } S(\CC[\partial]\otimes \g)/\lt m
-\chi(m)\rt, \text{ for } n_1,n_2 \in \g\left(\frac{1}{2}\right);
\end{align*}
 we  get the equation
\begin{equation}\label{Eqn:3.23}
\begin{aligned}
 j \left(\left\{ \sum_{i\in I} \right.\right.&\left.\left.
\partial^{R_i} J_{A_i} \partial^{S_i}\Phi_{N_i} \ _\lambda\ \sum_{i'\in
I'}\partial^{R_{i'}} J_{A_{i'}}
\partial^{S_{i'}}\Phi_{N_{i'}} \right\}\right) \\
  =(-1)^{l_i+l_{i'}}&  \left( \sum_{i\in I, i'\in I} \right.\partial^{S_{i'}}N_{i'} \{
\partial^{R_i}A_i \ _{\lambda+\partial}\ \partial^{R_{i'}}A_{i'}\}_{\to}
\partial^{S_i} N_i \\
&- \left.\sum_{i\in I, i'\in I'} \partial^{R_{i'}}A_{i'} \{
\partial^{S_i}N_i \ _{\lambda+\partial}\ \partial^{S_{i'}} N_{i'}\}_{\to}
\partial^{R_i} A_i\right).
\end{aligned}
\end{equation}
On the other hand, the RHS of (\ref{Eqn:3.21}) can be rewritten as follows.
\begin{equation}\label{Eqn:3.24}
\begin{aligned}
& \{\sum_{i\in I}\partial^{R_i}A_i \partial^{S_i}N_i \ _\lambda
\sum_{i'\in I'}
\partial^{R_{i'}} A_{i'} \partial^{S_{i'}}N_{i'}\}
=  \sum_{i'\in I'} \partial^{S_{i'}}N_{i'} \{\sum_{i\in I}
\partial^{R_i} A_i \partial^{S_i} N_i \ _\lambda \partial^{R_{i'}}
A_{i'}\} \\
= & \sum_{i\in I, i'\in I'} \partial^{S_{i'}} N_{i'}
\{\partial^{R_i}A_i \ _{\lambda+\partial}
\partial^{R_{i'}}A_{i'}\} _\to \partial^{S_i}N_i - \sum_{i\in I, i'\in I'} \partial^{R_{i'}}A_{i'}
\{\partial^{S_i}N_i \ _{\lambda+\partial} \partial^{S_{i'}} N_{i'}
\} _\to \partial^{R_i} A_i.
\end{aligned}
\end{equation}
Here, we used the fact that $\{\n \ _\lambda \sum_{i\in I}\partial^{R_i}A_i
\partial^{S_i} N_i\}=\{\n \ _\lambda \sum_{i'\in I'}
\partial^{R_{i'}} A_{i'} \partial^{S_{i'}}N_{i'}\}=0.$
Hence
 (\ref{Eqn:3.21}) follows from (\ref{Eqn:3.23}) and (\ref{Eqn:3.24}) and hence the map $j$ defined in (\ref{Eqn:3.12}) is a
Poisson vertex algebra isomorphism between the two definitions of
classical affine W-algebras.
\end{proof}

\section{ Two equivalent definitions of classical affine fractional $\WW$-algebras } \label{Sec:7}

We introduce notations (\ref{Eqn:7.3_3})-(\ref{Eqn:7.14_2}) which are used in the rest of the paper. \\

Let us denote $\hat{\g}:= \g[[z]] z \oplus \g[-z]$. The Lie bracket and the symmetric bilinear form on $ \hat{\g}$ are defined by
\begin{equation}\label{Eqn:7.3_3}
[g z^i, h z^j]:= [g,h] z^{i+j},  \ \ 
(g z^i, h z^j) := (g,h)\delta_{i+j,0}, \ \ g, h \in \g.
\end{equation}
Also, we consider two different gradations $\gr_1$ and $\gr_2$ on $\hat{\g}$ given by eigenvalues of the operations $\partial_1= z \partial_z$ and $\partial_2= (d+1) z \partial_z+\ad \frac{h}{2}$, where $d$ is the largest eigenvalue of $\ad \frac{h}{2}:\g\to \g$ . 
Then the two gradations have the following properties:
\begin{equation}\label{Eqn:7.2_2}
\gr_1 (z)=1, \ \gr_1(\g)=0,\  \gr_2(z)=d+1,  \ \gr_2(g)= i \text{ for } g\in \g_i.
\end{equation}
If $n_1 \in \ZZ$ and $n_2 \in \frac{\ZZ}{2}$, then we denote subspaces of $\hat{\g}$ as follows:
\begin{equation}\label{Eqn:7.3_2}
\begin{aligned}
&\hat{\g}^{n_1}:= \g z^{n_1}, \ \ \hat{\g}_{n_2}:=\{g\in \hat{\g} | \gr_2(g)=n_2 g \}, \ \ \hat{\g}_{n_2}^{n_1} := \hat{\g}^{n_1} \cap \hat{\g}_{n_2},\\
&\hat{\g}_{<i} := \bigoplus_{n_2 <i} \hat{\g}_{n_2}, \  \hat{\g}_{>i} := \bigoplus_{n_2 >i}  \hat{\g}_{n_2},\ \ \hat{\g}^{\leq i}:=\bigoplus_{n_1 \leq i} \hat{\g}^{n_1},\ \  \hat{\g}_{n_2},\ \ \hat{\g}^{\geq i}:=\bigoplus_{n_1 \geq i} \hat{\g}^{n_1}.
\end{aligned}
\end{equation}
In particular, we have $p z^j \in \hat{\g}^{j}_{(d+1)j+d}$ and $ f  z^j \in \hat{\g}^j_{(d+1)j-1}$.

For a given integer $m \geq 0$, there are two quotient spaces and two subspaces of $\hat{\g}$ which play important roles in the following sections. The four vector spaces are denoted by:
\begin{equation}\label{Eqn:7.4_2}
\begin{aligned}
 \hat{\g}^+_{[d,m]}:= \hat{\g}^{\geq 0}/\hat{\g}_{>(d+1)m+1}, &\ \ \hat{\g}^-_{[d,m]} :=  \hat{\g}^{\leq 0}/\hat{\g}_{<-(d+1)m-1},\\
 \hat{\g}^{\geq 0}_{<(d+1)m+1}:= \hat{\g}^{\geq 0}\cap \hat{\g}_{<(d+1)m+1}, &\ \ \hat{\g}^{\leq 0}_{>-(d+1)m-1} :=  \hat{\g}^{\leq 0}\cap \hat{\g}_{>-(d+1)m-1}.
\end{aligned}
\end{equation}
We notice that two subspaces $ \hat{\g}^{\geq 0}_{<(d+1)m+1}$ and $ \hat{\g}^{\leq 0}_{>-(d+1)m-1} $ are dual with respect to the bilinear form $(\cdot, \cdot)$ on $\hat{\g}$. Also, there are vector space isomorphisms $
 \iota: \hat{\g}^{\geq 0}_{<(d+1)m+1} \oplus \hat{\g}_{(d+1)m+1}  \simeq \hat{\g}^+_{[d,m]}  \text{ and }  \iota^-:  \hat{\g}^{\leq 0}_{<-(d+1)m-1}\oplus \hat{\g}_{-(d+1)m-1}  \simeq \hat{\g}^-_{[d,m]} $ such that $\iota(a)=\bar{a}$ and $\iota^-(b)=\bar{b}$.  


Let us denote 
\begin{equation} \label{Eqn:7.3}
 \Lambda_m = -fz^{-m}-pz^{-m-1}\in \hat{\g}_{-(d+1)m-1}
\end{equation}
and let $ \chi $ be an element in $ (\hat{\g}_{(d+1)m+1})^*$ defined by 
\begin{equation}
 \chi(a) = (\Lambda_m, a),\ a \in \hat{\g}_{(d+1)m+1} . 
\end{equation}

In this section, we mainly deal with a differential algebra
\begin{equation}\label{Eqn:7.4}
 \mathcal{V}_{[d,m]}:=S(\CC[\partial] \otimes  \hat{\g}^+_{[d,m]})/I,
\end{equation}
where
$ I \text{ is the differential algebra ideal of   } S(\CC[\partial]\otimes \hat{\g}^+_{[d,m]})  \text{ generated by } 
\{ \iota(a)- \chi(a) | a \in \hat{\g}_{(d+1)m+1} \} .$ Then $S(\CC[\partial]\otimes  \hat{\g}^{\geq 0}_{<(d+1)m+1})$ and $ \mathcal{V}_{[d,m]}$ are isomorphic as differential algebras. Indeed, 
the differential algebra isomorphism
\begin{equation} \label{Eqn:7.2}
\widetilde{\iota}:S(\CC[\partial]\otimes  \hat{\g}^{\geq 0}_{<(d+1)m+1}) \simeq  \mathcal{V}_{[d,m]}
\end{equation} 
induced by $\iota$. Also, the Lie algebra $\hat{\g}\otimes \mathcal{V}_{[d,m]}$ is endowed with the bracket and the bilinear form   defined by 
\begin{equation} 
[a \otimes f, b\otimes g]= [a,b]\otimes fg \text{ and } (a\otimes f, b \otimes g) = (a,b)fg.
\end{equation}

Let $\{\ u_i\ |\ i\in \bar{S}\ \}$ and $\{\ \widetilde{u}_i\ |\ i \in \bar{S}\ \}$ be dual bases of $\g$ with respect to the bilinear form $(\cdot, \cdot)$ and assume that the basis elements $u_i$ and $\widetilde{u}_i$ are eigenvectors of $\ad \frac{h}{2}$.
We denote $
u_i^j := u_i z^j \in \hat{\g}$ and $\widetilde{u}_i^j := \widetilde{u}_i z^j \in \hat{\g}
$
for any $j \in \ZZ$. Then $\{\ u_i^j\ |\ i \in \bar{S}, \ j \in \ZZ\ \}$ and $\{\ \widetilde{u}_i^{-j}\ |\ i \in \bar{S}, \ j \in \ZZ\ \}$ are dual bases of $\hat{g}$. Also, the following two sets
\begin{equation}\label{Eqn:7.7}
\begin{aligned}
&\mathcal{B}:=\left\{\ u_i^j\  | \  i \in \bar{S}, \  j\in \ZZ\ \right\}\cap  \hat{\g}^{\geq 0}_{<(d+1)m+1} \text{ and } \mathcal{B}^-:=\left\{\ \widetilde{u}_i^{-j}\ | \ i \in \bar{S}, \  j\in \ZZ \ \right\}\cap \hat{\g}^{\leq 0}_{>-(d+1)m-1}
\end{aligned}
\end{equation}
are bases of $  \hat{\g}^{\geq 0}_{<(d+1)m+1} $ and $ \hat{\g}^{\leq 0}_{>-(d+1)m-1}$ and these two bases are dual with respect to $(\cdot, \cdot)$. 

The basis $\bar{\mathcal{B}}$ of $ \hat{\g}^{\geq 0}_{<(d+1)m+1}\oplus \hat{\g}$ is a extended basis of  $\mathcal{B}$ such that
\begin{equation}\label{Eqn:7.13_2}
\bar{\mathcal{B}}:=\mathcal{B}\cup \mathcal{B}_m , \text{ where } \mathcal{B}_m  \text{ is a basis of $\hat{\g}_{(d+1)m+1}$. }
\end{equation}
The index sets $\mathcal{I}$, $\mathcal{I}^-$, and $\bar{\mathcal{I}}$ are defined by
\begin{equation}\label{Eqn:7.14_2}
 \mathcal{I}=\{\ (i,j)\ |\ u_i^j \in \mathcal{B}\ \}, \  \   \mathcal{I}^-=\{\ (i,j)\ |\ u_i^j \in \mathcal{B}^-\ \}, \ \   \bar{\mathcal{I}}=\{\ (i,j)\ |\ u_i^j \in \bar{\mathcal{B}}\ \}.
\end{equation}

Using the isomorphism $\widetilde{\iota}$, we often write an element in $\mathcal{V}_{[d,m]}$ as a polynomial in $\CC[\ \partial^n\mathcal{B}\ |\ n \geq 0\ ],$ by an abuse of notation.

\subsection{ First definition of classical affine fractional $\WW$-algebras }\label{Subsec:7.1}
In this section, we review the construction of classical affine fractional $\WW$-algebras introduced in  \cite{BDHM,DHM}.
Let $\mathcal{F}$ be the algebra of complex valued smooth functions on $S^1$ and let $\{\ \partial_x^n u_i^j(x) \in \mathcal{F}\ |\ n \in \ZZ_{\geq 0} , \ i\in \bar{S}, j\in \ZZ\ \}$ be an algebraically independent set. We have a differential algebra homomorphism
$$ \mu: \CC_{\text{diff}}[\ \mathcal{B}\ ] \to \mathcal{F},$$
such that $\mu:u_i^{j (n)} \mapsto \frac{\partial_x^n u_i^j (x)}{\partial x^n}$ if $u_i^j \in \mathcal{B}$ and $\mu: u_i^{j(n)} \mapsto \delta_{n,0}\cdot (\Lambda_m, u_i^j)$ if $u_i^j \in \mathcal{B}_m.$ Then $\mathcal{V}_{[d,m]}$ is isomorphic to the image $\mu(\mathcal{V}_{[d,m]}).$ 
Hence functions in $\mu(\mathcal{V}_{[d,m]})$ can be identified with elements in $\mathcal{V}_{[d,m]}.$ 

When we need to clarify that a function $f\in \mathcal{V}_{[d,m]}$ depends on a variable $x\in S^1$,  we denote $f$ by $f(x)$. Especially, if there are more than one independent variables in a formula, we indicate the variable of each functional.  \\

\begin{defn}\label{Def:7.1}
For given
\begin{equation}\label{Eqn:7.9}
q_m \in \hat{\g}^{\leq 0}_{>-(d+1)m-1} \otimes \mathcal{V}_{[d,m]} \text{ and } k \in \CC, 
\end{equation} 
an operator of the form
\begin{equation}\label{Eqn:7.8}
L_m(x) =k\partial + q_m(x) +\Lambda_m \otimes 1 \in \CC \partial \ltimes \hat{\g} \otimes  \mathcal{V}_{[d,m]}
\end{equation}
is called the Lax operator associated to $q_m$.
\end{defn}

The differential  $\partial$ acts on $ \hat{\g} \otimes \mathcal{V}_{[d,m]}$ by
\begin{equation}\label{Eqn:7.10}
\partial (a \otimes f (x) ): = a \otimes  (\partial f(x)).
\end{equation}
Then one can check that  $\partial  [a \otimes f(x) , b \otimes g(x)] =[ a \otimes \partial  f(x) , b \otimes g(x) ]+ [a \otimes f(x)  + b \otimes \partial  g(x)].$
The Lax operator $L_m= k\partial+ \sum_{(i,j) \in \mathcal{I}} \widetilde{u}_i^{-j} \otimes q_m^{i,j}+ \Lambda_m \otimes 1$ linearly acts on $\hat{\g} \otimes  \mathcal{V}_{[d,m]} $ by the adjoint action:
\begin{equation*}
 [L_m(x),a \otimes f(x)]=ka \otimes \partial  f(x)  + \sum_{(i,j)  \in \mathcal{I}} [\widetilde{u}_i^{-j}, a ] \otimes  q_m^{i,j}(x)f(x)  + [\Lambda_m, a ] \otimes f(x) .
\end{equation*}
Recall that we have a bilinear form $(\cdot, \cdot)$ on $\hat{\g}\otimes \mathcal{V}_{[d,m]}$ defined by  $(a \otimes f, b\otimes g) =(a,b)fg$.

\begin{defn} \label{Def:7.2}
Let $a, b\in \hat{\g}$, $x,y,w \in S^1$ and  $g_i \in  \mathcal{F}$ for $i=1,2,3,4$. A bilinear form
$(\cdot, \cdot)_{w}$
is defined by
\begin{equation} \label{Eqn:7.11}
\begin{aligned}
& \left(  a \otimes F(x,w)  , \  b \otimes G(y,w) \right)_{w}:=  \int_{S^1} F(x,w)G(y,w)  \ dw,
\end{aligned}
\end{equation}
where $F(x,w)= g_1(x) g_2(w) \partial_x^n \delta(x-w)$, or $g_1(x)g_2(w)$ and $G(y,w)= g_3(y) g_4(w) \partial_y^m \delta(y-w)$ or $ g_3(y)g_4(w).$
\end{defn}

Using the bilinear form, a basis element $u_i^j\in \mathcal{B}$ can be understood as a functional on $  \hat{\g}^{\leq 0}_{>-(d+1)m-1}\otimes \mathcal{V}_{[d,m]}$, i.e. 
\begin{equation}\label{Eqn:7.12}
u_i^j  (a \otimes f(x)) :=(u_{i}^{j} \otimes 1 ,a\otimes f(x))= (u_{i}^{j} \otimes \delta(x-w) ,a\otimes f(w))_w, 
\end{equation}
 where $ a \in \hat{\g}^{\leq 0}_{>-(d+1)m-1} \text{ and }  f \in \mathcal{V}_{[d,m]}.$ Moreover, we let 
\begin{equation}\label{Eqn:7.22_2}
\begin{aligned}
(\partial u_i^j)(a \otimes f) = \partial (u_i^j(a \otimes f)), \ \ (g_1 g_2)(a \otimes f) =\left( g_1  (a \otimes f)\right) \cdot \left(  g_2(a \otimes f)\right), \ \  c  (a \otimes f)=c,
\end{aligned}
\end{equation}
for $g_1, g_2 \in \mathcal{V}_{[d,m]}$ and $c \in \CC$. Hence every element in $\mathcal{V}_{[d,m]}$ is a functional on $\hat{\g}\otimes \mathcal{V}_{[d,m]}$.

\begin{rem}
For $f,g \in \mathcal{V}_{[d,m]}$ and $a \in \hat{\g}$, the functional multiplication is defined by $ g\cdot (a \otimes f)= a \otimes gf.$
\end{rem}

\begin{lem}\label{Def:7.3}
For any $ Q_m \in \hat{\g}^{\leq 0}_{>-(d+1)m-1} \otimes \mathcal{V}_{[d,m]}$ and $q_m^{i,j}
\in \mathcal{V}_{[d,m]}$, the operator
$$ L_m(Q_m):= k\partial+ \sum_{(i,j) \in \mathcal{I}} \widetilde{u}_{i}^{-j} \otimes (q_m^{i,j} (Q_m)) + \Lambda_m \otimes 1$$
is a Lax operator.
\end{lem}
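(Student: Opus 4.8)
The plan is to verify directly that $L_m(Q_m)$ has the three structural features demanded by Definition \ref{Def:7.1}: it should be of the form $k\partial + (\text{something in }\hat{\g}^{\leq 0}_{>-(d+1)m-1}\otimes\mathcal{V}_{[d,m]}) + \Lambda_m\otimes 1$. The leading term $k\partial$ and the trailing term $\Lambda_m\otimes 1$ are literally the same as in $L_m$, so the only point requiring argument is that the middle term $\sum_{(i,j)\in\mathcal{I}}\widetilde{u}_i^{-j}\otimes(q_m^{i,j}(Q_m))$ again lies in $\hat{\g}^{\leq 0}_{>-(d+1)m-1}\otimes\mathcal{V}_{[d,m]}$.

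First I would recall that by (\ref{Eqn:7.7}) the basis $\mathcal{B}^- = \{\widetilde{u}_i^{-j}\mid (i,j)\in\mathcal{I}^-\}$ is precisely a basis of $\hat{\g}^{\leq 0}_{>-(d+1)m-1}$; indeed the index set $\mathcal{I}$ of a general $q_m$ in (\ref{Eqn:7.9}) is exactly $\mathcal{I}^-$ since $q_m$ ranges over $\hat{\g}^{\leq 0}_{>-(d+1)m-1}\otimes\mathcal{V}_{[d,m]}$ and the dual pairing between $\mathcal{B}$ and $\mathcal{B}^-$ is what lets one write $q_m = \sum_{(i,j)}\widetilde{u}_i^{-j}\otimes q_m^{i,j}$ with coordinates $q_m^{i,j}\in\mathcal{V}_{[d,m]}$. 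So the shape of the sum is already the right shape; what must be checked is only that the new coefficients $q_m^{i,j}(Q_m)$ still belong to $\mathcal{V}_{[d,m]}$.

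For that I would invoke the construction right before the lemma: by (\ref{Eqn:7.12}) and (\ref{Eqn:7.22_2}), every element of $\mathcal{V}_{[d,m]}$ — in particular each coordinate function $q_m^{i,j}$, written as a differential polynomial in the $u_i^j\in\mathcal{B}$ — defines a functional on $\hat{\g}^{\leq 0}_{>-(d+1)m-1}\otimes\mathcal{V}_{[d,m]}$ whose values lie in $\mathcal{V}_{[d,m]}$, because the basic functionals $u_i^j$ send $a\otimes f$ into $\mathcal{V}_{[d,m]}$ via the bilinear form and this property is preserved under $\partial$, products, and scalars by (\ref{Eqn:7.22_2}). Hence $q_m^{i,j}(Q_m)\in\mathcal{V}_{[d,m]}$ for each $(i,j)\in\mathcal{I}$, and therefore $\sum_{(i,j)\in\mathcal{I}}\widetilde{u}_i^{-j}\otimes(q_m^{i,j}(Q_m))$ is a well-defined element of $\hat{\g}^{\leq 0}_{>-(d+1)m-1}\otimes\mathcal{V}_{[d,m]}$.

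Putting these together, $L_m(Q_m) = k\partial + q_m'(Q_m) + \Lambda_m\otimes 1$ with $q_m'(Q_m):=\sum_{(i,j)\in\mathcal{I}}\widetilde{u}_i^{-j}\otimes(q_m^{i,j}(Q_m)) \in \hat{\g}^{\leq 0}_{>-(d+1)m-1}\otimes\mathcal{V}_{[d,m]}$, which is exactly the form (\ref{Eqn:7.8}) with $k\in\CC$ unchanged; so by Definition \ref{Def:7.1} it is a Lax operator. The only genuinely substantive point — and thus the ``main obstacle,'' though it is mild — is making precise that evaluating a differential polynomial functional at $Q_m$ lands back inside $\mathcal{V}_{[d,m]}$ rather than in some larger function space; this is immediate from the way functionals were set up in (\ref{Eqn:7.12})–(\ref{Eqn:7.22_2}), so the proof is essentially an unwinding of definitions.
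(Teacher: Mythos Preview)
Your proof is correct and follows the same approach as the paper, which simply remarks that the lemma follows from the fact that each $q_m^{i,j}$ is a functional on $\hat{\g}^{\leq 0}_{>-(d+1)m-1}\otimes\mathcal{V}_{[d,m]}$ (with values in $\mathcal{V}_{[d,m]}$). You have merely expanded this one-line observation by explicitly tracing through (\ref{Eqn:7.12})--(\ref{Eqn:7.22_2}), which is fine.
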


Lemma \ref{Def:7.3} follows from the fact that $q_m^{i,j}$ is a functional on  $ \hat{\g}^{\leq 0}_{>-(d+1)m-1} \otimes \mathcal{V}_{[d,m]}$.  Also, the operator 
\begin{equation}\label{Eqn:7.13}
\begin{aligned}
\mathcal{L}_m & := k\partial+ \sum_{(i,j) \in \mathcal{I}} \widetilde{u}_{i}^{-j} \otimes u_i^j + \Lambda_m \otimes 1= k\partial+ \sum_{(i,j)\in \bar{\mathcal{I}}} \widetilde{u}_{i}^{-j} \otimes u_i^j  \ \ \ \in \ \ \CC\partial \ltimes \hat{\g}\otimes \mathcal{V}_{[d,m]}
\end{aligned}
\end{equation}
is called the universal Lax operator. Then a Lax operator $L_m$ has the form
\begin{equation}\label{Eqn:7.14}
L_m=k\partial+q_m  +\Lambda_m \otimes 1 = \mathcal{L}_m(q_m).
\end{equation}
   
Now we introduce the gauge equivalence class on $\hat{g}\otimes \mathcal{V}_{[d,m]}$.

\begin{defn}\label{Def:7.4}
Let $S = a\otimes f \in \hat{\g} \otimes \mathcal{V}_{[d,m]} \text{ and } b\otimes g \in \hat{\g} \otimes \mathcal{V}_{[d,m]}$. The adjoint action of the Lie algebra  $\hat{\g} \otimes \mathcal{V}_{[d,m]}$ on  the space $ \CC\partial \ltimes \hat{\g} \otimes \mathcal{V}_{[d,m]} $ is defined as follows:
$$ (\ad S) (\partial) = -a \otimes\partial f, \text{ and }  (\ad S)(b \otimes g) = [a,b] \otimes fg.$$
\end{defn}

\begin{defn}\label{Def:7.5}
Let $ S\in \n \otimes  \mathcal{V}_{[d,m]}$ and  consider the map 
\begin{equation}\label{Eqn:7.15}
L_m=k\partial+q_m+\Lambda_m \otimes 1 \mapsto \widetilde{L_m}:=e^{\ad S}(L_m)=k\partial+\widetilde{q}_m+\Lambda_m \otimes 1
\end{equation}
between Lax operators. Then the  map $G_S: \hat{\g} \otimes \mathcal{V}_{[d,m]}\to  \hat{\g} \otimes \mathcal{V}_{[d,m]}$ such that $ G_S:q_m \mapsto \widetilde{q}_m$
is called the gauge transformation by $S$. Since gauge transformations define an equivalence relation, we  write  
$q_m \sim \widetilde{q}_m$ or $q_m \sim_S \widetilde{q}_m$
if $\widetilde{q}_m = G_S(q_m)$. 
\end{defn}

Any element in the differential algebra $\mathcal{V}_{[d,m]}$ is a map from $  \hat{\g}^{\leq 0}_{>-(d+1)m-1} \otimes \mathcal{V}_{[d,m]}$ to $\mathcal{V}_{[d,m]}$, as described in (\ref{Eqn:7.12}). Instead of  the whole algebra $\mathcal{V}_{[d,m]}$, we focus on the subset of $\mathcal{V}_{[d,m]}$ consisting of functionals which are well-defined on the gauge equivalence classes of $  \hat{\g}^{\leq 0}_{>-(d+1)m-1} \otimes \mathcal{V}_{[d,m]}$.

\begin{defn}\label{Def:7.6}
A functional $a:  \hat{\g}^{\leq 0}_{>-(d+1)m-1} \otimes \mathcal{V}_{[d,m]} \to \mathcal{V}_{[d,m]} $ is said to be gauge invariant if $a(q_m)=a(\widetilde{q}_m)$ whenever $q_m$ and $\widetilde{q}_m$ are gauge equivalent.
\end{defn}

\begin{defn}\label{Def:7.7}
The subset of $\mathcal{V}_{[d,m]}$ consisting of gauge invariant functionals on $ \hat{\g}^{\leq 0}_{>-(d+1)m-1} \otimes \mathcal{V}_{[d,m]}$ 
is called the $m$-th affine fractional $\WW$-algebra associated to $\g$, $\Lambda_m$ and $k$ and we write this algebra as
\begin{equation}\label{Eqn:7.16}
\WW_1(\g,\Lambda_m,k).
\end{equation}
\end{defn}

Affine fractional $\WW$-algebras are well-defined differential algebras since if $\phi$ and $\psi$ are gauge invariant then $\phi+ \psi$, $\phi \psi$ and $\partial \phi$ are also gauge invariant. In the rest of this section, we introduce two local Poisson brackets on each affine fractional $\WW$-algebra.

\begin{rem}
In Section \ref{Subsec:7.2}, we introduce another definition of affine fractional $\WW$-algebras. The subscript $1$ in (\ref{Eqn:7.16}) is used until we prove the equivalence of two definitions.
\end{rem}

\begin{defn}\label{Def:7.8}
Let $\phi, \psi \in \WW_1(\g, \Lambda_m, k)$. Then the two local brackets $\{ \cdot ,\cdot \}_1$ and $\{\cdot, \cdot \}_2$ on $\WW_1(\g, \Lambda_m, k)$ are defined by
\begin{equation}\label{Eqn:7.17}
 \begin{aligned}
 & \{\phi(x), \psi(y)\}_1  \\
& = -\left(\sum_{\begin{subarray}{l} (i,j) \in \mathcal{I},\\ 
\ \ n\geq 0 \end{subarray}} u_i^j \otimes  \frac{\partial \phi(x)}{\partial u_i^{j(n)}}\partial_x^n \delta(x-w), \left[\sum_{\begin{subarray}{l}(p,q) \in \mathcal{I} ,\\ \ \ l\geq 0\end{subarray}} u_p^{q+1}  \otimes  \frac{\partial\psi(y)}{\partial u_p^{q(l)}}\partial_y^l \delta(y-w) , \mathcal{L}_m(w)\right] \right)_w, 
\end{aligned}
\end{equation}

\begin{equation}\label{Eqn:7.18}
\begin{aligned}
& \{\phi(x), \psi(y)\}_2  \\   
&= \left(\sum_{\begin{subarray}{l} (i,0) \in \mathcal{I}\\
   \ \  n\geq 0\end{subarray}}     u_i^0 \otimes  \frac{\partial \phi(x)}{\partial u_i^{0(n)}}\partial_x^n \delta(x-w), \left[\sum_{\begin{subarray}{l}(p,0) \in \mathcal{I},\\ \ \  l\geq 0\end{subarray}} u_p^{0(l)}  \otimes  \frac{\partial\psi(y)}{\partial u_p^{0(l)}}\partial_y^l \delta(y-w) , \mathcal{L}_m(w)\right] \right)_w \\
&  -\left(\sum_{\begin{subarray}{l}(i,j)\in \mathcal{I} ,j>0,\\ \ \ n\geq 0\end{subarray}} u_i^j \otimes  \frac{\partial \phi(x)}{\partial u_i^{j(n)}}\partial_x^n \delta(x-w), \left[\sum_{\begin{subarray}(p,q)\in \mathcal{I},\\ q>0, l \geq 0\end{subarray}} u_p^{q}  \otimes  \frac{\partial\psi(y)}{\partial u_p^{q(l)}}\partial_y^l \delta(y-w) , \mathcal{L}_m(w)\right] \right)_w.
\end{aligned}
\end{equation}
\end{defn}

(Later, in Proposition \ref{Prop:7.11}, we show that $\{\cdot, \cdot \}_1$ and $\{\cdot, \cdot \}_2$ are well-defined Poisson local brackets.)

\begin{rem} \label{Note:7.10}

The local Poisson brackets $\{ \phi(x), \psi(y) \}_i$, $i=1,2$, have the form $\newline$ $\sum_{ n \geq0} \Phi_{i,n}(y) \partial_y^n \delta(x-y)$, for some $\Phi_{i, n} \in \mathcal{V}_{[d,m]}.$ By definition of  $\delta$-function, we have $\int \Phi_{i,n}(y) \partial_y^n \delta(x-y)  dx =(-\partial)^n \Phi_{i,n}(y)$. Hence Poisson brackets on $\mathcal{V}_{[d,m]}/\partial \mathcal{V}_{[d,m]} $ can be defined by
$$\int  \left\{ \phi,  \psi\right\}_i  : = \int \int\{\phi(x), \psi(y)\}_i dxdy.$$
Two brackets $\int \{ \cdot , \cdot \}_i$, $i=1,2$, are well-defined Poisson brackets on $\WW_1(\g, \Lambda_m, k)/\partial \WW_1(\g, \Lambda_m, k) $ if the local brackets $\{\cdot, \cdot \}_i$ are well-defined on $\WW_1(\g,\Lambda_m, k)$. Indeed, the skew-commuativity, Jacobi identity and Leibniz rules of $\{\cdot, \cdot\}_i$ gaurantee those of $\int \{ \cdot, \cdot \}_i$. 
\end{rem}

\begin{lem}\label{Prop:7.9}
The derivative of $\phi$ at $q_m(x)$ has the following property:
\begin{equation}\label{Eqn:7.19}
\frac{d}{d\epsilon} \phi(q_m(x)+\epsilon r(x) )|_{\epsilon=0}=\sum_{\begin{subarray}{l}(i,j) \in \mathcal{I},\\ \ \  n \geq 0\end{subarray}} \frac{\partial \phi(q_m(x))}{ \partial {u}_i^{j \ (n)}}  \ \ (  \ u_i^{j} \otimes \partial_x^n \delta(x-w) \ , r(w) \ )_w .
\end{equation}
\end{lem}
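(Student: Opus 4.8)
The plan is to unravel the definitions on both sides and reduce the identity to the chain rule together with the representation of $u_i^j$ as a functional given in (\ref{Eqn:7.12}). First I would recall that by Definition \ref{Def:7.1} the argument $q_m(x)$ lies in $\hat{\g}^{\leq 0}_{>-(d+1)m-1}\otimes \mathcal{V}_{[d,m]}$ and can be written $q_m(x)=\sum_{(i,j)\in\mathcal{I}}\widetilde{u}_i^{-j}\otimes q_m^{i,j}(x)$, where $q_m^{i,j}=u_i^j(q_m)$ by the pairing (\ref{Eqn:7.12}): indeed $u_i^j(q_m)=(u_i^j\otimes 1,\,q_m)$ picks out the component of $q_m$ along the dual basis vector $\widetilde{u}_i^{-j}$. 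Likewise the perturbation $r(x)\in\hat{\g}^{\leq 0}_{>-(d+1)m-1}\otimes\mathcal{V}_{[d,m]}$ has components $u_i^j(r)=(u_i^j\otimes 1,\,r)$, and more generally $(u_i^j\otimes\partial_x^n\delta(x-w),\,r(w))_w=\partial_x^n\big(u_i^j(r)\big)$ by Definition \ref{Def:7.2} of the bilinear form $(\cdot,\cdot)_w$ together with the defining property of $\partial_x^n\delta(x-w)$ (Remark \ref{Rmk:delta}). So the right-hand side of (\ref{Eqn:7.19}) equals $\sum_{(i,j)\in\mathcal{I},\,n\geq 0}\frac{\partial\phi(q_m(x))}{\partial u_i^{j\,(n)}}\,\partial_x^n\big(u_i^j(r)\big)$.

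Next I would identify $\phi$, via the isomorphism $\widetilde{\iota}$ of (\ref{Eqn:7.2}) and the convention stated just after (\ref{Eqn:7.14_2}), with a differential polynomial in the generators $\{u_i^j\mid (i,j)\in\mathcal{I}\}$, i.e.\ an element of $\CC[\,u_i^{j\,(n)}\mid (i,j)\in\mathcal{I},\ n\geq 0\,]$. The key point is that under the functional interpretation (\ref{Eqn:7.12})–(\ref{Eqn:7.22_2}), the value $\phi(q_m(x))$ is obtained by substituting $u_i^{j\,(n)}\mapsto \partial_x^n\big(u_i^j(q_m)\big)=\partial_x^n q_m^{i,j}(x)$ into this polynomial; the multiplicativity rule for $g_1g_2$ and the rule for $\partial u_i^j$ in (\ref{Eqn:7.22_2}) are exactly what make this substitution a differential algebra homomorphism, hence compatible with $\partial_x$. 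Therefore $\phi(q_m(x)+\epsilon r(x))$ is the same polynomial evaluated at $u_i^{j\,(n)}\mapsto \partial_x^n q_m^{i,j}(x)+\epsilon\,\partial_x^n\big(u_i^j(r)\big)$, since $u_i^j$ is $\CC$-linear as a functional and commutes with $\partial_x$ in the sense of (\ref{Eqn:7.22_2}).

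Then I would simply differentiate in $\epsilon$ at $\epsilon=0$ by the multivariate chain rule:
\begin{equation*}
\frac{d}{d\epsilon}\phi\big(q_m(x)+\epsilon r(x)\big)\Big|_{\epsilon=0}
=\sum_{\begin{subarray}{l}(i,j)\in\mathcal{I},\\ \ \ n\geq 0\end{subarray}}
\frac{\partial\phi(q_m(x))}{\partial u_i^{j\,(n)}}\,\partial_x^n\big(u_i^j(r)\big),
\end{equation*}
which, rewriting $\partial_x^n\big(u_i^j(r)\big)=\big(u_i^j\otimes\partial_x^n\delta(x-w),\,r(w)\big)_w$ as established in the first step, is precisely the right-hand side of (\ref{Eqn:7.19}). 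One should note that the sum over $n$ is finite because $\phi$ is a polynomial of bounded total derivative order, so no convergence issue arises and term-by-term differentiation is legitimate.

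The main obstacle is bookkeeping rather than conceptual: one must be careful that the substitution $u_i^{j\,(n)}\mapsto\partial_x^n q_m^{i,j}(x)$ genuinely respects the differential algebra structure, i.e.\ that applying $\partial_x$ after substitution agrees with substituting $\partial u_i^{j\,(n)}=u_i^{j\,(n+1)}$, which is exactly the content of the first two identities in (\ref{Eqn:7.22_2}). Once that compatibility is in hand, the chain rule and the linearity of $u_i^j$ in its argument finish the proof.
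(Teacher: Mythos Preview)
Your proposal is correct and follows essentially the same approach as the paper: both arguments reduce the claim to the multivariate chain rule (equivalently, the first-order Taylor expansion) applied to $\phi$ as a differential polynomial in the $u_i^{j\,(n)}$, together with the identification $(u_i^j\otimes\partial_x^n\delta(x-w),\,r(w))_w=\partial_x^n(u_i^j\otimes 1,\,r(x))$. Your write-up is simply more explicit than the paper's about why the substitution $u_i^{j\,(n)}\mapsto\partial_x^n q_m^{i,j}$ is a differential algebra homomorphism and why the sum is finite.
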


\begin{proof}
By Taylor expansion, we can write $\phi(q_m(x)+\epsilon r(x))$ as a power series in $\epsilon$. Substituting $\epsilon=0$ to the series, we obtain the following formula:
\begin{align*}
\frac{d}{d\epsilon} \phi(q_m(x)+ \epsilon r(x) ) |_{\epsilon=0} 
&= \left. \frac{d}{d\epsilon} \right|_{\epsilon=0} \left(  \phi(q_m(x)) +\epsilon \sum_{\begin{subarray}{l}(i,j) \in \mathcal{I}, \\ \ \ n \in \ZZ_{\geq 0}\end{subarray}} \frac{\partial \phi}{\partial u_i^{j\ (n)}} \partial_x^n (u_i^j\otimes 1 , r(x)) +o(\epsilon^2) \right) \\
& =\sum_{(i,j)\in \mathcal{I}, n \geq 0} \frac{\partial \phi(q_m(x))}{ \partial {u}_i^{j \ (n)}}  \ \ (  \ u_i^{j} \otimes \partial_x^n \delta(x-w) \ , r(w) \ )_w .
\end{align*}
\end{proof}

\begin{lem}\label{Prop:7.10}
Let $L_m(x) = \partial+ q_m(x) + \Lambda_m \otimes 1$ and let $e^{\ad S(x)} L_m(x) = \widetilde{L}_m(x) = \partial+ \widetilde{q}_m(x) + \Lambda_m \otimes 1.$ Then, for $\phi, \psi \in \WW_1(\g, \Lambda_m, k)$, we have 
\begin{equation}\label{Eqn:4.26_011014}
\sum_{\begin{subarray}{l} (i,j) \in \mathcal{I},\\ \ \  n \geq 0\end{subarray}} u_i^j\otimes \frac{\partial \phi(\widetilde{q}_m(x))}{\partial u_i^{ j (n)}} \partial_x^n \delta(x-w) = \sum_{\begin{subarray}{l}(i,j) \in \mathcal{I},\\ \ \  n \geq 0\end{subarray}} \frac{\partial \phi(q_m(x))}{\partial u_i^{ j (n)}} e^{\ad S(w)} \left(u_i^j \otimes  \partial_x^n \delta(x-w)\right).
\end{equation}
\end{lem}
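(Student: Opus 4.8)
The statement relates the functional derivatives of a gauge-invariant $\phi$ evaluated at two gauge-equivalent points $q_m$ and $\widetilde q_m = e^{\ad S}(q_m)$. The natural way to prove it is to linearize the gauge-invariance condition $\phi(q_m) = \phi(\widetilde q_m)$ along a one-parameter family of deformations. First I would fix an arbitrary $r(w) \in \hat\g^{\leq 0}_{>-(d+1)m-1}\otimes \mathcal V_{[d,m]}$, consider the deformation $q_m(x)+\epsilon r(x)$, and compute how $\widetilde q_m$ deforms: writing $\widetilde q_m(\epsilon) = G_S(q_m + \epsilon r)$, I claim $\frac{d}{d\epsilon}\widetilde q_m(x)|_{\epsilon=0} = \widetilde r(x)$, where $\widetilde r$ is obtained from $r$ by the \emph{same} conjugation $e^{\ad S}$ that takes $L_m$ to $\widetilde L_m$ — more precisely, $\partial + \widetilde q_m(\epsilon) + \Lambda_m = e^{\ad S(\epsilon)}(\partial + q_m + \epsilon r + \Lambda_m)$, and since to first order in $\epsilon$ the correction $S(\epsilon) - S$ only contributes terms lying in $\n\otimes\mathcal V_{[d,m]}$ after bracketing with $\mathcal L_m$ (which are absorbed into a further gauge transformation and so do not change $\widetilde q_m$ modulo the gauge direction relevant to a gauge-invariant functional), the leading deformation of $\widetilde q_m$ is exactly $e^{\ad S}$ applied to the deformation $r$. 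This is the step I expect to be the main obstacle: one must argue carefully that the $\epsilon$-dependence of $S$ itself does not contaminate the first-order term, using that $\phi$ is gauge invariant so that derivatives in gauge directions vanish.

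\textbf{Key steps.} Granting the deformation identity, the argument proceeds as follows.
\begin{enumerate}[(1)]
\item Apply Lemma \ref{Prop:7.9} to $\phi$ at the point $\widetilde q_m(x)$ with deformation direction $\widetilde r$: this gives
$$\frac{d}{d\epsilon}\phi(\widetilde q_m(x))\big|_{\epsilon=0} = \sum_{(i,j)\in\mathcal I,\, n\geq 0} \frac{\partial \phi(\widetilde q_m(x))}{\partial u_i^{j(n)}}\big(u_i^j\otimes \partial_x^n\delta(x-w),\, \widetilde r(w)\big)_w.$$
\item Apply Lemma \ref{Prop:7.9} to $\phi$ at the point $q_m(x)$ with deformation direction $r$:
$$\frac{d}{d\epsilon}\phi(q_m(x))\big|_{\epsilon=0} = \sum_{(i,j)\in\mathcal I,\, n\geq 0} \frac{\partial \phi(q_m(x))}{\partial u_i^{j(n)}}\big(u_i^j\otimes \partial_x^n\delta(x-w),\, r(w)\big)_w.$$
\item Use gauge invariance $\phi(q_m) = \phi(\widetilde q_m)$ to equate the two left-hand sides, and use the deformation identity $\widetilde r = e^{\ad S}(r)$ together with the invariance of the bilinear form $(\cdot,\cdot)_w$ under the adjoint action of $\hat\g\otimes\mathcal V_{[d,m]}$, i.e. $(e^{\ad S}a, b)_w = (a, e^{-\ad S}b)_w$, to move the exponential across the pairing in step (1).
\item Since $r$ is arbitrary in $\hat\g^{\leq 0}_{>-(d+1)m-1}\otimes\mathcal V_{[d,m]}$, and the pairing $(\cdot,\cdot)_w$ restricted to this space against $\hat\g^{\geq 0}_{<(d+1)m+1}\otimes\mathcal V_{[d,m]}$ is nondegenerate (these two spaces are dual with respect to $(\cdot,\cdot)$), conclude that the coefficients of $r(w)$ on both sides agree, which is precisely (\ref{Eqn:4.26_011014}).
\end{enumerate}

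\textbf{Remark on subtleties.} One must check that $e^{\ad S}(u_i^j\otimes\partial_x^n\delta(x-w))$ lands in a space against which pairing with $r(w)\in\hat\g^{\leq 0}_{>-(d+1)m-1}\otimes\mathcal V_{[d,m]}$ is meaningful and that the sum over $(i,j)\in\mathcal I$ together with the $e^{\ad S}$ does not produce components outside the relevant range — but since $S\in\n\otimes\mathcal V_{[d,m]}$ raises the $\gr_2$-degree and $\mathcal L_m$ is used only to define the bracket, this is a bookkeeping matter once one tracks the grading carefully as set up in (\ref{Eqn:7.3_2})–(\ref{Eqn:7.4_2}). The only genuinely delicate point remains the justification in step before (1) that $\frac{d}{d\epsilon}\widetilde q_m|_{\epsilon=0}$ equals $e^{\ad S}$ applied to $r$ modulo terms irrelevant to gauge-invariant functionals; I would handle this by differentiating the defining relation $\widetilde L_m(\epsilon) = e^{\ad S(\epsilon)} L_m(\epsilon)$ in $\epsilon$, isolating the term $[\,\dot S(0), \mathcal L_m\,]$ which lies in the tangent space to the gauge orbit, and invoking that $\phi$ annihilates such directions.
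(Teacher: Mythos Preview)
Your approach is correct and follows the same core idea as the paper: linearize the gauge-invariance identity $\phi(q_m)=\phi(\widetilde q_m)$ along a one-parameter deformation, apply Lemma~\ref{Prop:7.9} at both points, use $\ad$-invariance of the pairing to move $e^{\ad S}$ across, and conclude by nondegeneracy.

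The paper's execution is slightly slicker and avoids the whole $S(\epsilon)$ discussion you flag as the main obstacle. Rather than deforming $q_m$ by $\epsilon r$ and tracking how $\widetilde q_m$ moves, the paper deforms $\widetilde q_m$ by $\epsilon r$ directly and observes that, with the \emph{same fixed} $S$, one has exactly
\[
e^{\ad S}\bigl(k\partial + q_m + \epsilon\, e^{-\ad S}r + \Lambda_m\otimes 1\bigr)=\widetilde L_m + \epsilon r,
\]
so $q_m + \epsilon e^{-\ad S}r \sim_S \widetilde q_m + \epsilon r$ for all $\epsilon$, with no $\epsilon$-dependence of $S$ whatsoever. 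Lemma~\ref{Prop:7.9} applied at $\widetilde q_m$ with direction $r$ and at $q_m$ with direction $e^{-\ad S}r$ then gives the two sides of (\ref{Eqn:4.26_011014}) paired against $r(w)$, and one is done. Your version instead produces the extra term $[\dot S(0),\widetilde L_m]$ and then argues it is a gauge direction annihilated by $\phi$; this is valid, but the paper's choice of which point to deform by the ``free'' variable $r$ makes that step evaporate.
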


\begin{proof}
By Lemma \ref{Prop:7.9}, we have two equations:
\begin{equation*}
\left. \frac{d}{d \epsilon} \phi(\widetilde{q}_m (x) + \epsilon r (x) ) \right|_{\epsilon=0}=\sum_{\begin{subarray}{l}(i,j) \in \mathcal{I},\\ \ \  n \geq 0\end{subarray}}\frac{\partial \phi(\widetilde{q}_m(x))}{\partial u_i^{ j (n)}} \left(  \ u_i^j \otimes \partial_x^n \delta(x-w) \ ,\  r(w)\ \right)_w;
\end{equation*}
\begin{equation*}
  \left.\frac{d}{d\epsilon}\phi(q_m+ e^{-\ad S} \epsilon r)\right|_{\epsilon=0} =  \sum_{\begin{subarray}{l}(i,j) \in \mathcal{I},\\ \ \  n \geq 0\end{subarray}}\frac{\partial \phi(q_m(x))}{\partial u_i^{ j (n)}} \left(  e^{\ad S(w)} \left(u_i^j \otimes \partial_x^n \delta(x-w)\right) \  , \ r(w)\right)_w.
\end{equation*}
Since $q_m+ e^{-\ad S}\epsilon r  \sim_S \widetilde{q}_m + \epsilon r$, we obtain (\ref{Eqn:4.26_011014}).
\end{proof}

To simplify notations, let us denote
\begin{align*}
& d_{q_m(x)} \phi(w) := \sum_{(i,j)\in \mathcal{I},n\geq 0}  u_i^j \otimes \frac{\partial \phi(q_m(x))}{\partial u_i^{ j (n)}} \partial_x^n \delta(x-w),\\
& d_{q_m(x)} \phi(w)^0 := \sum_{(i,0) \in \mathcal{I} ,n\geq 0}  u_i^0 \otimes \frac{\partial \phi(q_m(x))}{\partial u_i^{ 0 (n)}} \partial_x^n \delta(x-w),\\
& d_{q_m(x)} \phi(w)^> := \sum_{(i,j)\in \mathcal{I } ,j>0,n\geq 0}  u_i^j \otimes \frac{\partial \phi(q_m(x))}{\partial u_i^{ j (n)}} \partial_x^n \delta(x-w).
\end{align*}
Then the two brackets $\{\cdot, \cdot\}_1$ and $\{\cdot, \cdot\}_2$ are 
\begin{align*}
 &\{ \phi(x), \psi(y)\}_1  =- (d_{q_m(x)} \phi(w), [zd_{q_m(y)} \psi(w), \mathcal{L}_m (w)])_w; \\
 & \{ \phi(x), \psi(y)\}_2  = (d_{q_m(x)} \phi(w)^0, [d_{q_m(y)} \psi(w)^0, \mathcal{L}_m (w)])_w - (d_{q_m(x)} \phi(w)^>, [d_{q_m(y)} \psi(w)^>, \mathcal{L}_m (w)])_w . 
\end{align*}

\begin{prop}\label{Prop:7.11}
Two local  Poisson brackets  $\{ \cdot ,\cdot \}_1$ and $\{ \cdot , \cdot \}_2$ are well-defined on $\WW_1(\g, \Lambda_m, k)$. 
\end{prop}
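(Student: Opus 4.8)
# Proof Proposal for Proposition~\ref{Prop:7.11}

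The plan is to verify that each bracket $\{\cdot,\cdot\}_i$, $i=1,2$, (a) actually takes values in $\WW_1(\g,\Lambda_m,k)$, i.e. produces a gauge invariant functional (more precisely, a local expression $\sum_n\Phi_{i,n}(y)\partial_y^n\delta(x-y)$ with all $\Phi_{i,n}$ gauge invariant), and (b) satisfies the axioms of a local Poisson bracket: skew-symmetry, the Leibniz rule, and the Jacobi identity. The Leibniz rule is immediate from the definition since $\{\phi(x),\psi(y)\}_i$ is built linearly out of the partial derivatives $\partial\phi/\partial u_i^{j(n)}$, which obey the product rule, so I would state that and move on. The substantive points are gauge invariance of the output and the Jacobi identity.

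For gauge invariance, the key tool is Lemma~\ref{Prop:7.10}. First I would observe that the brackets, written in the compact notation introduced just before the proposition, have the form $\{\phi(x),\psi(y)\}_1 = -(d_{q_m(x)}\phi(w),[z\,d_{q_m(y)}\psi(w),\mathcal{L}_m(w)])_w$ and similarly for $\{\cdot,\cdot\}_2$. If $q_m\sim_S\widetilde q_m$ via $\widetilde L_m = e^{\ad S}L_m$, then Lemma~\ref{Prop:7.10} gives $d_{\widetilde q_m(x)}\phi(w) = e^{\ad S(w)}\big(d_{q_m(x)}\phi(w)\big)$, and likewise for $\psi$. Since $\mathcal{L}_m$ evaluated at $\widetilde q_m$ is $\widetilde L_m = e^{\ad S}\mathcal{L}_m(q_m)$, and $e^{\ad S}$ is an automorphism of the Lie algebra $\hat\g\otimes\mathcal{V}_{[d,m]}$ preserving the bilinear form $(\cdot,\cdot)_w$ (because $S\in\n\otimes\mathcal{V}_{[d,m]}$ is nilpotent and the form is $\ad$-invariant), we get $\{\phi,\psi\}_i$ evaluated at $\widetilde q_m$ equals $(e^{\ad S}(\cdots),[e^{\ad S}(\cdots),e^{\ad S}\mathcal{L}_m])_w = (\cdots,[\cdots,\mathcal{L}_m])_w$ evaluated at $q_m$. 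The one point requiring care here is the grading-bracket compatibility for $\{\cdot,\cdot\}_2$: the splitting into the $j=0$ piece $d\phi^0$ and the $j>0$ piece $d\phi^>$ must be compatible with the gauge action; since $S\in\n\subset\hat\g_{>0}$ only shifts $\gr_2$ upward and preserves the $z$-power grading $\gr_1$ used in the $j$-index decomposition, one checks $e^{\ad S}$ respects the relevant block structure. I expect this to be the main obstacle — verifying that the $\{\cdot,\cdot\}_2$ decomposition behaves correctly under the gauge flow and, in particular, that $z\,d_{q_m(y)}\psi(w)$ in bracket~1 and the truncations in bracket~2 land in the right subspaces so that the pairing with $\mathcal{L}_m$ makes sense modulo the ideal defining $\mathcal{V}_{[d,m]}$.

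Next, skew-symmetry: using $\delta(x-w)=\delta(w-x)$, Remark~\ref{Rmk:delta}, and invariance of the bilinear form under the transpose of $\ad$ on $\hat\g\otimes\mathcal{V}_{[d,m]}$, I would integrate by parts in $w$ inside $(\cdot,\cdot)_w$ to move the adjoint action from one slot to the other; the term $k\partial$ in $\mathcal{L}_m$ contributes the sesquilinear correction that accounts for the $\partial$ acting on $\delta$-functions, and the $\Lambda_m$ and $q_m$ terms pair symmetrically because $([a,b],c)=-(b,[a,c])$. For bracket~1 the factor of $z$ requires shifting the $z$-grading and this is exactly balanced by the relabeling $q\mapsto q+1$ visible in~(\ref{Eqn:7.17}). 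For the Jacobi identity, rather than a direct computation I would invoke the standard fact (from \cite{BDK}, or the Poisson-structure framework recalled in Section~\ref{Subsec:IS}) that these are precisely the two Drinfel'd--Sokolov Poisson structures arising from the $r$-matrix / Kostant--Kirillov bracket on $\hat\g\otimes\mathcal{V}_{[d,m]}$ restricted to the gauge-invariant (classical reduced) functions: bracket $\{\cdot,\cdot\}_1$ is the reduction of the Lie--Poisson bracket twisted by $\Lambda_m$, and $\{\cdot,\cdot\}_2$ is the reduction of the bracket associated with the grading element splitting $\hat\g = \hat\g_{\ge0}\oplus\hat\g_{<0}$; both satisfy Jacobi on the unreduced space, and since they restrict well-definedly to gauge invariants by the argument above, Jacobi descends. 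I would present this last step by reducing it to the corresponding statement in the ungraded affine case already established (cf.\ the treatment around Definition~\ref{Def:W2aff}) together with the observation that replacing $\Lambda$ by $\Lambda_m$ and $\g$ by the relevant slice of $\hat\g$ does not affect the formal identities, the point being that $\Lambda_m$ still satisfies the key property that $\ad\Lambda_m$ maps the complement of $\n$ onto a space transverse to $[\n,\cdot]$ so the reduction is clean.
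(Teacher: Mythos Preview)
Your gauge-invariance argument is essentially identical to the paper's: both use Lemma~\ref{Prop:7.10} to write $d_{\widetilde q_m}\phi = e^{\ad S}d_{q_m}\phi$, conjugate $\mathcal{L}_m$ by $e^{\ad S}$, and invoke $\ad$-invariance of the bilinear form. Your remark that the $j=0$/$j>0$ splitting in $\{\cdot,\cdot\}_2$ is preserved because $S\in\n\otimes\mathcal{V}_{[d,m]}\subset\hat\g^0\otimes\mathcal{V}_{[d,m]}$ has $z$-degree zero is exactly what the paper uses, stated more tersely there.

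The difference is in how you handle skew-symmetry and Jacobi. The paper does not argue abstractly via $r$-matrices or reduction theory; instead it simply computes the pairing with $\mathcal{L}_m(w)$ explicitly, obtaining
\[
\{\phi(x),\psi(y)\}_1 = -\sum \frac{\partial\phi(x)}{\partial u_i^{j(n)}}\partial_x^n\,\frac{\partial\psi(y)}{\partial u_p^{q(l)}}\partial_y^l\,[u_i^j,u_p^{q+1}](y)\,\delta(x-y)
\]
and an analogous closed form for $\{\cdot,\cdot\}_2$ involving $[u_i^0,u_p^0]+k(u_i^0,u_p^0)\partial_y$ on the $j=q=0$ block and $-[u_i^j,u_p^q]$ on the $j,q>0$ block. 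Once written this way, Leibniz is manifest, and skew-symmetry and Jacobi are reduced to the corresponding identities for the Lie bracket on $\hat\g$ together with standard $\delta$-function manipulations. Your route via invariance of the form and appeal to Drinfel'd--Sokolov/$r$-matrix reduction is in principle viable, but it is longer, depends on locating an external statement that covers the fractional $\Lambda_m$ setting precisely, and your sketch of the skew-symmetry step (moving $\ad$ across slots and balancing the $z$-shift) would need to be fleshed out carefully. The paper's direct rewriting is both shorter and self-contained, and it is what you need anyway later to read off the $\lambda$-brackets in Definition~\ref{Def:7.17}.
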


\begin{proof}
Recall that  $q_m \sim_S \widetilde{q}_m$ if and only if 
$ e^{\ad S} \mathcal{L}_m (q_m) = \mathcal{L}_m (\widetilde{q}_m)$
and, by Lemma \ref{Prop:7.10},  $d_{\widetilde{q}_m(x)} \phi (w) = e^{\ad S(w)} d_{q_m(x)}\phi(w)$.
Since the bilinear form  $(\cdot, \cdot)_w$ is invariant under the adjoint action, we have 
\begin{align*}
\{ \phi(x), \psi(y) \}_1(\widetilde{q}_m) & = - (e^{\ad S(w)}d_{q_m(x)} \phi(w), [e^{\ad S(w)} zd_{q_m(y)} \psi(w),e^{\ad S(w)} \mathcal{L}_m (w)])_w \\
& = - (e^{\ad S(w)}d_{q_m(x)} \phi(w), e^{\ad S(w)}[ zd_{q_m(y)} \psi(w), \mathcal{L}_m (w)])_w \\
& =-( d_{q_m(x)} \phi(w), [ zd_{q_m(y)} \psi(w), \mathcal{L}_m (w)])_w= \{ \phi(x), \psi(y) \}_1(q_m).
\end{align*}
Since $S \in \hat{\g}^0 \otimes \mathcal{V}_{[d,m]}$, the same procedure works for the second bracket, i.e.
$$\{ \phi(x), \psi(y) \}_2(\widetilde{q}_m)= \{ \phi(x), \psi(y) \}_2(q_m).$$
Hence $\{ \cdot, \cdot \}_1$ and $\{\cdot, \cdot\}_2$ are well-defined on $\WW_1(\g, \Lambda_m, k)$.\\
To see the skew-symmetry, Jacobi identity and Leibniz rules, let us rewrite formulas (\ref{Eqn:7.17}) and (\ref{Eqn:7.18}). The first bracket is as follows:
\begin{align*}
& \{\phi(x), \psi(y)\}_1  \\
&   =- \sum_{i,j,n} \sum_{p,q,l}  \frac{\partial \phi(x)}{\partial u_i^{j(n)}}\partial_x^n \frac{\partial\psi(y)}{\partial u_p^{q(l)}} \partial_y^l  [u_i^j, u_p^{p+1}](y)  \delta(x-y).
\end{align*}
Similarly the second bracket of $\phi(x)$ and $\psi(y)$ can be written as follows:
\begin{align*}
 & \{\phi(x), \psi(y)\}_2    \\
 & =  \sum_{u_i^0 \in \g,n\geq 0} \ \sum_{u_p^0 \in \g,l\geq 0}  \frac{\partial \phi(x)}{\partial u_i^{0(n)}}\partial_x^n \frac{\partial\psi(y)}{\partial u_p^{0(l)}}\partial_y^l\left( k\partial_y   (u_i^0, u_p^0)    + [u_i^0, u_p^0](y)  \right) \delta(x-y) \\
 & - \sum_{u_i^j \in S,j>0,n\geq 0}\  \sum_{u_p^q \in S,q>0,l\geq 0}  \frac{\partial \phi(x)}{\partial u_i^{j(n)}}\partial_x^n \frac{\partial\psi(y)}{\partial u_p^{q(l)}}\partial_y^l [u_i^j, u_p^{p}](y)  \delta(x-y).
 \end{align*}
Then Leibniz rules hold obviously. Skew-symmetries and Jacobi identities of two brackets follow from those of $\hat{\g}$ and the properties of $\delta$-function.
\end{proof}

 The following theorem follows from Proposition \ref{Prop:7.11} and Remark \ref{Note:7.10}.

\begin{thm}
\begin{enumerate}[(i)]
\item
The two brackets $\{\cdot, \cdot\}_i$, $i=1,2$, are well-defined local Poisson brackets on the affine fractional $\WW$-algebra $\WW(\g, \Lambda_m, k).$ 
\item
The two brackets  $\int \{\cdot, \cdot\}_i$ on  the algebra  $\WW(\g, \Lambda_m, k)/\partial \WW(\g, \Lambda_m, k)$, which are induced from  $\{\cdot, \cdot\}_i$, are well-defined Poisson brackets.
\end{enumerate}
\end{thm}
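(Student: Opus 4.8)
The plan is to deduce both parts directly from results already in hand. For part (i), I would observe that Proposition~\ref{Prop:7.11} establishes that $\{\cdot,\cdot\}_1$ and $\{\cdot,\cdot\}_2$ are well-defined on $\WW_1(\g,\Lambda_m,k)$ as local brackets, meaning the value of $\{\phi(x),\psi(y)\}_i$ is independent of the representative $q_m$ in its gauge class and satisfies skew-symmetry, the Jacobi identity, and the Leibniz rules. The only remaining point is that these brackets take values in $\WW_1(\g,\Lambda_m,k)$ itself (so that they are genuine brackets on the algebra, not merely maps into the ambient $\mathcal{V}_{[d,m]}$); this follows because, as noted in the paragraph after Definition~\ref{Def:7.7}, $\WW_1(\g,\Lambda_m,k)$ is closed under sums, products, and $\partial$, and from the rewritten formulas at the end of the proof of Proposition~\ref{Prop:7.11} the coefficients $\Phi_{i,n}$ of $\partial_y^n\delta(x-y)$ are polynomial expressions in the $\partial^k$-derivatives of gauge-invariant functionals. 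Since $\WW(\g,\Lambda_m,k) = \WW_1(\g,\Lambda_m,k)$ once the two definitions are identified (and the subscript is dropped), this gives part (i) verbatim.

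For part (ii), I would invoke Remark~\ref{Note:7.10}, which already sets up exactly the needed reduction: each local bracket has the form $\{\phi(x),\psi(y)\}_i = \sum_{n\geq 0}\Phi_{i,n}(y)\,\partial_y^n\delta(x-y)$, so the double integral
\[
\textstyle\int\{\phi,\psi\}_i := \int\!\!\int \{\phi(x),\psi(y)\}_i\,dx\,dy = \int \sum_{n\geq 0}(-\partial)^n\Phi_{i,n}(y)\,dy
\]
is well-defined on $\WW(\g,\Lambda_m,k)/\partial\WW(\g,\Lambda_m,k)$. The passage from the local bracket axioms to the axioms for the induced bracket on local functionals is the standard integration-by-parts argument: skew-symmetry of $\int\{\cdot,\cdot\}_i$ follows from skew-symmetry of $\{\cdot,\cdot\}_i$ together with $\delta(x-y)=\delta(y-x)$; the Jacobi identity follows from the Jacobi identity for the local bracket after integrating; and the Leibniz rule descends because $\partial$-exact terms integrate to zero. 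Remark~\ref{Note:7.10} explicitly states ``the skew-commutativity, Jacobi identity and Leibniz rules of $\{\cdot,\cdot\}_i$ guarantee those of $\int\{\cdot,\cdot\}_i$,'' so the proof is essentially a citation of that remark together with part (i).

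I do not expect a serious obstacle here, since this theorem is a packaging statement: the real content lives in Proposition~\ref{Prop:7.11} and Remark~\ref{Note:7.10}. The one point deserving a sentence of care is well-definedness of $\int\{\cdot,\cdot\}_i$ at the level of local functionals, i.e.\ that the value of $\int\{\phi,\psi\}_i$ depends only on the classes of $\int\phi$ and $\int\psi$ in $\WW/\partial\WW$ and not on the chosen representatives $\phi,\psi$; this is immediate from formula~(\ref{Eqn:1.5})-type identities ($\frac{\delta}{\delta u}\circ\partial = 0$) combined with the description of the induced bracket via functional derivatives as in~(\ref{Eqn:1.7}). With that noted, the proof is a two- or three-line assembly of the preceding results, so I would write it as: part~(i) is Proposition~\ref{Prop:7.11} together with the closure of $\WW(\g,\Lambda_m,k)$ under the differential-algebra operations, and part~(ii) is Remark~\ref{Note:7.10} applied to part~(i).
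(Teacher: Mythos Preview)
Your proposal is correct and matches the paper's own approach exactly: the paper states that the theorem ``follows from Proposition~\ref{Prop:7.11} and Remark~\ref{Note:7.10},'' which is precisely the two-line assembly you describe. One small correction to your exposition: there is no ``remaining point'' about the brackets taking values in $\WW_1(\g,\Lambda_m,k)$, since the gauge-invariance statement $\{\phi(x),\psi(y)\}_i(\widetilde{q}_m)=\{\phi(x),\psi(y)\}_i(q_m)$ proved in Proposition~\ref{Prop:7.11} \emph{is} the statement that the bracket lies in $\WW_1$; your alternative argument via the explicit formula is not quite right, because the partial derivatives $\frac{\partial\phi}{\partial u_i^{j(n)}}$ of a gauge-invariant $\phi$ are not themselves gauge-invariant.
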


\subsection{Second definition of classical affine fractional $\WW$-algebras}\label{Subsec:7.2}
In this section, we construct a PVA called a classical affine fractional $\WW$-algebra which is equivalent to the fractional $\WW$-algebra explained in the previous section. 

Let $\lambda$-adjoint action $\ad_\lambda \n$ on $\mathcal{V}_{[d,m]}$ be defined as follows:
\begin{align*}
&(\ad_\lambda n) (a z^i)  = \{n_\lambda a z^i \}, \ \text{ where } \{n _\lambda a z^i\} = [n,a]+ \delta_{i,0} k\lambda(n,a),\\
&(\ad_\lambda n) (\partial A)  =(\partial+\lambda) (\ad_\lambda n) (A), \ (\ad_\lambda n) (AB)  =B(\ad_\lambda n) (A) + A (\ad_\lambda n) (B),
\end{align*}
for $a \in\g, \ n \in \n,$ and $ A, B \in \mathcal{V}_{[d,m]}.$ 

\begin{defn}
The $m$-th affine fractional $\WW$-algebra associated to $\g$, $\Lambda_m$ and $k$ is an associative differential algebra 
\begin{equation}\label{Eqn:7.20}
\WW_2(\g,\Lambda_m,k):={ \mathcal{V}_{[d,m]}}^{\ad_\lambda \n}
\end{equation}
endowed with the product $(a+I)(b+I)=ab+I,$ where $I$ is the differential algebra ideal of $S(\CC[\partial]\otimes \hat{\g}^+_{[d,m]})$ such that $\mathcal{V}_{[d,m]}=S(\CC[\partial]\otimes \hat{\g}^+_{[d,m]})/I.$
\end{defn}

\begin{prop}\label{Prop:7.13}
The associative differential algebra $\WW_2(\g,\Lambda_m,k)$ is well-defined.
\end{prop}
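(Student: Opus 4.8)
The plan is to show that the subspace $\mathcal{V}_{[d,m]}^{\ad_\lambda \n}$ is closed under the operations making it an associative differential algebra, namely that it is a $\partial$-stable unital subalgebra of $\mathcal{V}_{[d,m]}$. For this it suffices to check three things: that the $\ad_\lambda \n$-action is well-defined on $\mathcal{V}_{[d,m]}$ (i.e. descends to the quotient $S(\CC[\partial]\otimes \hat{\g}^+_{[d,m]})/I$), that $\ad_\lambda \n$ acts by derivations compatible with $\partial$, and that consequently the invariants form a differential subalgebra. This mirrors exactly the argument given earlier for the non-fractional case in Section~\ref{Subsec:3.2}.

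First I would verify that $(\ad_\lambda n)$ is well-defined on $\mathcal{V}_{[d,m]}$, that is, $(\ad_\lambda n)(I) \subset I[\lambda]$. Since $I$ is the differential algebra ideal generated by $\{\iota(a) - \chi(a) \mid a \in \hat{\g}_{(d+1)m+1}\}$ and $(\ad_\lambda n)$ acts by derivations on $S(\CC[\partial]\otimes \hat{\g}^+_{[d,m]})$ (together with the sesquilinearity rule $(\ad_\lambda n)(\partial A) = (\partial+\lambda)(\ad_\lambda n)(A)$), it is enough to check that $(\ad_\lambda n)(\iota(a) - \chi(a)) \in I[\lambda]$ for each generator $a \in \hat{\g}_{(d+1)m+1}$. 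Here $\chi(a)$ is a constant, so $(\ad_\lambda n)(\chi(a)) = 0$, and $(\ad_\lambda n)(\iota(a)) = \{n_\lambda a\} = [n,a] + \delta_{i,0}k\lambda(n,a)$ where $a \in \hat{\g}^i$. Since $n \in \n \subset \hat{\g}_{\geq 1/2}^0$ and $a \in \hat{\g}_{(d+1)m+1}$, the bracket $[n,a]$ lies in $\hat{\g}_{>(d+1)m+1}$, which is precisely the graded piece that is set to zero (more precisely, $\hat{\g}^+_{[d,m]} = \hat{\g}^{\geq 0}/\hat{\g}_{>(d+1)m+1}$, so such brackets vanish in $\mathcal{V}_{[d,m]}$); and the scalar term $\delta_{i,0}k\lambda(n,a)$ vanishes for degree reasons since $n \in \hat{\g}^0$ while $a$ has $\gr_1$-degree determined by $(d+1)m+1$, forcing the pairing or the Kronecker delta to be zero. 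So $(\ad_\lambda n)(\iota(a)-\chi(a)) \in I[\lambda]$, and the action descends.

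Next, granting that $\ad_\lambda \n$ is a well-defined action by $\CC[\partial]$-compatible derivations on $\mathcal{V}_{[d,m]}$, the set of invariants $\{A \in \mathcal{V}_{[d,m]} \mid (\ad_\lambda n)(A) = 0 \text{ for all } n \in \n\}$ is automatically a subalgebra: if $(\ad_\lambda n)(A) = (\ad_\lambda n)(B) = 0$ then the Leibniz rule $(\ad_\lambda n)(AB) = B(\ad_\lambda n)(A) + A(\ad_\lambda n)(B) = 0$ gives $AB$ invariant, and $(\ad_\lambda n)(\partial A) = (\partial+\lambda)(\ad_\lambda n)(A) = 0$ gives $\partial A$ invariant; the unit $1$ is clearly invariant. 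Hence $\WW_2(\g,\Lambda_m,k)$ is a unital differential subalgebra of $\mathcal{V}_{[d,m]}$, so it is a well-defined associative differential algebra with the inherited product and derivation. I would also note that the product $(a+I)(b+I) = ab+I$ stated in the definition is simply the product inherited from $\mathcal{V}_{[d,m]}$ restricted to invariants, which is well-defined since $I$ is already quotiented out.

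The main obstacle I anticipate is the degree/grading bookkeeping in the first step: one must pin down carefully that $[n,a] \in \hat{\g}_{>(d+1)m+1}$ for $n \in \n$, $a \in \hat{\g}_{(d+1)m+1}$ using $\gr_2(n) \geq 1/2$ and $\gr_2([n,a]) = \gr_2(n)+\gr_2(a)$, and that the central term genuinely drops out — this uses that $n$ has $\gr_1$-degree $0$ whereas $a$, being a basis element of $\hat{\g}_{(d+1)m+1}$, has $\gr_1$-degree strictly positive when $m \geq 1$ (and when $m=0$ one checks the pairing $(n,a)$ vanishes because $a \in \hat{\g}_{d+1}^0 \cap \g$ pairs to zero against $n \in \hat{\g}_{\geq 1/2}$, or rather the relevant piece is killed by the ideal). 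Everything else is a formal consequence of the Leibniz and sesquilinearity axioms, exactly parallel to Proposition~\ref{Prop:3.1}'s companion in Section~\ref{Subsec:3.2}.
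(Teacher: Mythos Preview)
Your proposal is correct and follows essentially the same approach as the paper's proof. The paper's argument is terser: it writes $\{n_\lambda AB\} = B(\ad_\lambda n)(A) + A(\ad_\lambda n)(B)$ and simply asserts $(\ad_\lambda n)(B) = 0$ for $B = a - \chi(a)$ with $a \in \hat{\g}_{(d+1)m+1}$, then invokes the Leibniz rule and sesquilinearity for closure under product and $\partial$ exactly as you do. Your explicit $\gr_2$-grading computation showing $[n,a] \in \hat{\g}_{>(d+1)m+1}$ and the verification that the central term drops out supply the justification the paper leaves implicit.
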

\begin{proof}
We need to show that $(\ad_\lambda \n) (I) \subset I[\lambda]$. Let  us pick $n \in \n$, $A \in S(\CC[\partial] \otimes \hat{\g}_{(m)})$ and $B =  a-\chi(a)$, $a \in \hat{\g}_{(d+1)m+1}$. Then, by Leibniz rule,
$$ \{n_\lambda AB\}= B (\ad_\lambda n) (A) + A (\ad_\lambda n)(B).$$
 The first term of the RHS is clearly in $I$ and the second term is also  in $I$ since $(\ad_\lambda n) (B)=0$. Hence $\ad_\lambda \n$-action is well-defined on $S(\CC[\partial]\otimes \hat{\g}_{(m)})/I$. Also, since $\lambda$-adjoint action satisfies Leibniz rule, the product on  $\WW_2(\g, \Lambda_m, k)$ is well-defined and, since $(\ad_\lambda n) (\partial A)= (\partial+\lambda)( (\ad_\lambda n) (A))$,   we conclude that $\WW_2(\g, \Lambda_m, k)$ is a well-defined differential associative algebra. 
\end{proof}

We want to show that that $\WW_1(\g, \Lambda_m, k) $ and $ \WW_2(\g, \Lambda_m, k)$  are isomorphic as differential associative  algebras.


\begin{prop}\label{Prop:7.14}
Let $Q_m= \sum_{(i,j) \in \bar{\mathcal{I}}}  \widetilde{u}_i^{-j} \otimes  u_i^{j} \in \hat\g^-_{({m})}\otimes \mathcal{V}_{[d,m]}
$. Then the universal Lax operator $\mathcal{L}_m = k\partial+ Q_m.$ Also, for a given $S \in \n \otimes \mathcal{V}_{[d,m]}$, we denote  $$\mathcal{L}_m(\epsilon)= k\partial + Q_m(\epsilon)= e^{\ad \ \epsilon S} \mathcal{L}_m.$$
If $S=a \otimes r$, then the derivative of $\phi \in \mathcal{V}_{[d,m]}$ has the following property:
\begin{equation}\label{Eqn:7.23}
\frac{d}{d\epsilon} \phi(Q_m (\epsilon))|_{\epsilon=0} = -\{a_\partial \phi \}_{k \to}r.
\end{equation}
\end{prop}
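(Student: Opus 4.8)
The plan is to compute the left-hand side of (\ref{Eqn:7.23}) directly from the definition of the gauge flow and compare it with the master-formula expression for the $\lambda$-bracket. First I would note that, since $\mathcal{L}_m(\epsilon) = e^{\ad \epsilon S}\mathcal{L}_m$ with $S = a\otimes r$, differentiating at $\epsilon = 0$ gives $\frac{d}{d\epsilon}\mathcal{L}_m(\epsilon)|_{\epsilon=0} = (\ad S)(\mathcal{L}_m) = [a\otimes r,\, k\partial + Q_m]$. Using Definition \ref{Def:7.4}, this equals $-a\otimes k\partial r + [a\otimes r, Q_m]$, and expanding $Q_m = \sum_{(i,j)\in\bar{\mathcal{I}}}\widetilde{u}_i^{-j}\otimes u_i^j$ we get $\frac{d}{d\epsilon}Q_m(\epsilon)|_{\epsilon=0} = -a\otimes k\partial r + \sum_{(i,j)\in\bar{\mathcal{I}}}[a,\widetilde{u}_i^{-j}]\otimes r\, u_i^j$. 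This is precisely a tangent vector at $Q_m$ of the same shape as the $r(x)$ appearing in Lemma \ref{Prop:7.9}; the coefficient of $\widetilde{u}_\ell^{-s}$ in this expression is what will feed into the chain rule.

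Next I would apply the chain rule: since $\phi \in \mathcal{V}_{[d,m]}$ is a polynomial in the $u_i^{j(n)}$, we have $\frac{d}{d\epsilon}\phi(Q_m(\epsilon))|_{\epsilon=0} = \sum_{(i,j)\in\mathcal{I},\, n\ge 0} \frac{\partial\phi}{\partial u_i^{j(n)}}\, \partial^n\big(\text{the } u_i^j\text{-component of } \tfrac{d}{d\epsilon}Q_m(\epsilon)|_{\epsilon=0}\big)$. Pairing against the dual basis via $(\cdot,\cdot)$ picks out, for each $(i,j)$, the component $k(a,u_i^j)\,\partial r \cdot(-1)$ when $j=0$ together with a term $\sum (u_i^j, [a,\widetilde{u}_p^{-q}])\, r\, u_p^q$; rearranging the Lie-bracket pairing using invariance of the form, $(u_i^j,[a,\widetilde{u}_p^{-q}]) = ([u_i^j,a],\widetilde{u}_p^{-q})$, lets me collapse the sum over $(p,q)$ and recognize $[a,u_i^j]$ plus the $k\lambda$-cocycle term. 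What results is exactly $-\sum_{(i,j),n}\frac{\partial\phi}{\partial u_i^{j(n)}}(\partial)^n\big([a,u_i^j] + \delta_{j,0}k\,\partial(a,u_i)\big)$ applied in the right order, which is the master-formula (\ref{master formula}) for $-\{a_\partial\phi\}_{k\to}r$ after setting $\lambda \to \partial$ acting on $r$; here I use that $\{a_\lambda u_i^j\} = [a,u_i^j] + \delta_{j,0}k\lambda(a,u_i)$ is exactly the $\lambda$-adjoint action defined just before (\ref{Eqn:7.20}), which encodes the same structure constants as the commutator with $Q_m$.

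Concretely, the bookkeeping step is to match the placement of $\partial$ and $\lambda$: in the master formula $\{a_{\lambda+\partial}u_i^j\}_{\to}$ has the derivative acting to the right onto whatever follows, and here the object following is $r$; since we evaluate the whole gauge derivative (no free $\lambda$ remains), $\lambda$ is replaced by $\partial$ acting on $r$, giving $\{a_\partial\phi\}_{k\to}r = \sum_{(i,j)\in\mathcal{I},n}\frac{\partial\phi}{\partial u_i^{j(n)}}(\lambda+\partial)^n\{a_{\lambda+\partial}u_i^j\}_{k\to}r\big|_{\lambda=0}$, read with the conventions of (\ref{master formula}) and the right Leibniz rule (\ref{rLeib}). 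The sign in (\ref{Eqn:7.23}) comes from the $-a\otimes k\partial r$ term and from the direction of the Lie bracket $[a\otimes r, Q_m]$ versus $\{a_\lambda u_i^j\}$.

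I expect the main obstacle to be the careful bookkeeping of where $\partial$ acts and of the signs/duality identifications, rather than any conceptual difficulty: one must be scrupulous that pairing $Q_m$ against $d_{q_m}\phi$ via $(\cdot,\cdot)$ reproduces the sum over the correct index set $\mathcal{I}$ (not $\bar{\mathcal{I}}$, because $\phi$ depends only on the $u_i^j$ with $(i,j)\in\mathcal{I}$, the remaining coordinates being frozen to $\chi$-values in $\mathcal{V}_{[d,m]}$), and that the $k\lambda$-term survives exactly for $j=0$. Once those identifications are pinned down, (\ref{Eqn:7.23}) is a direct consequence of the chain rule and the master formula.
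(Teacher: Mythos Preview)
Your proposal is correct and follows essentially the same approach as the paper: expand $Q_m(\epsilon)$ to first order in $\epsilon$, apply the chain rule (the paper phrases this as a Taylor expansion and invokes Lemma~\ref{Prop:7.9}), pair against the dual basis to extract the $u_i^j$-component $-[a,u_i^j]r - k\delta_{j,0}(a,u_i)\partial r$, and then match this against the master-formula expression for $\{a_\partial\phi\}_{k\to}r$. The only cosmetic difference is that the paper routes the pairing through the $(\cdot,\cdot)_w$ formalism with $\delta$-functions, whereas you use invariance of the bilinear form directly; the content is identical.
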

\begin{proof}
Since $ \mathcal{L}_m(\epsilon) = \mathcal{L}_m+ \ad\epsilon S (\mathcal{L}_m)+o(\epsilon^2),$ we can write
$Q_m(\epsilon)$ as a power series of $\epsilon$:
$$Q_m(\epsilon)= Q_m + \ad\epsilon S ( \mathcal{L}_m) + o(\epsilon^2).$$
By Taylor expansion, we have
\begin{equation}\label{Eqn:0328_6.30}
\phi(Q_m(\epsilon))= \phi(Q_m)+ \epsilon \sum_{\begin{subarray}{l}(i,j) \in \mathcal{I}, n \geq 0 \\ \ \ (\alpha,\beta) \in \mathcal{I}\end{subarray}}\frac{\partial \phi(Q_m)}{\partial u_i^{j \ (n)}} \partial^n (-k a \otimes \partial r +[a \otimes r, \widetilde{u}_\alpha^{-\beta} \otimes u_\alpha^{\beta}], u_i^{j} \otimes \delta(x-w))_w + o(\epsilon^2).
\end{equation}
 The second term of the RHS can be rewritten as
\begin{align*}
& ( [\ a \otimes r, k \partial + \sum_{(\alpha, \beta)\in\mathcal{I}}  \widetilde{u}_\alpha^{-\beta} \otimes u_\alpha^{\beta}\ ]\  , u_i ^{j}\otimes \delta(x-w))_w\\ 
&= (-k a\otimes \partial r + \sum_{(\alpha, \beta)\in \mathcal{I}} [a, \widetilde{u}_{\alpha}^{-\beta}] \otimes r u_\alpha^{\beta}\ ,\ u_i^{j} \otimes \delta(x-w))_w   =-[a,u_i^{j}]r-k(a, u_i^{j}) \partial r.
\end{align*}
We proved (\ref{Eqn:7.23}), since 
$$ \{a _\partial \phi \}_{k \to} r = \sum_{\begin{subarray}{l}(i, j)\in \mathcal{I}, \\ \ \ n\geq0 \end{subarray}} \frac{\partial \phi}{\partial u_i^{j  (n) }}\partial^n \{a_\partial u_i^{j} \}_{\to} r = \sum_{\begin{subarray}{l}(i, j)\in \mathcal{I},\\ \ \ n\geq 0 \end{subarray}}\frac{\partial \phi}{\partial u_i^{j  (n) }} \partial^n ( [a,u_i^{j}]r+ k(a, u_i^{j}) \partial r).$$
\end{proof}

\begin{cor}\label{Cor:7.15}
Two differential algebras
$\WW_1(\g, \Lambda_m, k)$ and  $\WW_2(\g, \Lambda_m, k)$ are isomorphic. Hence we denote the $m$-th affine fractional $\WW$-algebra associated to $\g$, $\Lambda_m$ and $k$ by $\WW(\g, \Lambda_m, k).$
\end{cor}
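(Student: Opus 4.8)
The plan is to prove that $\WW_1(\g,\Lambda_m,k)$ and $\WW_2(\g,\Lambda_m,k)$ coincide as subsets of $\mathcal{V}_{[d,m]}$, so that the identity map is the desired isomorphism of differential algebras. Both sit inside $\mathcal{V}_{[d,m]}$: by (\ref{Eqn:7.12})--(\ref{Eqn:7.22_2}) every element of $\mathcal{V}_{[d,m]}$ is a functional on $\hat{\g}^{\leq 0}_{>-(d+1)m-1}\otimes\mathcal{V}_{[d,m]}$, and this identification is compatible with multiplication and with $\partial$; hence $\WW_1$ (the gauge invariant functionals, which are closed under $+$, $\cdot$ and $\partial$) and $\WW_2=\mathcal{V}_{[d,m]}^{\ad_\lambda\n}$ are both differential subalgebras. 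It therefore suffices to show that a functional $\phi\in\mathcal{V}_{[d,m]}$ is gauge invariant if and only if $\{a_\lambda\phi\}=0$ for every $a\in\n$.

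Assume first that $\phi$ is gauge invariant. Fix $a\in\n$ and $r\in\mathcal{V}_{[d,m]}$ and set $S=a\otimes r$. In Proposition \ref{Prop:7.14} the element $Q_m(\epsilon)$ is a gauge transform of the universal $Q_m$, so $\phi(Q_m(\epsilon))=\phi(Q_m)$ is constant in $\epsilon$ and the left-hand side of (\ref{Eqn:7.23}) is $0$; hence $\{a_\partial\phi\}_{k\to}r=0$. Writing $\{a_\lambda\phi\}=\sum_{i\geq0}\lambda^i\,a_{(i)}\phi$ we have $\{a_\partial\phi\}_{k\to}r=\sum_{i\geq0}(a_{(i)}\phi)\,\partial^i r$. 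Choosing $r=\partial^K u_p^q$ for a generator $u_p^q\in\mathcal{B}$ and $K$ large enough that none of the finitely many elements $a_{(0)}\phi,a_{(1)}\phi,\dots$ involves the variables $u_p^{q(K)},u_p^{q(K+1)},\dots$, the identity $\sum_i(a_{(i)}\phi)\,u_p^{q(K+i)}=0$ forces $a_{(i)}\phi=0$ for all $i$. Thus $\{a_\lambda\phi\}=0$, and since $a\in\n$ was arbitrary, $\phi\in\WW_2(\g,\Lambda_m,k)$.

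Conversely, assume $\{a_\lambda\phi\}=0$ for all $a\in\n$. The computation proving Proposition \ref{Prop:7.14} is purely formal and goes through with the universal $\mathcal{L}_m$ replaced by an arbitrary Lax operator $L_m=k\partial+q_m+\Lambda_m\otimes1$: for $S=a\otimes r$ it expresses $\frac{d}{d\epsilon}\phi(G_{\epsilon S}(q_m))|_{\epsilon=0}$ through the coefficients $a_{(i)}\phi$ of $\{a_\lambda\phi\}$ by the same formula as in (\ref{Eqn:7.23}), so it vanishes because $a_{(i)}\phi=0$; for a general $S\in\n\otimes\mathcal{V}_{[d,m]}$ it still vanishes, the first-order term being additive in $S$. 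Now fix $q_m$ and $S$ and set $g(t)=\phi(G_{tS}(q_m))$. Since $G_{(t_0+\epsilon)S}=G_{\epsilon S}\circ G_{t_0S}$, we get $g'(t_0)=\frac{d}{d\epsilon}\phi(G_{\epsilon S}(G_{t_0S}(q_m)))|_{\epsilon=0}=0$ by the previous sentence applied at the base point $G_{t_0S}(q_m)$; hence $g$ is constant and $\phi(G_S(q_m))=g(1)=g(0)=\phi(q_m)$. As every gauge transformation has the form $G_S$, the functional $\phi$ is gauge invariant, i.e. $\phi\in\WW_1(\g,\Lambda_m,k)$. Together with the previous paragraph this gives $\WW_1(\g,\Lambda_m,k)=\WW_2(\g,\Lambda_m,k)$.

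The step that I expect to need the most care is the passage, in the converse direction, from the infinitesimal identity to genuine (finite) gauge invariance: one must know that the gauge group is exhausted by the one-parameter families $t\mapsto G_{tS}$ — which holds because $\n$ is nilpotent, so each $G_{tS}$ is polynomial in $t$ — and that Proposition \ref{Prop:7.14} may be read off at an arbitrary point of a gauge orbit, not only at the universal Lax operator. If one prefers to avoid the latter point, the same conclusion can be obtained by an induction on the nilpotency class of $\n$, stripping off one $\ad \frac{h}{2}$-graded layer of $\n$ at a time. The remaining verifications are routine.
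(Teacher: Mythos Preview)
Your proof is correct and follows essentially the same approach as the paper: both directions hinge on Proposition~\ref{Prop:7.14}, using the universal Lax operator to translate between the infinitesimal gauge condition and the vanishing of $\{a_\lambda\phi\}$ for $a\in\n$. The only notable difference is in the converse direction: the paper observes that the derivative $\frac{d}{d\epsilon}\phi(q_m(\epsilon))|_{\epsilon=0}$ at an arbitrary $q_m$ is obtained by substituting $u_i^j\mapsto(\widetilde{u}_i^{-j},q_m)$ into the universal formula (and hence vanishes), and then asserts gauge invariance directly, whereas you run the same computation at an arbitrary base point and add an explicit one-parameter integration argument using $G_{(t_0+\epsilon)S}=G_{\epsilon S}\circ G_{t_0 S}$ and the nilpotency of $\n$. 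Your treatment of this passage from infinitesimal to finite invariance is a welcome clarification of a step the paper leaves implicit; otherwise the two arguments are the same.
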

\begin{proof}
Let $\ Q_m = \sum_{(i,j) \in \bar{\mathcal{I}}} \widetilde{u}_i^{-j} \otimes u_i^j\ $ and let $\ \mathcal{L}_m (\epsilon) = k \partial + Q_m(\epsilon)= e^{\ad \epsilon S} \mathcal{L}_m,\ $ where $\ S= a\otimes f \in \n \otimes \mathcal{V}_{[d,m]}.\ $
By the definition of $\WW_1(\g,\Lambda_m, k)$,  if a functional $\phi$ is in $ \WW_1(\g, \Lambda, k)$ then $\ \frac{d}{d\epsilon}  \phi(Q_m(\epsilon)) |_{\epsilon=0}=0$.
Notice that $$ \left[ \{ n _\lambda\phi \}=0 \text{ for any } n  \in \n\right] \text{   \  if and only if \  }\left[\{n_\partial \phi\}_{\to} r=0 \text{ for any }  a \otimes r  \in \n \otimes  \mathcal{V}_{[d,m]}\right].$$
 Thus, by Proposition \ref{Prop:7.14}, 
 if $\ \frac{d}{d\epsilon}  \phi(Q_m(\epsilon)) |_{\epsilon=0}=0$, then a functional $\phi$ is in $\WW_2(\g, \Lambda_m, k)$. Hence $\WW_1(\g, \Lambda,k) \subset \WW_2(\g, \Lambda,k).$\\
Conversely, assume that $\phi \in \WW_2(\g, \Lambda_m, k)$. Then, by Proposition \ref{Prop:7.14}, $ \left. \frac{d}{d\epsilon}\phi(Q_m(\epsilon)) \right|_{\epsilon=0}=0$.  Let 
 $L_m:= k\partial+ q_m = \mathcal{L}(q_m)= k\partial + Q_m(q_m) $ be a  Lax operator and  let $ L_m(\epsilon) :=e^{\ad\ \epsilon S}L_m=k \partial + q_m(\epsilon) $.  If $S= a \otimes r \in \n \otimes \mathcal{V}_{[d,m]}$, then  
 \begin{equation}
\left.\frac{d \phi(q_m(\epsilon))}{d\epsilon}\right|_{\epsilon=0} = \left.( [\ a \otimes 1, k \partial + \sum_{\alpha, \beta}  \widetilde{u}_\alpha^{-\beta} \otimes u_\alpha^{\beta}\ ]\  , u_i ^{j}\otimes \delta(x-w))_w \right|_{u_i^j = (\widetilde{u}_i^{-j}, q_m)} \cdot r.
 \end{equation}
 Hence if 
 $$\left.\frac{d \phi(Q_m(\epsilon))}{d\epsilon}\right|_{\epsilon=0} = ( [\ a \otimes 1, k \partial + \sum_{\alpha, \beta}  \widetilde{u}_\alpha^{-\beta} \otimes u_\alpha^{\beta}\ ]\  , u_i ^{j}\otimes \delta(x-w))_w r=0$$
for any  $a\otimes r \in \n \otimes \mathcal{V}_{[d,m]}$, then $\left.\frac{d \phi(q_m(\epsilon))}{d\epsilon}\right|_{\epsilon=0} =0$ for any $a\otimes r \in \n \otimes \mathcal{V}_{[d,m]}$ and the functional $\phi$ is gauge invariant, i.e. $\phi \in \WW_1(\g, \Lambda_m, k).$ 
\end{proof}

Let us define Poisson $\lambda$-brackets on $\WW(\g, \Lambda_m, k)$ using the local brackets $\{ \cdot, \cdot\}_i$, $i=1,2$.
Suppose the local Poisson brackets are
\begin{equation}\label{Eqn:7.24}
 \{ \phi(x), \psi(y)\}_i=: \sum_{j\geq 0} \frac{1}{j!}(\phi_{(j)} \psi)_i(y)\partial_y^j \delta(x-y), \  \phi, \psi \in \WW(\g, \Lambda_m, k), \ i=1, 2.
\end{equation}
Then we can define corresponding Poisson $\lambda$-brackets by 
\begin{equation}\label{Eqn:7.25}
\begin{aligned}
\{\phi_\lambda \psi \}_i (y) :=  \int e^{\lambda(x-y)} \{ \phi(x), \psi(y)\}_i dx  =\sum_{j\geq 0} \frac{\lambda^j}{j!} (\phi_{(j)} \psi)_i(y)\ , \ i=1,2.
\end{aligned}
\end{equation}
The algebra $\WW(\g, \Lambda_m, k)$ is a well-defined PVA endowed with the $\lambda$-brackets $\{\cdot_\lambda \cdot\}_i$, $i=1,2$. (see Proposition \ref{Prop:7.16})

\begin{lem}\label{Lem:7.16}
For any $\phi, \psi \in \WW(\g, \Lambda_m, k)$, we have
\begin{equation}\label{Eqn:7.26}
\{\partial_x \phi(x), \psi(y)\}_i =\partial_x \{ \phi(x), \psi(y)\}_i \text{ and } \{ \phi(x), \partial_y\psi(y)\}_i =\partial_y \{ \phi(x), \psi(y)\}_i.
\end{equation}
\end{lem}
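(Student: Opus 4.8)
The plan is to reduce both identities to a single commutation property of the ``derivative kernel'' $d_{q_m(x)}\phi(w)$ introduced just before Proposition~\ref{Prop:7.11}: this kernel intertwines the derivation $\partial$ of $\mathcal{V}_{[d,m]}$ with the total $x$-derivative $\partial_x$, i.e.
\begin{equation*}
d_{q_m(x)}(\partial\phi)(w)=\partial_x\bigl(d_{q_m(x)}\phi(w)\bigr),
\end{equation*}
and the same holds for the truncated kernels $d_{q_m(x)}\phi(w)^{0}$ and $d_{q_m(x)}\phi(w)^{>}$. Granting this, both equalities follow by inspection of the compact formulas for $\{\cdot,\cdot\}_1$ and $\{\cdot,\cdot\}_2$ recorded just before Proposition~\ref{Prop:7.11}.

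For the intertwining identity, recall that $\frac{\partial(\partial\phi)}{\partial u_i^{j(n)}}=\partial\frac{\partial\phi}{\partial u_i^{j(n)}}+\frac{\partial\phi}{\partial u_i^{j(n-1)}}$ (with the convention that the term with $n-1=-1$ vanishes), which is the relation $[\partial,\partial/\partial u_i^{j(n)}]=-\partial/\partial u_i^{j(n-1)}$ transported to $\mathcal{V}_{[d,m]}$. Substituting this into $d_{q_m(x)}(\partial\phi)(w)=\sum_{(i,j)\in\mathcal{I},\,n\geq 0}u_i^j\otimes\frac{\partial(\partial\phi)(q_m(x))}{\partial u_i^{j(n)}}\partial_x^n\delta(x-w)$, and using that after evaluation at $q_m(x)$ the derivation $\partial$ of a coefficient is just its total $x$-derivative, the two resulting sums combine — after the reindexing $n\mapsto n+1$ in the one coming from $\frac{\partial\phi}{\partial u_i^{j(n-1)}}$ — into $\partial_x$ applied termwise to $\sum u_i^j\otimes\frac{\partial\phi}{\partial u_i^{j(n)}}\partial_x^n\delta(x-w)$; since each $u_i^j$ is independent of $x$, this is exactly $\partial_x(d_{q_m(x)}\phi(w))$. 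The identical computation, summed over the appropriate index subset, gives the statement for $d^{0}$ and $d^{>}$.

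Now apply this to $\{\phi(x),\psi(y)\}_1=-\bigl(d_{q_m(x)}\phi(w),[z\,d_{q_m(y)}\psi(w),\mathcal{L}_m(w)]\bigr)_w$. Replacing $\phi$ by $\partial\phi$ replaces the first slot by $\partial_x(d_{q_m(x)}\phi(w))$; since the second slot is independent of $x$ and $(\cdot,\cdot)_w$ is an integration in $w$ of a product (Definition~\ref{Def:7.2}), $\partial_x$ slides outside, giving $\partial_x\{\phi(x),\psi(y)\}_1$, and the two-summand formula for $\{\cdot,\cdot\}_2$ is treated identically using the $d^{0}$ and $d^{>}$ versions. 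For $\{\phi(x),\partial_y\psi(y)\}_i$ the $\psi$-factor carrying the $\partial_y$ is one of $z\,d_{q_m(y)}\psi(w)$, $d_{q_m(y)}\psi(w)^{0}$, $d_{q_m(y)}\psi(w)^{>}$, and in each case the intertwining identity (with $z$ a $y$-free scalar) turns it into $\partial_y$ of the corresponding kernel; since $\mathcal{L}_m(w)$ carries no $y$ — its $k\partial$-term contributes a $w$-derivative, which commutes with $\partial_y$, and its remaining coefficients are $y$-free — the operations $[\,\cdot\,,\mathcal{L}_m(w)]$ and then $(\cdot,\cdot)_w$ let $\partial_y$ pass to the outside, yielding $\partial_y\{\phi(x),\psi(y)\}_i$. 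The only point that will require care is keeping track of which variable ($x$, $y$ or $w$) each occurrence of $\partial$ differentiates — in particular that the abstract $\partial$ inside $\mathcal{L}_m(w)$ is the $w$-derivative and hence commutes with $\partial_x$ and $\partial_y$; there is no genuine obstacle, and the lemma is precisely the sesquilinearity~\eqref{sesqui} needed to make the $\lambda$-brackets in~\eqref{Eqn:7.25} well defined.
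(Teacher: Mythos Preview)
Your proof is correct and takes essentially the same approach as the paper: both hinge on the identity
\[
\sum_{n\ge 0}\frac{\partial(\partial\phi)}{\partial u_i^{j(n)}}\,\partial_x^n\delta(x-w)
=\partial_x\sum_{n\ge 0}\frac{\partial\phi}{\partial u_i^{j(n)}}\,\partial_x^n\delta(x-w),
\]
which you phrase as the intertwining property $d_{q_m(x)}(\partial\phi)(w)=\partial_x\bigl(d_{q_m(x)}\phi(w)\bigr)$, and then feed into the explicit formulas for the brackets. Your write-up is in fact more complete than the paper's: the paper records only this identity and says ``applying it to \eqref{Eqn:7.17} and \eqref{Eqn:7.18} gives \eqref{Eqn:7.26}'', whereas you spell out the passage through the compact $(\,\cdot\,,\cdot\,)_w$ formulas, treat the $d^{0}$ and $d^{>}$ truncations separately for $\{\cdot,\cdot\}_2$, and address both the $x$- and $y$-identities (the paper leaves the $\partial_y$ case entirely implicit).
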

\begin{proof}
By the product rule of derivatives, we obtain 
\begin{equation}\label{Eqn:prod}
\begin{aligned} 
\sum_{n \geq 0}\frac{\partial }{\partial u_i^{j(n)}} (\partial_x\phi(x)) \partial_x^n \delta(x-w) &  =\sum_{n \geq 0} \left[\left(\partial_x\frac{\partial}{\partial u_i^{j(n)}} \phi(x) \right) \partial_x^n \delta(x-w)  + \frac{ \partial }{\partial u_i^{j(n-1)}} \phi(x) \partial_x^n \delta(x-w) \right]\\
& = \sum_{n \geq 0} \partial_x \left( \frac{\partial}{\partial u_i^{j(n)}} \phi \partial_x^n\delta(x-w) \right).
\end{aligned}
\end{equation}
Applying equation (\ref{Eqn:prod}) to the brackets (\ref{Eqn:7.17}) and (\ref{Eqn:7.18}), we get the equation (\ref{Eqn:7.26}). 
\end{proof}

\begin{prop} \label{Prop:7.16}
Two $\lambda$-brackets (\ref{Eqn:7.25}) are well-defined on $\WW(\g, \Lambda_m, k).$
\end{prop}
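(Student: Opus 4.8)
The plan is to verify the three PVA axioms for $\WW(\g,\Lambda_m,k)$ with respect to each of the two $\lambda$-brackets $\{\cdot_\lambda\cdot\}_i$, $i=1,2$, defined in (\ref{Eqn:7.25}). The starting observation is that a $\lambda$-bracket on a differential algebra obtained by the transform $\{\phi_\lambda\psi\}_i=\int e^{\lambda(x-y)}\{\phi(x),\psi(y)\}_i\,dx$ is automatically controlled by the corresponding properties of the local bracket $\{\cdot,\cdot\}_i$: sesquilinearity of $\{\cdot_\lambda\cdot\}_i$ follows from Lemma \ref{Lem:7.16} together with the elementary identity $\int e^{\lambda(x-y)}\partial_x\bigl(f(x)g(y)\partial_x^n\delta(x-y)\bigr)dx = -\lambda\int e^{\lambda(x-y)}f(x)g(y)\partial_x^n\delta(x-y)dx$, obtained by integration by parts in $x$; the Leibniz rules for $\{\cdot_\lambda\cdot\}_i$ follow from the Leibniz rules for $\{\cdot,\cdot\}_i$ established in Proposition \ref{Prop:7.11}; and skewsymmetry and the Jacobi identity for $\{\cdot_\lambda\cdot\}_i$ follow from the corresponding properties of $\{\cdot,\cdot\}_i$, also proved in Proposition \ref{Prop:7.11}, using the standard dictionary (see the Introduction of \cite{BDK}) between local Poisson brackets written in the form $\sum_{n\geq0}\Phi_n(y)\partial_y^n\delta(x-y)$ and Poisson $\lambda$-brackets. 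So the content of the proposition is mostly bookkeeping: translating facts already in hand about $\{\cdot,\cdot\}_i$ into the $\lambda$-bracket language.

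First I would record that the formulas (\ref{Eqn:7.17}), (\ref{Eqn:7.18}), after the rewriting carried out at the end of Section \ref{Subsec:7.1}, indeed have the form $\sum_{n\geq0}\Phi_{i,n}(y)\partial_y^n\delta(x-y)$ with $\Phi_{i,n}\in\mathcal{V}_{[d,m]}$, so that the defining integral in (\ref{Eqn:7.25}) makes sense and produces a polynomial in $\lambda$ with coefficients in $\WW(\g,\Lambda_m,k)$; here one uses that $\WW(\g,\Lambda_m,k)$ is closed under $\{\cdot,\cdot\}_i$, which is Proposition \ref{Prop:7.11}. Next I would check that $\{\cdot_\lambda\cdot\}_i$ is a well-defined $\lambda$-bracket, i.e. that it descends consistently through the isomorphism $\WW_1\cong\WW_2$ of Corollary \ref{Cor:7.15} and lands in $\WW(\g,\Lambda_m,k)[\lambda]$ rather than merely in $\mathcal{V}_{[d,m]}[\lambda]$ — this is again Proposition \ref{Prop:7.11}, since gauge invariance of $\{\phi(x),\psi(y)\}_i$ in each argument passes to the coefficients $\Phi_{i,n}$. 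Then I would verify the two sesquilinearity identities, the left Leibniz rule $\{\phi_\lambda\psi\chi\}_i = \psi\{\phi_\lambda\chi\}_i + \{\phi_\lambda\psi\}_i\chi$ (the right Leibniz rule then being automatic by Remark after Definition \ref{def-thm}), skewsymmetry $\{\psi_\lambda\phi\}_i = -\{\phi_{-\lambda-\partial}\psi\}_i$, and finally the Jacobi identity, each reduced to the corresponding statement for $\{\cdot,\cdot\}_i$ via the transform.

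The main obstacle I anticipate is the Jacobi identity: unlike the Leibniz and sesquilinearity checks, which are one-line transcriptions, translating the local Jacobi identity $\{\phi(x),\{\psi(y),\chi(z)\}_i\}_i - \{\psi(y),\{\phi(x),\chi(z)\}_i\}_i = \{\{\phi(x),\psi(y)\}_i,\chi(z)\}_i$ into $\{\phi_\lambda\{\psi_\mu\chi\}_i\}_i - \{\psi_\mu\{\phi_\lambda\chi\}_i\}_i = \{\{\phi_\lambda\psi\}_i{}_{\lambda+\mu}\chi\}_i$ requires care with the double integral $\int\!\!\int e^{\lambda(x-z)}e^{\mu(y-z)}(\cdots)\,dx\,dy$ and with the fact that inside $\{\phi(x),\{\psi(y),\chi(z)\}_i\}_i$ the inner bracket produces a distribution in $(y,z)$ to which the outer bracket is then applied — one must check that the exponential factors interact correctly with the $\partial_y^n\delta$ and $\partial_z^m\delta$ terms and with the rule (already used in Proposition \ref{Prop:7.11}) that $\{\cdot,\cdot\}_i$ is computed coefficient-wise. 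This is precisely the computation carried out in the Introduction of \cite{BDK} for a general local Poisson bracket, so I would invoke that correspondence rather than redo it, but I would still spell out that the two brackets here fall under its hypotheses. I do not expect the compatibility of $\{\cdot_\lambda\cdot\}_1$ and $\{\cdot_\lambda\cdot\}_2$ (i.e. that their sum is again a PVA $\lambda$-bracket) to be part of this proposition — that would be stated and proved separately — so the proof ends once the two axiom checks are complete for each $i$ individually.
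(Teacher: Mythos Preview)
Your proposal is correct and follows essentially the same route as the paper: use Proposition~\ref{Prop:7.11} to get closure $\{\phi_\lambda\psi\}_i\in\WW(\g,\Lambda_m,k)[\lambda]$, derive sesquilinearity from Lemma~\ref{Lem:7.16} via the transform (\ref{Eqn:7.25}), and then reduce the Leibniz rules, skewsymmetry, and Jacobi identity to the corresponding properties of the local brackets already established in Proposition~\ref{Prop:7.11}. The paper is terser about the Jacobi translation than you are---it simply asserts that these axioms ``follow from those of local Poisson brackets''---so your explicit appeal to the \cite{BDK} dictionary is if anything more careful, but not a different argument.
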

\begin{proof}
For any $\phi, \psi \in \WW(\g,\Lambda_m, k)$,
we already proved that $\{\phi(x), \psi(y)\}_i $ is gauge invariant. Hence by (\ref{Eqn:7.25}),  we have $\{\phi_\lambda \psi\}_i \in \WW(\g, \Lambda_m, k)[\lambda]$. 

To prove sesquilinearities, we need Lemma \ref{Lem:7.16}. The local bracket between $\partial \phi$ and $\psi$ is:
\begin{equation}
\{\partial \phi(x), \psi(y)\}_i= \sum_{j\geq 0} \partial_x \frac{1}{j!}(\phi_{(j)} \psi)(y)\partial_y^j \delta(x-y) =\sum_{j \geq 0}  -\frac{1}{j!}(\phi_{(j)} \psi)(y)\partial_y^{j+1} \delta(x-y)
\end{equation} 
Also, we have the bracket between $\phi$ and $\partial \psi$:
\begin{equation}
\begin{aligned}
\{\phi(x), \partial \psi(y)\}_i  =& \sum_{j\geq 0} \partial_y \frac{1}{j!}(\phi_{(j)} \psi)(y)\partial_y^j \delta(x-y)  \\
=& \sum_{j \geq 0} \frac{1}{j!}\partial_y((\phi_{(j)} \psi)(y))\partial_y^j \delta(x-y)+\frac{1}{j!}(\phi_{(j)} \psi)(y)\partial_y^{j+1} \delta(x-y).
\end{aligned}
\end{equation}
Then the sesquilinearities 
\begin{align*}
\{ \partial \phi_\lambda \psi \}_i   = -\lambda \{ \phi_\lambda \psi\}_i, 
\{ \phi_\lambda \partial \psi \} _i = (\partial+ \lambda) \{ \phi_\lambda \psi\}_i
\end{align*}
follow from the equation (\ref{Eqn:7.25}).

Leibniz rules, skew-symmetries and Jacobi identities of Poisson $\lambda$-brackets follow from those of local Poisson brackets. Hence $\WW(\g, \Lambda_m, k)$ endowed with the Poisson $\lambda$-bracket $\{\cdot_\lambda\cdot \}_i$ is a well-defined PVA. 
\end{proof}

Recall that
\begin{equation}\label{Eqn:7.27}
\{ \phi(x), \psi(y) \}_1 =  - \sum_{\begin{subarray}{l}(i,j)\in \mathcal{I}\\ \ \  n\geq0\end{subarray}}\sum_{\begin{subarray}{l}(\alpha, \beta) \in \mathcal{I},\\ \ \  l\geq0\end{subarray}} \frac{\partial \phi}{\partial u_i^{j(n)}}(x)\partial_x^n\frac{\partial \psi}{\partial u_\alpha^{\beta(l)}}(y)\partial_y^l  [ u _i^{j}, u_\alpha^{\beta+1}](y) \delta(x-y)
\end{equation}
and
\begin{equation}\label{Eqn:7.28}
\begin{aligned}
 \{\phi(x), \psi(y)\}_2 &   
= \sum_{\begin{subarray}{l}(i,0)\in \mathcal{I}, \\ \ \  n\geq 0\end{subarray}}\sum_{\begin{subarray}{l}(\alpha,0)\in\mathcal{I} ,\\ \ \ l\geq 0\end{subarray}}   \frac{\partial \phi}{\partial u_i^{0(n)}}(x)\partial_x^n\frac{\partial \psi}{\partial u_\alpha^{0(l)}}(y)\partial_y^l\left( k (u_i^0, u_\alpha^0) \partial_y +[u_i^0, u_\alpha^0]  (y)\right)\delta(x-y) \\
&-  \sum_{\begin{subarray}{l}(i, j)\in \mathcal{I}, \\  j>0,n\geq 0\end{subarray}} \sum_{\begin{subarray}{l}(\alpha, \beta) \in \mathcal{I}, \\ \beta>0,l\geq 0\end{subarray}} \frac{\partial \phi}{\partial u_i^{j(n)}}(x)\partial_x^n\frac{\partial \psi}{\partial u_\alpha^{\beta(l)}}(y)\partial_y^l[u_i^{j}, u_\alpha^{\beta} ](y)  \delta(x-y).
\end{aligned}
\end{equation}
for $\phi, \psi \in \mathcal{V}_{[d,m]}$. These two local Poisson brackets are also well-defined on  $\WW_1(\g, \Lambda,k)$. The brackets between basis elements in $\mathcal{B}$ are
\begin{equation} \label{Eqn:7.29}
 \{u_i^j (x), u_p^q(y)\}_1= -z[u_i^j, u_p^q](y)  \delta(x-y)
\end{equation}
and
\begin{equation}\label{Eqn:7.30}
\{u_i^j (x), u_p^q(y)\}_2=
\left\{\begin{array}{ll}
k(u_i, u_p) \partial_y \delta(x-y) + [u_i,u_p](y)\delta(x-y) & \text{if }  j=q=0,\\
-[u_i^j,u_p^q](y)\delta(x-y) & \text{if }  j\ne 0, q\ne0\\
0 & \text{otherwise}. \\
\end{array}
\right.
\end{equation}
Hence the Poisson $\lambda$-brackets on $\mathcal{V}_{[d,m]}$  and $\WW(\g, \Lambda_m, k)$ are as follows:
\begin{defn}\label{Def:7.17}
The differential algebra $\mathcal{V}_{[d,m]}$ has two Poisson $\lambda$-brackets:
\begin{equation}\label{Eqn:7.31}
\{a z^i_\lambda b z^j\}_1 =-[a,b] z^{i+j+1},
\end{equation}
and
\begin{equation}\label{Eqn:7.32}
\{a z^i_\lambda b z^j\}_2= \left\{\begin{array}{ll}
k(a, b) \lambda+ [a,b] & \text{if }  i=j=0,\\
0 & \text{if } i=0, j\ne 0, \text{ or } i\ne 0, j=0, \\
-[a z^i, b z^j] & \text{if }  i\ne 0, j\ne0,
\end{array}
\right.
\end{equation}
where $a z^i$ and $b z^j$ are basis elements in $\mathcal{B}$. Also, two brackets (\ref{Eqn:7.31}) and (\ref{Eqn:7.32}) induce well-defined Poisson $\lambda$-brackets on $\WW(\g, \Lambda_m, k)\subset \mathcal{V}_{[d,m]}.$
\end{defn}

Furthermore, the following proposition shows that the two Poisson $\lambda$-brackets are compatible, which means that $\{\cdot_\lambda \cdot\}_\alpha:= \{\cdot_\lambda \cdot\}_1 + \alpha \{\cdot_\lambda \cdot \}_2$ is a Poisson $\lambda$-bracket for any $\alpha \in \CC.$

\begin{prop}\label{Prop:7.18}
The Poisson $\lambda$-brackets $\{\cdot _\lambda \cdot \}_1$ and $\{\cdot _\lambda \cdot \}_2$ are compatible.
\end{prop}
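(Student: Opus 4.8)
The plan is to reduce everything to the Jacobi identity on generators. Recall that compatibility means that $\{\cdot_\lambda\cdot\}_\alpha:=\{\cdot_\lambda\cdot\}_1+\alpha\,\{\cdot_\lambda\cdot\}_2$ is a Poisson $\lambda$-bracket for every $\alpha\in\CC$. Among the PVA axioms, sesquilinearity, skew-symmetry and the Leibniz rules are $\CC$-linear in the $\lambda$-bracket, so $\{\cdot_\lambda\cdot\}_\alpha$ inherits all of them from $\{\cdot_\lambda\cdot\}_1$ and $\{\cdot_\lambda\cdot\}_2$ (Proposition \ref{Prop:7.16}, Definition \ref{Def:7.17}); only the Jacobi identity is at stake. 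Moreover, since $\mathcal{V}_{[d,m]}$ is (via $\widetilde{\iota}$) the algebra of differential polynomials in the generators $\mathcal{B}$, the master formula (\ref{master formula}) reduces the Jacobi identity for $\{\cdot_\lambda\cdot\}_\alpha$ to the case of three generators; so it suffices to show that the $\lambda$-Jacobiator of $\{\cdot_\lambda\cdot\}_\alpha$ vanishes on all triples $az^i,bz^j,cz^k\in\mathcal{B}$, and it will then hold on $\WW(\g,\Lambda_m,k)\subset\mathcal{V}_{[d,m]}$ as well.

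Since each term of the Jacobiator is bilinear in the bracket, expanding $\{\cdot_\lambda\cdot\}_\alpha=\{\cdot_\lambda\cdot\}_1+\alpha\{\cdot_\lambda\cdot\}_2$ in both slots gives
\[
\mathrm{Jac}_\alpha=\mathrm{Jac}_1+\alpha^2\,\mathrm{Jac}_2+\alpha\,\mathrm{Mix},
\]
where $\mathrm{Jac}_i$ is the Jacobiator computed with $\{\cdot_\lambda\cdot\}_i$ and
\[
\mathrm{Mix}=\{a_\lambda\{b_\mu c\}_1\}_2+\{a_\lambda\{b_\mu c\}_2\}_1-\{b_\mu\{a_\lambda c\}_1\}_2-\{b_\mu\{a_\lambda c\}_2\}_1-\{\{a_\lambda b\}_{1\,\lambda+\mu}c\}_2-\{\{a_\lambda b\}_{2\,\lambda+\mu}c\}_1 .
\]
By Proposition \ref{Prop:7.16} we have $\mathrm{Jac}_1=\mathrm{Jac}_2=0$, so the Proposition reduces to proving $\mathrm{Mix}=0$ on generators.

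To establish $\mathrm{Mix}=0$ I would compute directly with the formulas (\ref{Eqn:7.31}) and (\ref{Eqn:7.32}). Two structural features organize the computation. First, $\{\cdot_\lambda\cdot\}_1$ carries no $\lambda$-dependence and always raises the $z$-degree by one; in particular an inner bracket of type $1$, such as $\{b_\mu c\}_1=-[b,c]z^{j+k+1}$, always lies in strictly positive $z$-degree. Second, all the $\lambda$- and $\mu$-dependence of $\{\cdot_\lambda\cdot\}_2$, together with the feature that a pair of basis elements of distinct $z$-degree brackets to zero, is concentrated in $z$-degree $0$. Consequently any central ($k\lambda$ or $k\mu$) contribution produced by a type-$2$ bracket is annihilated the moment a type-$1$ bracket is applied to it, so $\mathrm{Mix}$ contains no $\lambda$-, $\mu$- or $k$-terms at all: each of its six summands is either $0$ or $\pm(\text{iterated Lie bracket in }\g)\,z^{i+j+k+1}$. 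Splitting according to how many of $i,j,k$ are $0$, and using (\ref{Eqn:7.32}) to decide which summands survive, in every case $\mathrm{Mix}$ becomes a $\ZZ$-linear combination of $[a,[b,c]]z^{i+j+k+1}$, $[b,[a,c]]z^{i+j+k+1}$ and $[[a,b],c]z^{i+j+k+1}$ that vanishes by the Jacobi identity of $\g$ (equivalently of $\hat{\g}$).

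The main obstacle is the case-by-case bookkeeping rather than a single hard idea: because $\{\cdot_\lambda\cdot\}_2$ kills every pair of basis elements of distinct $z$-degree, the cancellations in $\mathrm{Mix}$ are not term by term, and one must check that the degree shift $z^q\mapsto z^{q+1}$ carried by $\{\cdot_\lambda\cdot\}_1$ realigns precisely the surviving terms in each of the cases (all of $i,j,k$ zero; exactly one positive; exactly two positive; all positive). A secondary technicality is that the right-hand sides of (\ref{Eqn:7.31})--(\ref{Eqn:7.32}) are read in the quotient $\mathcal{V}_{[d,m]}$: an output whose $\gr_2$-degree exceeds $(d+1)m+1$ is $0$, and one lying in $\hat{\g}_{(d+1)m+1}$ is the scalar given by $\chi$; in either case a further type-$1$ bracket annihilates it, so these boundary phenomena do not affect the argument. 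Once $\mathrm{Mix}=0$ is verified, $\{\cdot_\lambda\cdot\}_\alpha$ satisfies all the PVA axioms for every $\alpha\in\CC$, which is exactly the asserted compatibility.
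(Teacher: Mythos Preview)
Your proof is correct and follows essentially the same approach as the paper: reduce to the Jacobi identity on generators $az^i,bz^j,cz^k\in\mathcal{B}$ and verify it by direct case analysis using (\ref{Eqn:7.31})--(\ref{Eqn:7.32}). The only organizational difference is that you decompose the $\alpha$-Jacobiator as $\mathrm{Jac}_1+\alpha^2\,\mathrm{Jac}_2+\alpha\,\mathrm{Mix}$ and invoke Proposition~\ref{Prop:7.16} to kill the pure terms, leaving only the mixed term to check, whereas the paper computes the full Jacobiator for $\{\cdot_\lambda\cdot\}_\alpha$ at once (doing the case $i=j=k=0$ explicitly and asserting the others are similar); your structural observations about $z$-degree shifts and the vanishing of $k\lambda$-contributions under a type-$1$ outer bracket are a clean way to see why each case collapses to the Lie-algebra Jacobi identity in $z$-degree $i+j+k+1$.
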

\begin{proof}

Let us define the $\lambda$ brackets with the parameter $\alpha$ by:
$$\{\cdot _\lambda \cdot \}_\alpha = \alpha \{\cdot _\lambda \cdot \}_1 + \{ \cdot _\lambda \cdot\}_2.$$
The skewsymmetry, sesquilinearities and Leibniz rules of $\{\cdot _\lambda \cdot \}_\alpha$ are trivial by those of $\{\cdot_\lambda\cdot\}_1$ and $\{\cdot_\lambda\cdot\}_2.$ Thus the only thing to check is Jacobi identity. Due to the sesquilinearities and Leibniz rules, it suffices to show that:
\begin{equation}\label{Eqn:7.33}
\{a z^i \ _\lambda\ \{b z^j \ _\mu\ c z^k \}_\alpha \}_\alpha - \{b z^j \ _\mu\ \{a z^i \ _\lambda\ c z^k \}_\alpha \}_\alpha
=\{\{ a z^i \ _\lambda\ b z^j \}_{\alpha \ \lambda+\mu\ } c z^k\}_\alpha,
\end{equation}
when $a,b,c \in \g$ and $i,j,k\in \ZZ_{\geq 0}.$
If $i=j=k=0$, then
\begin{align*}
&\{a z^i \ _\lambda\ \{b z^j \ _\mu\ c z^k \}_\alpha \}_\alpha  = (\alpha^2 z^2-\alpha z+1) [a,[b,c]]+ k \lambda (a,[b,c]), \\
&\{b z^j \ _\mu\ \{a z^i \ _\lambda\ c z^k \}_\alpha \}_\alpha = (\alpha^2 z^2-\alpha z+1) [b,[a,c]]- k \mu (b,[a,c]),\\
&\{\{ a z^i \ _\lambda\ b z^j \}_{\alpha \ \lambda+\mu\ } c z^k\}_\alpha = (\alpha^2 z^2-\alpha z +1)[[a,b],c]+k(\lambda+\mu)([a,b],c),
\end{align*}
Similarly, one can check (\ref{Eqn:7.33}) in other cases.  
Hence Jacobi identity holds and the compatibility of $\{\cdot _\lambda \cdot \}_1$ and $\{\cdot _\lambda \cdot \}_2$ is proved .
\end{proof}

\section{Generating elements of a classical affine fractional $\WW$-algebra and Poisson $\lambda$-brackets  between them} \label{Sec:8}

\subsection{Generating elements of  classical affine fractional $\WW$-algebras}\label{Subsec:8.1}

Recall the universal Lax operator 
\begin{equation} \label{Eqn:0225_7.1}
\mathcal{L}_m= k\partial+Q_m+\Lambda_m \otimes 1 = k\partial + \sum_{(i,j)\in \mathcal{I}} \widetilde{u}_i^{-j} \otimes u_i^j+ \Lambda_m \otimes 1 \in \CC\partial \ltimes \hat{\g} \otimes \mathcal{V}_{[d,m]}.
\end{equation} 
We notice the map
$ \ad f: \g(i) \to \g(i-1),\  i>0$
is injective. Hence if $\mathfrak{b}=\bigoplus_{i \geq -\frac{1}{2}} \g(i)$, then the adjoint map
$\ad f: \n\cdot z^{-m} \to \mathfrak{b} \cdot z^{-m}$
is also injective. Let us choose an $\ad h$-invariant complementary subspace $V_m$ in $\mathfrak{b} z^{-m}$ such that
$$ \mathfrak{b}\cdot z^{-m} = V_m \oplus [f, \n \cdot z^{-m}]$$
and let us denote  $\CC_{\text{diff}}[V]= \CC_{\text{diff}}[b_i|i \in J]$ when $\{b_i\}_{i\in J}$ is a basis of the vector space $V$ .

\begin{prop} \label{Prop:8.1}
Let $\overline{\CC_{\text{diff}}[\hat{\g}^m]}$ be the subspace $\CC_{\text{diff}}[\hat{\g}^m]+I$ of $\mathcal{V}_{[d,m]}$. Then there exist unique $S \in \n \otimes \overline{\CC_{\text{diff}}[\hat{\g}^m]}\subset\n \otimes \mathcal{V}_{[d,m]}$ and unique $Q_m^{can} \in (V_m  \oplus \bigoplus_{-m< i   \leq 0}\hat{\g}^i ) \otimes \mathcal{V}_{[d,m]}$ such that
\begin{equation}\label{Eqn:8.1}
e^{adS} \mathcal{L}_m = \mathcal{L}_m^{can} = k \partial + Q_m^{can} + \Lambda_m \otimes 1.
\end{equation}
\end{prop}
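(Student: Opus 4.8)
The plan is to solve the gauge-fixing equation $e^{\ad S}\mathcal{L}_m=\mathcal{L}_m^{can}$ recursively, one $\ad\tfrac{h}{2}$-eigenvalue at a time, using the injectivity of $\ad f$ recalled just above. Write $S=\sum_{j\ge 1/2}S_j$ with $S_j\in\g(j)\otimes\mathcal{V}_{[d,m]}$; since $\n\subset\g$, each $S_j$ lies in the $z^0$-layer, and $\ad S$ strictly raises the $\ad\tfrac{h}{2}$-grading on $\CC\partial\ltimes\hat\g\otimes\mathcal{V}_{[d,m]}$, so it is nilpotent and $e^{\ad S}$ is a well-defined finite sum. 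Because $[\n,\g(d)]=0$ we have $[S,-pz^{-m-1}]=0$, so the $pz^{-m-1}$-term of $\Lambda_m$ is untouched; and since $\ad S$ preserves $z$-layers on $\hat\g\otimes\mathcal{V}_{[d,m]}$ and sends the $-fz^{-m}$-term of $\Lambda_m$ into $\mathfrak{b}z^{-m}\otimes\mathcal{V}_{[d,m]}$ (as $[\n,\g(-1)]\subset\mathfrak{b}$), one checks that for \emph{any} $S$ of this form
\[
e^{\ad S}\mathcal{L}_m=k\partial+\widetilde Q_m+\Lambda_m\otimes 1,\qquad
\widetilde Q_m\in\Big(\mathfrak{b}z^{-m}\oplus\textstyle\bigoplus_{-m<i\le 0}\hat\g^i\Big)\otimes\mathcal{V}_{[d,m]},
\]
where I use that $q_m$ itself lies in this space by the definition of $\hat\g^{\le 0}_{>-(d+1)m-1}$. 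Thus $e^{\ad S}\mathcal{L}_m$ already has the required shape outside the $z^{-m}$-layer, and the whole problem is to choose $S$ so that the $z^{-m}$-component of $\widetilde Q_m$ is $V_m\otimes\mathcal{V}_{[d,m]}$-valued, i.e.\ so that its projection onto $[f,\n z^{-m}]\otimes\mathcal{V}_{[d,m]}$ along $V_m$ vanishes.

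For the recursion, fix the $\ad\tfrac{h}{2}$-grading $i$ and look at the $\g(i)z^{-m}$-component of $\widetilde Q_m$. Separating out the contribution of $[S,\Lambda_m]$, this component equals $(q_m)_{\g(i)z^{-m}}+[f,S_{i+1}]z^{-m}+(\text{terms built only from }q_m\text{ and from }S_j,\ j\le i)$; the point, verified by induction on the number of brackets, is that every contribution of $S_{i+1}$ or of higher $S_j$ to this particular component, other than $[f,S_{i+1}]z^{-m}$, lands in strictly higher grading, so the recursion is genuinely triangular. Since $\ad f\colon\g(i+1)\to[f,\g(i+1)]$ is a bijection and $[f,\g(i+1)]z^{-m}$ is exactly the grading-$i$ part of $[f,\n z^{-m}]$, while $V_m$ supplies an $\ad h$-invariant complement, the requirement that the grading-$i$ part of $\widetilde Q_m$ (in the $z^{-m}$-layer) be $V_m$-valued becomes an equation $[f,S_{i+1}]z^{-m}=R_i$ with $R_i\in[f,\g(i+1)]z^{-m}\otimes\mathcal{V}_{[d,m]}$ built from already-known data, which has a unique solution $S_{i+1}$. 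Running $i$ from $-\tfrac12$ up to $d-1$ determines $S=\sum_j S_j$ uniquely (there is nothing to impose for $i\ge d$, where $[f,\n z^{-m}]$ vanishes and those components are absorbed into $V_m$), and simultaneously produces $Q_m^{can}$; uniqueness of the pair $(S,Q_m^{can})$ is immediate, since any two solutions must agree on each $S_j$ at every stage.

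It remains to see that $S$, and hence $Q_m^{can}$, actually has coefficients in $\overline{\CC_{\text{diff}}[\hat\g^m]}$ rather than in all of $\mathcal{V}_{[d,m]}$. This falls out of the recursion: the only part of $q_m$ that ever enters the equations for the $S_j$ is its $z^{-m}$-component $q_m^{(-m)}\in\mathfrak{b}z^{-m}\otimes\mathcal{V}_{[d,m]}$ — every bracket term that survives into the $z^{-m}$-layer involves $q_m$ only through $q_m^{(-m)}$ — and under the pairing $(\cdot,\cdot)$ the coefficients of $q_m^{(-m)}$ are precisely the generators of $\mathcal{V}_{[d,m]}$ dual to $\hat\g^m$; hence each $S_j$ is a differential polynomial in those generators. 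I expect the bulk of the work to be organizational rather than conceptual — this is a Drinfel'd--Sokolov-type gauge fixing — the delicate points being the verification that the recursion is triangular in the grading (solving for $S_{i+1}$ must not disturb what was arranged at lower gradings), the tracking of which $z$-layers and $\ad\tfrac{h}{2}$-eigenspaces the iterated brackets land in, and the borderline case $m=0$, where the $z^0$- and $z^{-m}$-layers coincide and the term $[S,k\partial]=-k\partial S$ must be carried along as part of the known data.
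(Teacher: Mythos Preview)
Your proposal is correct and follows essentially the same approach as the paper: both arguments rewrite $e^{\ad S}\mathcal{L}_m=\mathcal{L}_m^{can}$ as a BCH-type expansion, isolate the leading term $[-fz^{-m}\otimes 1,S]$, and solve recursively by $\ad\tfrac{h}{2}$-grading (equivalently, $\gr_2$-grading) on the $z^{-m}$-layer using the injectivity of $\ad f\colon \n z^{-m}\to\mathfrak{b}z^{-m}$, with the remaining $z$-layers of $Q_m^{can}$ falling out automatically once $S$ is fixed. One small overclaim: the proposition only asserts $S\in\n\otimes\overline{\CC_{\text{diff}}[\hat\g^m]}$, not that all of $Q_m^{can}$ has such coefficients (and indeed the $z^{-k}$-layers for $k<m$ generally do not), so your ``and hence $Q_m^{can}$'' should be dropped or restricted to the $z^{-m}$-layer.
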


\begin{proof}
Equation (\ref{Eqn:8.1}) can be rewritten as
\begin{equation}\label{Eqn:8.2}
Q_m^{can}  +[\Lambda_m \otimes 1,S]  = Q_m +[S, k\partial+Q_m] + \frac{1}{2} [S,[S, \mathcal{L}_m]] + \frac{1}{6} [S,[S,[S,\mathcal{L}_m]]] + \cdots.
\end{equation}
Since $\Lambda_m = -f z^{-m} -p ^{-m-1}$ and $p \in \ker \ad\n$, equation (\ref{Eqn:8.2}) is same as
\begin{equation}\label{Eqn:8.3}
\begin{aligned}
& Q_m^{can}  +[-f z^{-m}\otimes 1,S]  = Q_m +[S, k\partial+Q_m] \\
& + \frac{1}{2} [S,[S, k\partial + Q_m - f z^{-m}\otimes 1]] + \frac{1}{6} [S,[S,[S, k\partial + Q_m -f z^{-m} \otimes 1]]] + \cdots.
\end{aligned}
\end{equation}
Let $S=\sum_{i > 0} S_i$, $Q_m = \sum_{i >-(j+1)m-1, k\geq0} Q_{i,k}$ and $Q_m^{can} = \sum_{i >-(j+1)m-1,k\geq0} Q_{i,k}^{can}$,  where $S_i \in \hat{\g}^0_i \otimes \mathcal{V}_{[d,m]}$,  $Q_{i,k} \in \hat{\g}_i^{-k} \otimes \mathcal{V}_{[d,m]}$ and $Q^{can}_{i,k} \in \hat{\g}_i^{-k} \otimes \mathcal{V}_{[d,m]}$.  By the injectivity of $\ad f: \n z^{-m} \to \mathcal{B} z^{-m}$ and the equation
\begin{equation} \label{Eqn:8.4}
 Q_{-(j+1)m-\frac{1}{2}, m}^{can} + [f z^{-m} \otimes 1, S_{\frac{1}{2}}] = Q_{-(j+1)m-\frac{1}{2},m}\in \hat{g}^{-m}_{-(d+1)m-\frac{1}{2}} \otimes \overline{\CC_{\text{diff}}[\hat{\g}^m]}
\end{equation}
which follows from equation (\ref{Eqn:8.3}),
we can find $Q_{-(j+1)m-\frac{1}{2},m}^{can}\in (\hat{\g}_{-(j+1)m-\frac{1}{2}}^{-m}\cap V_m) \otimes\overline{\CC_{\text{diff}}[\hat{\g}^m]}$ and $S_{\frac{1}{2}}\in \hat{\g}_{\frac{1}{2}}^0 \otimes \overline{\CC_{\text{diff}}[\hat{\g}^m]}$ uniquely. Similarly, since  $Q_{-(j+1)m,m}^{can}$ and $S_{1}$ are determined by $Q_{-(j+1)m-\frac{1}{2},m}^{can}$, $S_{\frac{1}{2}}$, and $Q_m$, they should be in $\hat{\g}_{-(j+1)m}^{-m} \otimes \overline{\CC_{\text{diff}}[\hat{\g}^m]}$ and $\hat{\g}^0_1 \otimes \overline{\CC_{\text{diff}}[\hat{\g}^m]}$, respectively. By the induction on the $\gr_2$-grading, the whole $S$ and the $(\hat{\g}^{-m}\otimes \mathcal{V}_{[d,m]})$-part of $Q_m^{can}$  can be found uniquely. Once we find $S$, the $(\hat{\g}^{-k}\otimes \mathcal{V}_{[d,m]})$-part of $Q_m^{can}$, for any $k <m$, automatically come out from the equation $e^{\ad S} \mathcal{L}_m = \mathcal{L}_m^{can}$.
\end{proof}

\begin{prop} \label{Prop:8.2}
Let   $\{\widetilde{w}_i^{-m}| i \in \mathcal{J}_m\}$ be a basis of $V_m$, where $\widetilde{w}_i$ is an  eigenvector of $\ad \ h$ and $\widetilde{w}_i^{-m}:= \widetilde{w}_i z^{-m}$ and let $\{\widetilde{u}_i^{-j}|(i,j)\in \mathcal{J}\}:=\mathcal{B}^- \cap \bigoplus_{l<m} \hat{\g}^{-l}$  be a basis of $\bigoplus_{-m<i  \leq 0}\hat{\g}^i $, where $\mathcal{B}^-$ is defined in (\ref{Eqn:7.7}). Also, let us denote the dual elements of $\widetilde{w}_l^{-m}$ and $\widetilde{u}_i^{-j}$ by $w_l^m$ and $u_i^j$, respectively. If we have
$$Q_m^{can}= \sum_{i\in \mathcal{J}_m} \widetilde{w}_i^{-m} \otimes \gamma_{w_i^{m}} + \sum_{(i,j) \in \mathcal{J}} \widetilde{u}_i^{-j} \otimes \gamma_{u_i^j},$$ then $\WW(\g, \Lambda_m, k)$ is freely generated by $\gamma_{w_i^m}$ and $\gamma_{u_i^j}$ as a differential algebra. Moreover, the generating elements have the form
\begin{equation} \label{Eqn:8.5}
\begin{aligned}
\gamma_{w_i^{m}}& = w_i^{m} + A_{ w_i^{m} },\ \gamma_{u_i^j} = u_i^{j} + B_{u_i^{j}},
\end{aligned}
\end{equation}
where $ A_{ w_i^{m} }\in \CC_{\text{diff}}[\ v\ |  \gr_1(v)=m,\ \gr_2(v)>\gr_2(w_i^m) ]$ and $ B_{u_i^{j}}\in \CC_{\text{diff}}[\ v\ | \gr_1(v)=m \text{ or } \gr_1(u_i^j), \ \gr_2(v)>\gr_2(u_i^j) ]$.
\end{prop}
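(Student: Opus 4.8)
The plan is to identify $\WW(\g,\Lambda_m,k)$ with the coordinate ring of the canonical slice of Lax operators produced by Proposition~\ref{Prop:8.1}. First, the coefficients $\gamma_{w_i^m},\gamma_{u_i^j}$ of $Q_m^{can}$ are gauge invariant, hence lie in $\WW_1(\g,\Lambda_m,k)=\WW(\g,\Lambda_m,k)$ (Corollary~\ref{Cor:7.15}): by Proposition~\ref{Prop:8.1} every Lax operator $L_m=k\partial+q_m+\Lambda_m\otimes1$ is gauge equivalent to one in canonical form, and the pair $(S,Q_m^{can})$ realizing this is the unique one of the stated type; since $0\in\n\otimes\overline{\CC_{\text{diff}}[\hat{\g}^m]}$ already puts a canonical Lax operator into canonical form, a whole gauge orbit contains only one canonical representative, so $\gamma_\bullet(q_m)$ depends on $q_m$ only through its gauge class.

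Next I would prove freeness. Consider the canonical slice $\mathcal{C}=\{k\partial+q+\Lambda_m\otimes1 \mid q\in (V_m\oplus\bigoplus_{-m<i\le0}\hat{\g}^i)\otimes\mathcal{V}_{[d,m]}\}$, whose coordinate ring is the free differential algebra $\CC_{\text{diff}}[\, w_i^m \mid i\in\mathcal{J}_m\,]\otimes\CC_{\text{diff}}[\, u_i^j\mid (i,j)\in\mathcal{J}\,]$, and define the restriction map $\rho\colon\WW(\g,\Lambda_m,k)\to\CC_{\text{diff}}[w_i^m,u_i^j]$ sending a gauge-invariant functional to its restriction to $\mathcal{C}$; it is a differential-algebra homomorphism. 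It is injective: if $\phi$ is gauge invariant and vanishes on $\mathcal{C}$, then $\phi(q_m)=\phi(q_m^{can})=0$ for every $q_m$ since each orbit meets $\mathcal{C}$, so $\phi=0$ by the faithful realization of $\mathcal{V}_{[d,m]}$ as functionals on $\hat{\g}^{\le0}_{>-(d+1)m-1}\otimes\mathcal{V}_{[d,m]}$. It is surjective with $\rho(\gamma_{w_i^m})=w_i^m$ and $\rho(\gamma_{u_i^j})=u_i^j$: specializing the identity $e^{\ad S}\mathcal{L}_m=\mathcal{L}_m^{can}$ along $\rho$ and invoking uniqueness of canonical forms forces $\rho(S)=0$, whence $\rho(Q_m^{can})=\rho(Q_m)=\sum\widetilde{w}_i^{-m}\otimes w_i^m+\sum\widetilde{u}_i^{-j}\otimes u_i^j$, and comparison of coefficients gives the claim. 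Thus $\rho$ is a differential-algebra isomorphism taking the $\gamma$'s to the free generators of the target, so $\WW(\g,\Lambda_m,k)=\CC_{\text{diff}}[\,\gamma_{w_i^m},\gamma_{u_i^j}\,]$ is freely generated by these elements.

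For the triangular form~\eqref{Eqn:8.5} I would equip $\hat{\g}\otimes\mathcal{V}_{[d,m]}$ with the total gradings $\widehat{\gr}_1,\widehat{\gr}_2$ induced from $\gr_1,\gr_2$ on $\hat{\g}$ and on $\mathcal{V}_{[d,m]}$ (these are filtrations on $\mathcal{V}_{[d,m]}$, since the ideal $I$ is generated by differences of a $\gr_2$-homogeneous element of positive degree and a scalar). The key observations are that $k\partial+Q_m$ is homogeneous of $\widehat{\gr}_1$- and $\widehat{\gr}_2$-degree $0$, while $\Lambda_m\otimes1$ is $\widehat{\gr}_2$-homogeneous of degree $-(d+1)m-1<0$ with $\widehat{\gr}_1$-degrees $\le-m$, and that $\overline{\CC_{\text{diff}}[\hat{\g}^m]}$ consists of differential polynomials in variables of $\gr_1$-degree $m$. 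Running the recursion of Proposition~\ref{Prop:8.1} in order of increasing $\gr_2$-grading on $\hat{\g}$, at the slot of $\widetilde{w}_i^{-m}$ (resp.\ $\widetilde{u}_i^{-j}$) the coefficient $\gamma_{w_i^m}$ (resp.\ $\gamma_{u_i^j}$) equals the linear coefficient $w_i^m$ (resp.\ $u_i^j$) coming from $Q_m$ plus commutator corrections assembled from $S$ and from the piece $-fz^{-m}\otimes1$ of $\Lambda_m\otimes1$; because $\widehat{\gr}_2(S)>0$ and the only $\widehat{\gr}_2$-lowering input is $\Lambda_m$, one shows inductively that every such correction raises the $\gr_2$-degree of its $\mathcal{V}_{[d,m]}$-coefficient strictly above $\gr_2(w_i^m)$ (resp.\ $\gr_2(u_i^j)$) and involves only variables of $\gr_1$-degree $m$ (resp.\ $m$, or $j$ coming from the $Q_m$-linear terms of $\gr_1$-degree $j$). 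This is exactly \eqref{Eqn:8.5}.

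The step I expect to be the main obstacle is this last one: making the grading bookkeeping in the recursion precise enough to guarantee that \emph{every variable} appearing in $A_{w_i^m}$ and $B_{u_i^j}$ — not merely every monomial — satisfies the stated $\gr_1$- and $\gr_2$-constraints. This is delicate because neither the defining ideal $I$ of $\mathcal{V}_{[d,m]}$ nor the gauge parameter $S$ is homogeneous for these gradings, so one must work with the induced filtrations and verify, level by level in $\gr_2$, exactly which coordinates the recursive equations can produce; essentially this is the same induction as in the proof of Proposition~\ref{Prop:8.1}, refined so as to track $\gr_1$ and $\gr_2$ simultaneously. The identification of $\WW(\g,\Lambda_m,k)$ with the coordinate ring of $\mathcal{C}$ is, by comparison, routine once existence and uniqueness of canonical forms is granted.
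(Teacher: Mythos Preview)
Your proof is correct and follows essentially the same route as the paper's. Both arguments use uniqueness of the canonical form (Proposition~\ref{Prop:8.1}) to show that the coefficients $\gamma_\bullet$ are gauge invariant, and both identify $\WW(\g,\Lambda_m,k)$ with the coordinate ring of the canonical slice by observing that a gauge-invariant functional $\Phi$ satisfies $\Phi(Q_m)=\Phi(Q_m^{can})$, the latter being a differential polynomial in the $\gamma$'s; the paper states this directly via the universal $Q_m$, while you package the same computation as the restriction map $\rho$. For the triangular form~\eqref{Eqn:8.5} the paper simply says it follows from equation~\eqref{Eqn:8.3} and the proof of Proposition~\ref{Prop:8.1}, which is precisely the recursive $\gr_2$-bookkeeping you outline; your observation that $S\in\hat{\g}^0\otimes\overline{\CC_{\text{diff}}[\hat{\g}^m]}$ forces the $\gr_1$-constraints (since $e^{\ad S}$ preserves $\gr_1$ on $\hat{\g}$, so the $\hat{\g}^{-j}$-part of $Q_m^{can}$ receives contributions only from $(Q_m)^{-j}$ and, when $j=m$ or $j=0$, from $\Lambda_m$ and $k\partial$) is exactly what the paper leaves implicit.
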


\begin{proof}
Equation (\ref{Eqn:8.5}) follows from equation (\ref{Eqn:8.3}) and the proof of Proposition \ref{Prop:8.1}. Hence the only thing to prove is that $\WW(\g, \Lambda_m, k)=\CC_{\text{diff}}[\ \gamma_{ w_l^m }(Q_m), \gamma_{u_i^j} (Q_m)\ |\  (i,j) \in \mathcal{J},\ l \in \mathcal{J}_m\ ]$. Suppose
$\Phi \in \mathcal{V}_{[d,m]} $
is a gauge invariant functional. Then
$ \Phi(Q_m)= \Phi(Q_m^{can})$.
Since $ w_l^m (Q_m^{can}) = \gamma_{w_l^m}(Q_m),  u_i^j (Q_m^{can}) = \gamma_{u_i^j}(Q_m),$ we have
\begin{align*}
 \Phi(Q_m) = \Phi(Q_m^{can})  & \in \CC_{\text{diff}}[\ u_i^j (Q_m^{can}),\ w_l^m (Q_m^{can})\ |\  (i,j) \in \mathcal{J},\ l \in \mathcal{J}_m\ ]\\
 & =  \CC_{\text{diff}}[\ \gamma_{ w_l^m }(Q_m),\ \gamma_{u_i^j} (Q_m)\ |\  (i,j) \in \mathcal{J},\ l \in \mathcal{J}_m\ ].
\end{align*}
Hence we conclude that $\WW(\g, \Lambda_m, k) \subset  \CC_{\text{diff}}[\ \gamma_{ w_l^m }(Q_m),\ \gamma_{u_i^j} (Q_m)\ |\  (i,j) \in \mathcal{J},\ l \in \mathcal{J}_m\ ]$ .

On the other hand, Suppose that $Q_m \sim Q_m'$. Then  $Q_m ^{(can)}= Q_m'^{(can)}$. Since
$ \gamma_{u_i^j}(Q_m) = u_i^j (Q_m^{can})= u_i^j (Q_m'^{can})= \gamma_{u_i^j}(Q_m')$ and 
$ \gamma_{w_i^m}(Q_m) = w_i^m (Q_m^{can})= w_i^m (Q_m'^{can})= \gamma_{w_i^m}(Q_m')$,
 we have
 $ \CC_{\text{diff}}[\ \gamma_{ w_l^m }(Q_m),\ \gamma_{u_i^j} (Q_m)\ |\  (i,j) \in \mathcal{J},\ l \in \mathcal{J}_m\ ] \subset \WW(\g, \Lambda_m, k). $
\end{proof}

\subsection{ Examples} \label{Subsec:8.2}
\begin{ex} Let $\g=\sll_2$.
The universal Lax operator for the $m$-th fractional $\WW$-algebra $\WW(\sll_2, -fz^{-m}-ez^{-m-1}, 1)$ is 
\begin{align*}
 \mathcal{L}_m & = \partial + \sum_{j=0}^{m-1}(ez^{-j} \otimes f z^{j} + h z^{-j} \otimes x z^{j} + f z^{-j} \otimes e z^{j}) \\ & + ez^{-m} \otimes f z^{m} + h z^{-m} \otimes x z^{m} + ( f z^{-m} + e z^{-m-1}) \otimes - 1.
\end{align*}
The gauge transformation of $\mathcal{L}_m$ by $ S= e \otimes A \in \n \otimes \mathcal{V}_{[d,m]}$ is
\begin{equation*}
 e^{\ad S}\mathcal{L}_m   =: \mathcal{L}_m^{can}= (I\otimes 1 + e \otimes A ) (\partial+ Q_m + \Lambda_m \otimes 1) ( I \otimes 1 - e \otimes A).
\end{equation*}

By Proposition \ref{Prop:8.1}, there exists a unique $q_m^{can}\in  (\CC fz^{-m}\oplus \bigoplus_{-m<i\leq 0} \hat{g}^i ) \otimes \mathcal{V}_{[d,m]}$ which is gauge equivalent to $q_m$. We can write $q_m=  \sum_{j=0}^{m}  ez^{-j} \otimes fz^j (q_m) + h z^{-j} \otimes xz^j (q_m) + fz^{-j} \otimes ez^j(q_m)
$ in a matrix form:
\begin{equation*}
q_m= \sum_{j=0}^m\left[
\begin{array}{cc}
x z^j (q_m) & f z^j(q_m) \\
e z^j (q_m)& -x z^j(q_m)\\
\end{array}
\right] z^{-j},\text{ where } \ ez^m (q_m)=0.
\end{equation*}

 Letting $e^{\ad S} \mathcal{L}_m = \partial + \widetilde{q}_m+\Lambda\otimes 1$, the gauge equivalent element  $\widetilde{q}_m$ to $q_m$ can be written in a matrix form,
\begin{align*}
\widetilde{q}_m&= \sum_{j=0}^{m-1}\left[
\begin{array}{cc}
x z^j (q_m)+ A (ez^j)(q_m)& f z^j (q_m)-2A (xz^j)(q_m) -A^2 (ez^j)(q_m) - \delta_{j,0} A'\\
e z^j (q_m)& -x z^j(q_m)-A(ez^j)(q_m)\\
\end{array}
\right] z^{-j} \\
&+ \left[
\begin{array}{cc}
x z^m (q_m)- A (q_m)& f z^j (q_m)-2A (xz^m)(q_m) +A^2 (q_m) \\
0 & -x z^j(q_m)+A(q_m)\\
\end{array}\right].
\end{align*}
By substituting $A= x z^m$, we obtain  
\begin{equation} \label{Eqn:8.9}
q^{can}_m + \Lambda_m=
\sum_{j=0}^{m+1}
\left[ \begin{array}{cc}
\gamma_{x z^j}(q_m) & \gamma_{f z^j}(q_m) \\
\gamma_{e z^j}(q_m) & -\gamma_{x z^j}(q_m)\\
\end{array} \right]z^{-j},
\end{equation}
where
\begin{equation}\label{Eqn:8.10}
\begin{aligned}
&\gamma_f = f -2 (xz^{m} ) x - (x z^{m})^2 e -  \partial(x z^m); 
 \ \ \ \gamma_{fz^i}= fz^i - 2(x z^{m})(x z^i) - (x z^m)^2 (e z^i); \\
& \gamma_{x z^i} = x z^j + (x z^m) (e z^j); 
\ \  \ \ \ \gamma_{e z^j}  = e z^j; 
\ \ \ \ \ \gamma_{f z^m}  = f z^m -(x z^m)^2;  \\
&\gamma_{x z^m} =\gamma_{x z^{m+1}}= \gamma_{e^{m+1}}=0;
\ \ \ \ \ \ \gamma_{e z^m}=\gamma_{f z^{m+1}} =-1.
\end{aligned}
\end{equation}

Applying Proposition \ref{Prop:8.2}, we get a generating set  
$\{\ \gamma_X\ |\ X=f, fz^i, x z^j, e z^j, f z^m, \  i=1, \cdots, m, \ j =0, \cdots, m-1 \}$ of $\WW(\sll_2, \Lambda_m=-f z^{-m}-e z^{-m-1}, 1).$

The first $\lambda$-brackets   between the elements in the generating set are as follows:
\begin{equation} \label{Eqn:8.11}
\begin{aligned}
&\{\gamma_{f \ \lambda} \gamma_f \}_1  = - 2 \lambda, \ \ \ \{\gamma_{f \ \lambda} \gamma_{f z^j}\}_1  = - 2 \gamma_{x z^j}, \\
&\{\gamma_{f \ \lambda} \gamma_{x z^i}\}_1  =  - \gamma_{f z^{i+1}} + \gamma_{e z^i},\ \ \{\gamma_{f \ \lambda} \gamma_{e z^i}\}_1  = 2 \gamma_{x z^{i+1}},
\end{aligned}
\end{equation}
where $i \geq 0$ and $j \geq 1$. Otherwise,
\begin{equation} \label{Eqn:8.12}
\{\gamma_{g_1 z^i  \ \lambda} \gamma_{g_2 z^j}\}_1  = -\gamma_{\{g_1,g_2\} z^{i+j+1}},
\end{equation}
where $g_1, g_2 \in \sll_2$ and $i, j \geq 0$.\\
The second $\lambda$-brackets are defined as follows:
\begin{equation}\label{Eqn:8.13}
\{ \gamma_{g_1 \ \lambda} \gamma_{g_2}\}_2  =\gamma_{\{g_1, g_2\}}+  \lambda(g_1, g_2) \\
\end{equation}
and
\begin{equation}\label{Eqn:8.14}
\begin{aligned}
&\{ \gamma_{f z_\lambda} \gamma_f \}_2  =-2 \gamma_x - \lambda, \ \ \ \{ \gamma_{f z_\lambda} \gamma_x \}_2  =\gamma_e, \ \ \ \{ \gamma_{f z_\lambda} \gamma_e \}_2  =0, \ \ \ \{ \gamma_{f z_\lambda} \gamma_{f z}\}_2  = 0,\\  &  \{ \gamma_{f z_\lambda} \gamma_{f z^j}\}_2 = -2 \gamma_{x z^j}, \ \ \ \{ \gamma_{f z_\lambda} \gamma_{x z^i}\}_2 = -\gamma_{f z^{1+i}}+\gamma_{e z^i}, \ \ \ \{ \gamma_{f z_\lambda} \gamma_{e z^i}\}_2 = 2 \gamma_{x z^{1+i}}, \\
\end{aligned}
\end{equation}
where $g_1,g_2 \in \sll_2$, $j \geq 2$ and $i,l \geq1$. Otherwise,
\begin{equation*}
\{ \gamma_{g_1 z^i \ \lambda} \gamma_{g_2 z^l} \}_2  = -\gamma_{\{g_1 z^i, g_2 z^l\} }.
\end{equation*}
\end{ex}

\vskip 5mm

\begin{ex}
Let $\g=\sll_n$ and $\Lambda_m= -e_{n1} z^{-m} -e_{1n} z^{-m-1}$.
Consider the dual bases 
\[\mathcal{B}=\{\ E^k_{ij},\ E^k_{ll}\ |  i \ne j, l=1, \cdots, n-1, k=0, \cdots, m\} \backslash \{e_{1n} z^{m}\} \] 
and 
\[\mathcal{B}^-=\{\ \widetilde{E}_{ij}^{-k},\ \widetilde{E}_{ll} z^{-k}\ |\ i \ne j, l=1, \cdots, n-1, k=0, \cdots, m\} \backslash \{e_{n1} z^{-m}\}\]  
of $ \hat{\g}^{\geq 0}_{<2m+1}$ and $ \hat{\g}^{\leq 0}_{>-2m-1}$, where $E^k_{ij}:= e_{ji} z^{k}$, $E^k_{ll}:=\frac{n-1}{n}e_{ll}z^k-\frac{1}{n}e_{nn}z^k$, $\widetilde{E}_{ij}^{-k}:= e_{ij}z^{-k}$ and $\widetilde{E}_{ll} z^{-k}:=(e_{ll}-e_{nn})z^{-k}$. The universal Lax operator  $\mathcal{L}_m = \partial+  \sum_{E_{ij}^{k}\in \mathcal{B}}  \widetilde{E}_{ij}^{-k} \otimes E_{ij}^k +\Lambda_m \otimes 1 \in \CC\partial \ltimes \hat{\sll}_n\otimes \mathcal{V}_{[d,m]}$ can be written in the matrix form as below:
\begin{align} \label{Eqn:8.15}
\mathcal{L}_m = \partial+ \sum_{k=0}^{m-1} z^{-k}\left(
E_{i,j}^k \right)_{i,j=1, \cdots, n} - z^{-m}(e_{n1}  - z^{-1}e_{1n} ), \ \text{ where } E_{nn}^k= -\sum_{i=1}^{n-1} E_{ii}^k.
\end{align}

Then  an element in the associated Lie group to the Lie algebra $\n\otimes \mathcal{V}_{[d,m]}$ is an upper diagonal matrix. \\
Suppose
\begin{equation}\label{Eqn:8.16}
S=\left[
\begin{array}{cccc}
1& s_{12}& \cdots & s_{1n} \\
0& 1 &0 & s_{2n} \\
\vdots& \ddots & \ddots & \vdots \\
0& \cdots &  0 & 1
\end{array}\right]\in \n\otimes \mathcal{V}_{[d,m]}
\end{equation}
and  $q_m\sim_S \bar{q}_m$, i.e. $ \bar{q}_m=S (q+\Lambda_m) S^{-1} -\Lambda_m + S\partial(S^{-1})$.  In the matrix form, $\bar{q}_m =\sum_{k=0}^{m} z^{-k} ( a^k_{ij})_{i,j=1}^n$, where
\begin{equation} \label{Eqn:8.17}
\begin{aligned}
{a}^q_{n1} & = E^q_{n1}, \ \ \ {a}^r_ {i1} = E^r_{i1}+ s_{in}E^r_{n1}, \ \ \ {a}^r_{11}  = E^r_{11}+\sum_{k=2}^n s_{ik}E^r_{k1}, \ \ \ {a}^r_ {nj}  = E^r_{nj}-E^r_{n1} s_{1j},  \\
{a}^r_{ij} & = E^r_{ij}+s_{in}E^r_{nj}-(E^r_{i1}+s_{in}E^r_{n1}) s_{1j},  \ \ \ {a}^p_{1j}  = E^p_{1j}-\sum_{k=2}^{n}s_{1k}E^p_{kj}-(E^p_{11}+\sum_{k=2}^n s_{1k}E^p_{k1})s_{1j} , \\
{a}^0_{1j} & = E^ 0_{1j}-\sum_{k=2}^{n}s_{1k}E^0_{kj}-(E^0_{11}+\sum_{k=2}^n s_{1k}E^0_{k1})s_{1j}-ks'_{1j} , \\
{a}^p_{in}& = E^p_{in}+s_{in} E^p_{nn} +(E^p_{i1}+\sum_{k=2}^n s_{ik}E^p_{k1})t_{1n} -\sum_{j=2}^{n-1} (E^p_{ij}+\sum_{k=2}^{n} s_{ik}E^p_{kj})s_{jn}  , \\
{a}^0_{in}& = E^p_{in}+s_{in} E^p_{nn} + (E^p_{i1}+\sum_{k=2}^n s_{ik}E^p_{k1})t_{1n} -\sum_{j=2}^{n-1} (E^p_{ij}+\sum_{k=2}^{n} s_{ik}E^p_{kj})s_{jn} -ks'_{in}, \\
{a}^p_{1n} & = E^p_{1n}+\sum_{k=2}^n s_{1k} E^p_{kn} + (E^p_{11}+\sum_{k=2}^n s_{1k}E^p_{k1})_{1n}t -\sum_{j=2}^{n-1} (E^p_{1j}+\sum_{k=2}^{n} s_{1k}E^p_{kj})s_{jn} ,\\
{a}^0_{1n} & = E^p_{1n}+\sum_{k=2}^n s_{1k} E^p_{kn} + (E^p_{11}+\sum_{k=2}^n s_{1k}E^p_{k1})t_{1n} -\sum_{j=2}^{n-1} (E^p_{1j}+\sum_{k=2}^{n} s_{1k}E^p_{kj})s_{jn} + kv_{1n} , \\
{a}^r_{nn}& =- \sum_{i=1}^{n-1} {a}^r_{ii} , \ \  E^m_{n1} =-1.\\
\end{aligned}
\end{equation}
Here  $t_{1n}=-s_{1n}+s_{12}s_{2n}+ \cdots + s_{1n-1 }s_{n-1 n}$ and $v_{1n}=-\partial s_{1n}+ \partial(s_{12})s_{2n} + \cdots + \partial(s_{1 n-1})s_{n-1 n}$ and
$p=1, \cdots, m$, $q=0, \cdots, m-1$, $r=0, \cdots, m$, $i,j=2, \cdots n-1$. If the entries of $S$ in (\ref{Eqn:8.16}) are picked as follows:

\begin{equation}\label{Eqn:8.18}
 s_{in}= \left\{
 \begin{array}{cl}
 -\frac{E_{i1}^m}{E_{n1}^m} & \text{ if } E_{n1}^m\neq 0\\
 0 & \text{ otherwise}
 \end{array}\right. , \ \  
  s_{1j}= \left\{
 \begin{array}{cl}
 \frac{E_{nj}^m}{E_{n1}^m} & \text{ if } E_{n1}^m\neq 0\\
 0 & \text{ otherwise.}
 \end{array}\right.
 \end{equation}
Then we get the generators $ \gamma_{ij}^k (q_m) := E_{ij}^k (\bar{q}_m)$ of the differential algebra $\WW(\sll_n, \Lambda_m, 1)$. 

\end{ex}

\subsection{ More results on generating elements of a classical affine fractional $\WW$-algebra associated to a minimal nilpotent} \label{Subsec:8.3}
Suppose $f$ is a minimal nilpotent and $m \in \ZZ_{>0}$. Under this assumption, we can describe generating elements of $\WW(\g, \Lambda_m, k)$  and $\lambda$-brackets between them. The following lemma is useful to find a generating set of $\WW(\g,\Lambda_m, k)$.

\begin{lem} \label{Lem:8.7}
Let $X:=\{\ w_l^m\ |\ l \in \mathcal{J}_m\ \}$ and $Y:=\{\ u_i^j\ |\ (i,j)\in \mathcal{J}\ \}$ be defined as in Proposition \ref{Prop:8.2} and let 
$\psi: \g_f \cdot z^m \oplus \bigoplus _{i=0}^{m-1} \g \cdot z^i\to \WW(\g, \Lambda_m,k)$ be a linear map such that
$\psi_{v}:= \psi(v)= v + C_v$,
where $v\in X \cup Y$ and $C_v\in \CC[\ u, u', u'', \cdots\ |\ u \in X \cup Y, \ \gr_2(u)>\gr_2(v)\ ]$. Then $\Psi:= \{\ \psi_v\ |\ v \in X\cup Y\ \} $ freely generates $\WW(\g, \Lambda_m, k).$
\end{lem}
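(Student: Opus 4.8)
The plan is to reduce the statement to Proposition~\ref{Prop:8.2}, which already provides one explicit free generating set $\{\gamma_{w_l^m}, \gamma_{u_i^j}\}$ indexed by the same sets $\mathcal{J}_m$ and $\mathcal{J}$, with $\gamma_v = v + (\text{terms in strictly higher } \gr_2\text{-degree generators})$. The key observation is that the map $v \mapsto \psi_v$ and the map $v \mapsto \gamma_v$ have the same ``triangular'' shape with respect to the $\gr_2$-filtration on $\WW(\g,\Lambda_m,k)$: each sends a basis vector $v$ of $\g_f z^m \oplus \bigoplus_{i=0}^{m-1}\g z^i$ to itself plus a differential polynomial in generators of strictly larger $\gr_2$-grading. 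So first I would set up the finite-dimensional $\gr_2$-grading on $X \cup Y$, order the elements $v_1, v_2, \dots, v_N$ by weakly decreasing $\gr_2$, and argue by downward induction on this order that each $\gamma_{v_r}$ lies in the differential subalgebra generated by $\Psi$, and conversely each $\psi_{v_r}$ lies in the differential subalgebra generated by $\{\gamma_v\}$.

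First I would treat the base case: the elements $v$ of maximal $\gr_2$-grading have $C_v = 0$ and the ``higher'' index set is empty, so $\psi_v = v = \gamma_v$. For the inductive step, suppose $\psi_{v_s}$ and $\gamma_{v_s}$ generate each other's images for all $s < r$ (i.e.\ all $v_s$ of strictly larger $\gr_2$-grading, plus those of equal grading already handled). Then $\gamma_{v_r} - \psi_{v_r} = A_{v_r} - C_{v_r}$ is a differential polynomial in the $v_s$ with $\gr_2(v_s) > \gr_2(v_r)$, hence by the inductive hypothesis it is a differential polynomial in $\{\psi_{v_s}\}_{s<r}$, which shows $\gamma_{v_r}$ is in the differential algebra generated by $\Psi$; symmetrically for $\psi_{v_r}$. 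Some care is needed with elements of \emph{equal} $\gr_2$-grading: since $C_v$ and $A_v$ only involve \emph{strictly} higher grading, there is no circularity within a fixed grading stratum, and one can process each stratum in any order after all higher strata are done.

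Having shown that $\Psi$ and $\{\gamma_v\}$ generate the same differential subalgebra of $\WW(\g,\Lambda_m,k)$, and that this subalgebra is all of $\WW(\g,\Lambda_m,k)$ by Proposition~\ref{Prop:8.2}, it remains to check that $\Psi$ is \emph{freely} generating, i.e.\ that the $\psi_v^{(n)}$, $v \in X\cup Y$, $n \geq 0$, are algebraically independent. This follows from a Jacobian/change-of-variables argument: the substitution taking the free generators $\{\gamma_v^{(n)}\}$ (algebraically independent by Proposition~\ref{Prop:8.2}) to $\{\psi_v^{(n)}\}$ is, by the triangular structure above, of the form (new variable) $=$ (old variable) $+$ (polynomial in old variables of strictly higher $\gr_2$-weight); such a filtered-unitriangular polynomial endomorphism of $\CC_{\text{diff}}[X\cup Y]$ is invertible, hence an automorphism, so it carries one free generating set to another. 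I expect the main obstacle to be purely bookkeeping: making the two filtrations (the $\gr_2$-grading on $\hat{\g}$ and the induced filtration on the differential algebra of generators, plus the interplay with $\partial$, which preserves $\gr_2$) precise enough that the downward induction and the unitriangularity claim are rigorous rather than merely plausible. No genuinely new idea beyond Proposition~\ref{Prop:8.2} is needed.
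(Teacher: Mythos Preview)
Your overall strategy---reduce to Proposition~\ref{Prop:8.2} and run a downward induction on $\gr_2$---is exactly the paper's. But there is a small real gap in your inductive step. You write that $\gamma_{v_r}-\psi_{v_r}=A_{v_r}-C_{v_r}$ is a differential polynomial in the $v_s$ with $\gr_2(v_s)>\gr_2(v_r)$, ``hence by the inductive hypothesis it is a differential polynomial in $\{\psi_{v_s}\}_{s<r}$.'' This does not follow: by Proposition~\ref{Prop:8.2} the correction term $A_{v_r}$ is a differential polynomial in elements of $\mathcal{B}$ of higher $\gr_2$, not merely in elements of $X\cup Y$. (Concretely, in the $\sll_2$, $m=1$ example one has $\gamma_x = x + (xz^m)e$, and $xz^m\notin X\cup Y$.) Your inductive hypothesis only controls the differential subalgebra generated by $\{\psi_{v_s}\}$ and $\{\gamma_{v_s}\}$ for $v_s\in X\cup Y$, so it says nothing directly about $A_{v_r}-C_{v_r}$ as written.

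The paper closes this gap with one extra observation you omit: since both $\gamma_v$ and $\psi_v$ lie in $\WW(\g,\Lambda_m,k)$, so does their difference, and hence $\gamma_v-\psi_v$ is a differential polynomial in the free generators $\gamma_u$, $u\in X\cup Y$. Combining this with the fact that $\gamma_v-\psi_v$ lies in the sub-differential-algebra of $\mathcal{V}_{[d,m]}$ generated by $\mathcal{B}$-elements of $\gr_2>\gr_2(v)$, and that each $\gamma_u$ has ``leading term'' $u$, forces the expression to involve only $\gamma_u$ with $\gr_2(u)>\gr_2(v)$; then your induction goes through. The same remark applies to your unitriangularity claim in the freeness argument. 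Once this is fixed, your explicit treatment of freeness via a filtered-unitriangular change of variables is actually more careful than the paper, which simply stops after showing that $\Psi$ generates.
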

\begin{proof}
Since we know that the set $\{\ \gamma_v\ |\ v \in X\cup Y\ \}$ defined in (\ref{Eqn:8.5}) freely generates $\WW(\g, \Lambda_m, k)$, it suffices to show that any $\gamma_v$ is in the differential algebra generated by $\Psi$. We know that $\gamma_v-\psi_v$ is an element in $\WW(\g,\Lambda_m, k)$. Hence, by Proposition \ref{Prop:8.2} and the assumption on $C_v$, 
the element  $\gamma_{v} -\psi_{v}\in \CC_{\text{diff}}[\ \gamma_u|\ u \in X \cup Y, \ \gr_2(u)>\gr_2(v)\ ]$.  By the induction on $\gr_2$-grading, we have $\gamma_v -\psi_v \in \Psi$ so that $\gamma_v\in \Psi$.
\end{proof}

In order to state and prove the main theorems in this section (see Theorem\ref{Thm:8.8} and \ref{Thm:8.9}), take two bases  $\{z_i\}_{i=1}^{2s}$ and $\{z_i^*\}_{i=1}^{2s}$ of $\g\left(\frac{1}{2} \right)$  such that $[z_i, z_j^*] = \delta_{ij} e$ and let $z_{2s+1}:= x$ and $z_{2s+1}^*:= e.$  Also, we let $\{\cdot, \cdot\}$ be the  Poisson bracket on $\mathcal{V}_{[d,m]}$ induced from the Lie bracket on $\hat{\g}$.

\begin{thm} \label{Thm:8.8}
 Let 
$\tilde{\eta}_m:\hat{\g}^+_{[d,m]}\to \mathcal{V}_{[d,m]}$ be a linear map such that
\begin{equation}
\tilde{\eta}_m(a)  = \sum_{l=0}^4 \sum_{i_i, \cdots, i_l=1}^{2s+1} \frac{1}{l!} (z_{i_1} z^m) \cdots (z_{i_l} z^m) \{a, z_{i_1}^*,\cdots, z_{i^l}^*\},
\end{equation}
where $\{a_1, a_2, \cdots, a_t\}$ is the $(t-1)$-brackets $\{\{\cdots \{a_1, a_2\} , \cdots,  a_{t-1}\}, a_t \} $, and let $\eta_m:\hat{\g}^+_{[d,m]}\to \mathcal{V}_{[d,m]}$ be a linear map such that
\begin{equation}
\begin{aligned}
& \eta_m(a)= \tilde{\eta}_m(a)\ \ \text{ if }   a \in \hat{\g}^0_{\geq0}\oplus \left(\bigoplus_{t=1}^{m-1}\hat{\g}^t\right)\oplus \g_f z^m ,\\
& \eta_m(b) = \tilde{\eta}_m(b) -\sum_{i=1}^{2s} (z_i^*, w) k\partial(z_iz^m)\ \ \text{ if } b \in \hat{\g}^0_{-\frac{1}{2}}, \\
& \eta_m(f) = \tilde{\eta}_m(f) -k \partial(xz^m) -\frac{k}{2} \sum_{i=1}^{2s} \partial(z_i^* z^m) (z_iz^m).
\end{aligned}
\end{equation}
Then \[\CC_{\text{diff}}\left[\eta_m\left(\bigoplus_{i=0}^{m-1}\g z^i \oplus \g_f z^m\right)\right]=\WW(\g, \Lambda_m, k).\] 
\end{thm}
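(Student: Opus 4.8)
The plan is to exhibit the explicit map $\eta_m$ as a realization of the free generators described abstractly in Proposition \ref{Prop:8.2} and Lemma \ref{Lem:8.7}. By Lemma \ref{Lem:8.7}, it suffices to show two things: (1) each $\eta_m(v)$, for $v$ ranging over a basis of $\bigoplus_{i=0}^{m-1}\g z^i \oplus \g_f z^m$, actually lies in $\WW(\g,\Lambda_m,k)$, i.e.\ is annihilated by $\ad_\lambda \n$; and (2) each $\eta_m(v)$ has the triangular form $v + C_v$ with $C_v \in \CC_{\text{diff}}[\,u\,|\,\gr_2(u)>\gr_2(v)\,]$, so that $\{\eta_m(v)\}$ is a free generating set. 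Claim (2) is essentially immediate from the definition: the $l=0$ term of $\tilde\eta_m(a)$ is $a$ itself, and every higher term is a product of factors $(z_{i_j}z^m)$ (each of $\gr_1$-degree $m$, so of high $\gr_2$-degree since $\gr_2(z^m)=(d+1)m$) times an iterated bracket $\{a,z_{i_1}^*,\dots\}$; one checks by tracking $\gr_2$ that all these terms have strictly larger $\gr_2$-grading than $v$, and the correction terms added in the definition of $\eta_m$ (the $k\partial(z_iz^m)$ and $k\partial(xz^m)$ pieces) are likewise of high $\gr_2$-grading. The minimal-nilpotent hypothesis is what makes the sum over $l$ terminate at $l=4$: since $f$ is minimal, $\g\left(\tfrac12\right)$ is the only fractional piece, $\g(1)=\CC e$, and iterated brackets against the $z_i^*$ and $e$ drop the grading quickly, so $\{a,z_{i_1}^*,\dots,z_{i_l}^*\}=0$ once $l\geq 5$.

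The real content is claim (1): $\{n_\lambda \eta_m(v)\}=0$ for all $n\in\n$. The strategy is to compute $\{n_\lambda \tilde\eta_m(a)\}$ directly using the $\lambda$-bracket (\ref{Eqn:7.31})/(\ref{Eqn:7.32}) — or rather the $\ad_\lambda\n$ action on $\mathcal{V}_{[d,m]}$, which on $\g z^i$ is $\{n_\lambda az^i\}=[n,a]+\delta_{i,0}k\lambda(n,a)$ — together with the Leibniz rule, and to show that the terms telescope. The key combinatorial identity is the one familiar from the minimal-nilpotent $\WW$-algebra computations (cf.\ Kac–Wakimoto and \cite{S}): since $n\in\n\subset \g\left(\tfrac12\right)\oplus\CC e$ and $[n, z_{i_j}z^m]$ produces terms proportional to $ez^m$ or scalars (because $[\g(\tfrac12),\g(\tfrac12)]\subset\CC e$ and $[\g(\tfrac12),e]=0$), applying $\ad_\lambda n$ to the $l$-th term of $\tilde\eta_m(a)$ yields, via Leibniz on the $(z_{i_j}z^m)$ factors plus the action on $\{a,z_{i_1}^*,\dots\}$, a combination that matches (up to sign and the binomial-type factor from $\tfrac1{l!}$ versus $\tfrac1{(l-1)!}$) the $(l-1)$-st term with $a$ replaced by $[n,a]$. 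One arranges the bookkeeping so that $\{n_\lambda\tilde\eta_m(a)\} = \tilde\eta_m([n,a]) + (\text{purely $k\lambda$-linear anomaly terms})$. The anomaly terms arise precisely from the $\delta_{i,0}k\lambda(n,a)$ part of the bracket, and they are what the extra correction terms in the definition of $\eta_m$ (the $-\sum(z_i^*,w)k\partial(z_iz^m)$ and $-k\partial(xz^m)-\tfrac k2\sum\partial(z_i^*z^m)(z_iz^m)$ pieces) are designed to cancel. So I would verify case by case according to where $a$ sits — $a\in\hat\g^0_{\geq0}\oplus\bigoplus_{t\geq1}\hat\g^t\oplus\g_f z^m$ has no anomaly (the $k\lambda$ term requires $\gr_1$-degree $0$, and for those $a$ either the degree is wrong or $(n,a)=0$ by invariance since $n\in\n$), $a\in\hat\g^0_{-1/2}$ produces one type of anomaly cancelled by the first correction, and $a=f$ produces the anomaly cancelled by the second.

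After establishing (1) and (2), the theorem follows: $\{\eta_m(v)\}$ is a set of free generators of the form required by Lemma \ref{Lem:8.7} with $X\cup Y$ a basis of $\g_f z^m\oplus\bigoplus_{i=0}^{m-1}\g z^i$, hence $\CC_{\text{diff}}[\eta_m(\bigoplus_{i=0}^{m-1}\g z^i\oplus\g_f z^m)]=\WW(\g,\Lambda_m,k)$. I expect the main obstacle to be the careful verification of the telescoping identity in step (1) — in particular, keeping track of signs, the $\tfrac1{l!}$ normalizations, and the exact form of the anomaly terms so that they precisely match the hand-chosen corrections; the truncation at $l=4$ and the structure of $\n$ for minimal $f$ should be invoked repeatedly to kill higher brackets and to guarantee that $[n,z_{i_j}z^m]$ stays inside the small space where the computation closes. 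The gauge-transformation picture of Proposition \ref{Prop:8.1} can serve as an independent sanity check: $\eta_m(a)$ should equal $a(\mathcal{L}_m^{can})$ for the canonical representative, which is automatically gauge invariant, but carrying that out explicitly is itself the same computation in disguise.
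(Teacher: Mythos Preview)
Your proposal is correct and follows essentially the same approach as the paper: reduce via Lemma \ref{Lem:8.7} to verifying $\ad_\lambda\n$-invariance of each $\eta_m(v)$, then compute directly using the Leibniz rule. The paper's proof is terser---it shows one sample computation (the case $w\in\g\left(\tfrac12\right)$, $t>0$) and leaves the rest implicit---and for the hardest case $\eta_m(fz^k)$ it invokes a shortcut via the Jacobi identity (observing that $\eta_m(fz^{i+j})$ appears as a $\lambda$-bracket of already-verified generators, so its $\n$-invariance follows) rather than the direct anomaly-cancellation you outline; your direct route would also work but is more laborious.
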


Equivalently, Theorem \ref{Thm:8.8} can be written as follows.

\begin{thm} 
Suppose that $u \in \g
\left(\frac{1}{2}\right), v \in \g(0), w \in \g\left(-\frac{1}{2} \right)$, $t=0, \cdots, m-1,$  and $ v_f \in \g_f(0),\ w_f \in \g_f\left(-\frac{1}{2} \right)$. Then the following elements freely generate $\WW(\g, \Lambda_m, k)$ as a differential algebra : 
\begin{align*}
 \eta_m (ez^t) & = ez^t; \\
 \eta_m (uz^t) & = u z^t+ \sum_{i=1}^{2s} (z_i z^m) \{ uz^t, z_i^*\}; \\
 \eta_m (v z^t) & = vz^t + \sum_{i=1}^{2s} (z_i z^m) \{ vz^t, z_i^*\}  + (x z^m)\{v,e\}+ \frac{1}{2} \sum_{i,j=1}^{2s} (z_j z^m)(z_i z^m) \{vz^t, z_i^*, z_j^*\};\\
 \eta_m (wz^t) & =  wz^t+ \sum_{i=1}^{2s} (z_i z^m) \{ wz^t, z_i^*\}  + (x z^m)\{wz^t,e\}+ \frac{1}{2} \sum_{i,j=1}^{2s} (z_j z^m)(z_i z^m) \{wz^t, z_i^*, z_j^*\} \\ & + \sum_{i=1}^{2s} (x z^m)(z_i z^m) \{wz^t, e , z_i^*\} + \frac{1}{6}\sum_{i,j,k=1}^{2s} (z_i z^m) (z_j z^m) (z_k z^m) \{ w^t, z_i^*, z_j^*, z_k^*\} \\
  & -\delta_{t,0}\left( \sum_{i=1}^{2s}(z_i^*, w) k \partial (z_iz^m) \right) ;\\
 \eta_m (fz^t) &=  fz^t+ \sum_{i=1}^{2s} (z_i z^m) \{ fez^t, z_i^*\}  - 2 (x z^m)(xz^t)+ \frac{1}{2} \sum_{i,j=1}^{2s} (z_j z^m)(z_i z^m) \{fz^t, z_i^*, z_j^*\} \\ & -2 \sum_{i=1}^{2s} (x z^m)(z_i z^m) \{xz^t , z_i^*\} - (xz^m)^2 (ez^t) + \frac{1}{6}\sum_{i,j,k=1}^{2s} (z_i z^m) (z_j z^m) (z_k z^m) \{ fz^t, z_i^*, z_j^*, z_k^*\} \\
 & - \sum_{i,j=1}^{2s} (x z^m) (z_i z^m) (z_j z^m)\{xz^t,z_i^*, z_j^*\} \\
 &+ \frac{1}{24}\sum_{i,j,k,l=1}^{2s} (z_i z^m) (z_j z^m) (z_k z^m)(z_l z^m) \{ fz^t, z_i^*, z_j^*, z_k^*, z_l^*\}\\
 &-\delta_{t,0}\left(k \partial (xz^m) +\frac{k}{2} \sum_{i=1}^{2s} \partial (z_i^*z^m) (z_i z^m)\right); \\
\end{align*}
and 
\begin{align*}
 \eta_m (v_f z^m) & = v_fz^m + \frac{1}{2}\sum_{i=1}^{2s} (z_i z^m) \{ v_fz^m, z_i^*\} ;\\
 \eta_m (w_f z^m) & =  w_f z^m+ \sum_{i=1}^{2s} (z_i z^m) \{ w_f z^m, z_i^*\} + \frac{1}{3} \sum_{i,j=1}^{2s} (z_j z^m)(z_i z^m) \{w_f z^m, z_i^*, z_j^*\}   ;\\
 \eta_m (f z^m) &=  fz^m+ \sum_{i=1}^{2s} (z_i z^m) \{ fz^m, z_i^*\}  -  (x z^m)^2+ \frac{1}{2} \sum_{i,j=1}^{2s} (z_j z^m)(z_i z^m) \{fz^m, z_i^*, z_j^*\} \\ &  + \frac{1}{6}\sum_{i,j,k=1}^{2s} (z_i z^m) (z_j z^m) (z_k z^m) \{ fz^m, z_i^*, z_j^*, z_k^*\} \\
 &+ \frac{1}{24}\sum_{i,j,k,l=1}^{2s} (z_i z^m) (z_j z^m) (z_k z^m)(z_l z^m) \{ fz^m, z_i^*, z_j^*, z_k^*, z_l^*\}.
 \end{align*}

\end{thm}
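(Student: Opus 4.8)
The plan is to derive the theorem from Proposition \ref{Prop:8.2}, which presents $\WW(\g,\Lambda_m,k)$ as the free differential algebra on the coordinates $\gamma_v$ of the universal canonical Lax operator $\mathcal{L}_m^{can}=e^{\ad S}\mathcal{L}_m=k\partial+Q_m^{can}+\Lambda_m\otimes 1$, where (for the chosen $\ad h$-invariant complement $V_m$) $S\in\n\otimes\mathcal{V}_{[d,m]}$ is the unique gauge transformation of Proposition \ref{Prop:8.1}; so it suffices to compute $S$ and $Q_m^{can}$ explicitly for a minimal nilpotent $f$ and to verify that $\gamma_v=\eta_m(v)$ for every $v$ in the displayed list.

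First I would record the relevant structure: for a minimal nilpotent $\n=\g(\tfrac12)\oplus\CC e$ with $\g(\tfrac32)=0$, $[e,\n]=0$, $[\g(\tfrac12),\g(\tfrac12)]\subseteq\CC e$, $\ad f\colon\g(\tfrac12)\xrightarrow{\sim}\g(-\tfrac12)$ and $[f,e]=h$, so that $[f,\n]z^{-m}=(\g(-\tfrac12)\oplus\CC h)z^{-m}$ and one may take $V_m=(\g_f(0)\oplus\g(\tfrac12)\oplus\CC e)z^{-m}$. Two observations will be used throughout: $\{z_i\}_{i=1}^{2s}$ and $\{[z_i^*,f]\}_{i=1}^{2s}$ are $(\cdot,\cdot)$-dual bases of $\g(\tfrac12)$ and $\g(-\tfrac12)$, since $(z_j,[z_i^*,f])=([z_j,z_i^*],f)=\delta_{ij}$; and $[z_i^*,p]=[e,p]=0$ for $p\in\mathfrak{z}(\n)$, so the summand $-pz^{-m-1}$ of $\Lambda_m$ drops out of $e^{\ad S}$ entirely. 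Because $\g(\tfrac32)=0$, the gauge transformation has the short form $S=S_{1/2}+S_1$ with $S_{1/2}\in\g(\tfrac12)\otimes\mathcal{V}_{[d,m]}$, $S_1\in\CC e\otimes\mathcal{V}_{[d,m]}$, and the series $e^{\ad S}$ terminates — which is exactly why each $\eta_m(a)$ is a finite sum of at most fourfold brackets.

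Next I would solve $e^{\ad S}\mathcal{L}_m=k\partial+Q_m^{can}+\Lambda_m\otimes 1$ order by order in the $\gr_2$-filtration. Writing $S_{1/2}=\sum_i z_i^*\otimes s_i$, the requirement that the transformed current have vanishing $\g(-\tfrac12)z^{-m}$-component reads $\sum_i[z_i^*,f]z^{-m}\otimes\big((z_iz^m)-s_i\big)=0$ by the duality above, forcing $s_i=z_iz^m$; the requirement of vanishing $\CC h\,z^{-m}$-component then forces $S_1=e\otimes\big(xz^m+(\text{quadratic in the }z_iz^m)\big)$, with the quadratic correction controlled by the constants $[z_i^*,z_j^*]=c_{ij}e$. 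Finally I would read off $\gamma_v=(v,Q_m^{can})$, contracting the $\hat{\g}$-factor: from $(\ad S)^l(b\otimes 1)=\sum[z_{k_l}^*,[\cdots[z_{k_1}^*,b]\cdots]]\otimes(z_{k_1}z^m)\cdots(z_{k_l}z^m)$ and ad-invariance one gets $(v,[z_{k_l}^*,[\cdots[z_{k_1}^*,b]\cdots]])=(\{v,z_{k_l}^*,\dots,z_{k_1}^*\},b)$, summing over $b$ reconstructs $\{v,z_{k_l}^*,\dots,z_{k_1}^*\}$, and reversing the indices of summation turns $e^{\ad S}$ applied to the current part into $\tilde\eta_m(v)$; the contribution $(\ad S)(k\partial)=-k\,(\text{Lie part})\otimes\partial(\text{coefficient})$ of Definition \ref{Def:7.4}, which is supported only on the $z^0$-components, yields precisely the $-\delta_{t,0}$ correction terms. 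Hence $\gamma_v=\eta_m(v)$, and Proposition \ref{Prop:8.2} finishes the proof.

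The hard part is the bookkeeping in the last paragraph: pinning down the quadratic part of $S_1$, verifying that the ``bad'' $\CC h$-- and $\g(-\tfrac12)$--components of the transformed current at order $z^{-m}$ really cancel, and tracking the signs coming from the iterated $\ad S$, from $(\ad S)(k\partial)$, and from the $(\cdot,\cdot)$-pairing, so that the output reproduces the displayed families verbatim. This is a finite but intricate Lie-algebra computation; the $\g=\sll_2$ and $\g=\sll_n$ cases of Section \ref{Subsec:8.2} are the $s=0$ and $\mathrm{SL}_n$-minimal instances and serve as a template.
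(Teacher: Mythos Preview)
Your approach is correct in principle and would establish the theorem, but it takes a different route from the paper.

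You propose to compute the gauge transformation $S$ of Proposition~\ref{Prop:8.1} explicitly, read off the canonical coordinates $\gamma_v$ of Proposition~\ref{Prop:8.2}, and then verify term by term that $\gamma_v=\eta_m(v)$. This identification is true --- the unexpected coefficients $\tfrac12$ and $\tfrac13$ in the $z^m$-level formulas arise from cancellations between the $e^{\ad S}$-action on $Q_m$ and the contribution of $\Lambda_m\otimes 1$ (through $(ez^m,-fz^{-m})=-1$) --- but carrying it out in full is a substantial bookkeeping exercise, as you acknowledge.

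The paper bypasses this computation entirely by invoking Lemma~\ref{Lem:8.7}. That lemma says that \emph{any} family $\psi_v=v+C_v$ with $C_v$ in strictly higher $\gr_2$-degree, once known to lie in $\WW(\g,\Lambda_m,k)$, is automatically a free generating set; one need not show they coincide with the particular $\gamma_v$'s. Hence the paper's proof reduces to the purely verificational task of checking $\{z_l^*{}_\lambda\,\eta_m(a)\}=0$ (and $\{e_\lambda\,\eta_m(a)\}=0$) for each displayed element. The hardest case $\eta_m(fz^t)$ is handled by a Jacobi-identity trick rather than direct expansion: once one knows $\{\eta_m(vz^i)_\lambda\,\eta_m(wz^j)\}=\eta_m((\{v,w\},e)\,fz^{i+j})$ for suitable $v,w$, invariance of $\eta_m(fz^t)$ follows formally.

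In summary, your argument buys the sharper conclusion $\gamma_v=\eta_m(v)$ at the price of a longer computation; the paper's argument trades away this identification for a much shorter proof via the triangularity lemma.
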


\begin{proof}
By Lemma \ref{Lem:8.7}, the only thing to check is that the $\ad_\lambda \n$-action trivially acts on $\eta_m\left(\bigoplus_{i=0}^{m-1}\g z^i \oplus \g_f z^m\right)$.  For example, in the case when $ w \in \g \left( \frac{1}{2} \right)$ and $t>0$, we have  
\begin{align*}
&\{ z^*_{l \ \lambda} \sum_{i=1}^{2s} (z_iz^m) \{w z^t, z_i^*\}\}  = \{w z^t, z_l^*\} + \sum_{i=1}^{2s} (z_i z^m) \{ z_l^* ,\{w z^t, z_i^*\}\} ; \\
&\{ z^*_{l \ \lambda} \sum_{i,j=1}{2s} (z_i z^m) (z_j z^m) \{w z^t, z_i^*, z_j^*\} \}= 2 \sum_{i=1}^{2s} (z_i z^m) \{w z^t, z_i^i, z_l^*\}  \\
& -\sum_{i,j=1}^{2s} (z_i z^m) (z_j z^m) \{w z^t, z_i^*, z_j^*, z_k^*\} -\{z_l^* , \{v,e\} (2xz^m)\} \\
&+ \{ z_l^*, \sum_{i=1}^{2s} \{z_i^*, \{v,e\}\} (z_i z^m) (2x z^m)\} + \sum_{i=1}^{2s} \{z_i^*, \{v,e\}\} (z_i z^m) (z_l^* z^m) ;\\
\end{align*} 
and 
\begin{align*}
& \{ z^*_{l \ \lambda} \sum_{i,j,k=1}^{2s} (z_i z^m) (z_j z^m) (z_k z^m) \{ v, z_i^*, z_j^*, z_k^*\}\} \\
& = 3 \sum_{i,j=1}^{2s} (z_i z^m) (z_j z^m) \{v, z_i^*, z_j^*, z_l^*\} + 3 \sum_{i=1}^{2s} \{v, z_i^*, e\} (z_i z^m) (z_l^* z^m) .
\end{align*}
Hence $  \{ z^*_{l \ \lambda}  \eta_m(w z^t) \}=0.$
The most complicated case to check is $\{z^*_{l \ \lambda} \ \eta_m(fz_k)\} =0.$ However, since
$\{\eta_m(v z^i) _\lambda  \eta_m(w z^j)\} = \eta_m((\{v,w\}, e) fz^{i+j})$,
by Jacobi identity, we have $\{z^*_{l,\ \lambda} \ \eta_m(fz_k)\} =0.$
\end{proof}

\begin{thm} \label{Thm:8.9}
Assume that $m >0$. The first $\lambda$-brackets between the generating elements are
\begin{align*}
& \{ \eta_m(f)  _\lambda   \eta_m(f) \}_1 = -2k\lambda; \\
& \{ \eta_m(f)  _\lambda  \eta_m(u z^i) \}_1 = -\eta_m([ f, u z^i]z ) +\eta_m([ u z^i, e ]),\ \text{ if }   \ u z^i  \neq f ; \\
& \{ \eta_m(a z^i)  _\lambda  \eta_m(b z^j)\}_1 = - \eta_m([a,b] z^{i+j+1} ),\ \  \text{ otherwise };
\end{align*}
and the second $\lambda$-brackets are
\begin{align*}
& \{ \eta_m(a) _\lambda  \eta_m(b) \}_2 = \eta_m([a, b])+ k \lambda (a,b) ; \\
& \{ \eta_m(fz) _\lambda  \eta_m(f) \}_2 = -2\eta_m(x)-k \lambda;\\
& \{ \eta_m(fz)  _\lambda  \eta_m(u) \}_2 = \eta_m([u, e]) ; \\
& \{ \eta_m(fz) _\lambda \eta_m(a z^i) \}_2 = -\eta_m([fz, a z^i]) + \eta_m([ a z^i, e ]); \\
& \{ \eta_m(a z^i) _\lambda\eta_m(b z^j)\}_2 =- \eta_m([a,b] z^{i+j} );
\end{align*}
for $a, b\in \g$, $\ u \in \bigoplus_{i=-\frac{1}{2}}^{1} \g(i)$, and $i,j>0.$
\end{thm}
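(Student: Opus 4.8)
The two Poisson $\lambda$-brackets on $\WW(\g,\Lambda_m,k)$ are, by construction, the restrictions of the brackets $\{\cdot_\lambda\cdot\}_1$ and $\{\cdot_\lambda\cdot\}_2$ on $\mathcal{V}_{[d,m]}$ of Definition~\ref{Def:7.17}, and on all of $\mathcal{V}_{[d,m]}$ they are recovered from the seed brackets $\{az^i_\lambda bz^j\}_1=-[a,b]z^{i+j+1}$ and the piecewise $\{\cdot_\lambda\cdot\}_2$ by the master formula~(\ref{master formula}), bearing in mind that in $\mathcal{V}_{[d,m]}=S(\CC[\partial]\otimes\hat{\g}^+_{[d,m]})/I$ every weight vector in $\hat{\g}_{(d+1)m+1}$ is identified with the scalar $\chi(\,\cdot\,)$ and every weight vector in $\hat{\g}_{>(d+1)m+1}$ with $0$. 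The plan is to exploit this together with the explicit shape of the generators. Extend $\eta_m$ $\CC[\lambda]$-linearly and by $\eta_m(c)=c$ on constants, so that $\eta_m$ of a frozen weight vector $a\in\hat{\g}_{(d+1)m+1}$ is $\chi(a)$ and $\eta_m$ of a vector of weight $>(d+1)m+1$ is $0$; then the assertion becomes that $\{\eta_m(az^i)_\lambda\eta_m(bz^j)\}_\ell$ equals $\eta_m(\{az^i_\lambda bz^j\}_\ell)$ together with an additional term that is absent unless the left slot is $\eta_m(f)$ (for $\ell=1$) or $\eta_m(fz)$ (for $\ell=2$). Thus the whole statement reduces to a finite computation with the master formula, which I would organize in two stages.

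First, by Proposition~\ref{Prop:8.2} and the explicit formulas of Theorem~\ref{Thm:8.8}, each generator has the shape $\eta_m(v)=v+r_v$ with every monomial of $r_v$ of $\gr_2$-degree strictly larger than $\gr_2(v)$. Writing $\eta_m(az^i)=az^i+r$ and $\eta_m(bz^j)=bz^j+s$ and expanding the bracket bilinearly into $\{az^i_\lambda bz^j\}_\ell+\{az^i_\lambda s\}_\ell+\{r_\lambda bz^j\}_\ell+\{r_\lambda s\}_\ell$, one notes, since $\{\cdot_\lambda\cdot\}_1$ raises $\gr_2$ by $d+1$ while $\{\cdot_\lambda\cdot\}_2$ preserves it, that the first summand carries the lowest $\gr_2$-content of the result. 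Because $\WW(\g,\Lambda_m,k)$ is \emph{freely} generated by the $\eta_m(v)$'s and $\eta_m(w)=w+(\text{higher }\gr_2)$, matching lowest-$\gr_2$ parts forces $\{\eta_m(az^i)_\lambda\eta_m(bz^j)\}_\ell=\eta_m(\{az^i_\lambda bz^j\}_\ell)+(\text{strictly higher }\gr_2)$, and it remains to see that the higher-$\gr_2$ remainder vanishes in the non-exceptional cases. Here $f$ minimal enters: $\g=\g(-1)\oplus\g(-\tfrac{1}{2})\oplus\g(0)\oplus\g(\tfrac{1}{2})\oplus\g(1)$ with $\dim\g(\pm1)=1$ and $\g(\pm\tfrac{3}{2})=0$, so the iterated brackets $\{v,z^*_{i_1},\dots,z^*_{i_t}\}$ entering $\eta_m$ truncate after at most four steps (the maximum occurring only for $v=f$), and products of the factors $z_iz^m$, $xz^m$ collapse through $[z_i,z_j^*]=\delta_{ij}e$, $[z_i^*,z_j^*]\in\CC e$ and $[e,\g(\tfrac{1}{2})]=0$. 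The resulting check is term-by-term but mechanical.

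Second, the additional terms of the exceptional brackets — $-2k\lambda$; the $\eta_m([uz^i,e])$ in $\{\eta_m(f)_\lambda\eta_m(uz^i)\}_1$; and $-2\eta_m(x)$, $\eta_m([u,e])$, $\eta_m([az^i,e])$, $-k\lambda$ for the brackets headed by $\eta_m(fz)$ — all originate from the interplay of the nontrivial corrections in $\eta_m(f)$ and $\eta_m(fz)$ with sesquilinearity and with the weight-freezing: $\{f_\lambda xz^m\}_\ell$, $\{xz^m_\lambda f\}_\ell$ and $\{fz_\lambda xz^m\}_2$ are \emph{nonzero constants} because $fz^{m+1}$ has the frozen weight $(d+1)m+1$ (in fact $fz^{m+1}\equiv-1$), so the $\partial$-corrections $-k\partial(xz^m)-\tfrac{1}{2}k\sum_i\partial(z_i^*z^m)(z_iz^m)$ of $\eta_m(f)$ turn these into $\lambda$-linear terms via $\{\partial a_\lambda b\}=-\lambda\{a_\lambda b\}$ and $\{a_\lambda\partial b\}=(\lambda+\partial)\{a_\lambda b\}$, while the $xz^m$-factor in the corrections of the right argument produces the $\eta_m([\,\cdot\,,e])$-terms against $\eta_m(fz)$. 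To keep this finite I would not expand the degree-four corrections directly; having the generic formulas and $\{\eta_m(az^i)_\lambda\eta_m(bz^j)\}_\ell=\eta_m([a,b]z^{\cdots})$ for the ``small'' arguments from the first stage, I would instead pin down the few exceptional brackets using the Jacobi identity for the compatible family $\{\cdot_\lambda\cdot\}_\alpha=\alpha\{\cdot_\lambda\cdot\}_1+\{\cdot_\lambda\cdot\}_2$ (a PVA $\lambda$-bracket by Proposition~\ref{Prop:7.18}) applied to triples such as $(\eta_m(e),\eta_m(f),\eta_m(\,\cdot\,))$ and $(\eta_m(fz),\eta_m(e),\eta_m(\,\cdot\,))$ — exactly the device already used in the proof of Theorem~\ref{Thm:8.8} — and then obtain the brackets with an exceptional generator in the right slot from skewsymmetry.

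The reduction in the first stage is soft; the real difficulty, and the place errors creep in, is the second: deciding precisely which brackets acquire an extra term and computing its value, i.e.\ controlling the interaction of the $\gr_2$-inhomogeneous degree-four corrections of $\eta_m(f)$ and $\eta_m(fz)$ with the weight-freezing in $\mathcal{V}_{[d,m]}$. Routing this through the Jacobi identity rather than a brute-force expansion is what keeps the argument readable; alternatively, since $f$ is minimal, one may bootstrap from the already known $\lambda$-brackets of the classical affine $\WW$-algebra $\WW(\g,f,k)$ — essentially the $z^0$-sector, cf.~\cite{S,KW} — regarding the sectors of positive $z$-degree as a successive extension.
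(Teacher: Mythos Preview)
Your route and the paper's diverge. The paper does not argue by a $\gr_2$-filtration at all; its engine is a short combinatorial lemma (Lemma~\ref{Lem:0326_7.7}) which says that, after summing over all indices,
\[
\frac{1}{r!}\{\{v,w\},z_{i_1}^*,\dots,z_{i_r}^*\}
=\sum_{l=0}^{r}\frac{1}{l!\,(r-l)!}\,\{v,z_{i_1}^*,\dots,z_{i_l}^*\}\,\{w,z_{i_{l+1}}^*,\dots,z_{i_r}^*\}.
\]
Together with the short list in Remark~\ref{Rem:0326_7.6} of which brackets against $\n z^m$ and $xz^m$ vanish in $\mathcal V_{[d,m]}$, this lets one compute $\{\tilde\eta_m(a)_\lambda\tilde\eta_m(b)\}_\ell$ \emph{directly}: the product of the two ``exponential'' sums collapses to $\tilde\eta_m$ of the bracket, and the $\partial$-corrections in $\eta_m(f)$ then produce the exceptional pieces by sesquilinearity exactly as you anticipate. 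For the one case that does not close on the nose, $\{\eta_m(a)_\lambda\eta_m(b)\}_2$ with $a,b\in\g$, the paper uses a clean trick you do not mention: the computation gives $\eta_m([a,b])+k\lambda(a,b)+Y_0+\lambda Y_1$ with $Y_0,Y_1\in\CC_{\text{diff}}\big[(\g(\tfrac12)\oplus\g(1))z^m\big]$ having zero constant term, but $Y_0,Y_1$ also lie in $\WW(\g,\Lambda_m,k)$, and the intersection of these two spaces is $\CC$; hence $Y_0=Y_1=0$.

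About your first stage: the triangularity observation is correct, but ``matching lowest $\gr_2$-parts'' only shows
\[
\{\eta_m(az^i)_\lambda\eta_m(bz^j)\}_\ell=\eta_m(\{az^i_\lambda bz^j\}_\ell)+(\text{higher }\gr_2),
\]
and it does \emph{not} by itself force the higher remainder to vanish. The difficulty is precisely the one you flag as the hard part of stage~2: $\{\cdot_\lambda\cdot\}_1$ does not uniformly raise $\gr_2$ by $d+1$ on $\mathcal V_{[d,m]}$ because of the weight-freezing (a bracket landing in $\hat\g_{(d+1)m+1}$ becomes a constant, one landing higher becomes $0$), so the filtration is not compatible with the bracket and the inductive peeling does not terminate on formal grounds. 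What you call ``term-by-term but mechanical'' is therefore the whole content, and this is exactly the computation the paper organizes via Lemma~\ref{Lem:0326_7.7}. Your Jacobi-identity idea for the exceptional brackets is reasonable --- the paper uses it once, in the proof of Theorem~\ref{Thm:8.8}, to avoid expanding $\{z_l^*{}_\lambda\eta_m(fz^k)\}$ --- but for Theorem~\ref{Thm:8.9} the Leibniz lemma plus the intersection-with-$\WW$ trick is both shorter and leaves nothing implicit.
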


We need the following remark and the lemma to show Theorem \ref{Thm:8.9}.

\begin{rem}\label{Rem:0326_7.6}
Recall that $\{\cdot_\lambda \cdot\}_1$ and $\{\cdot_\lambda \cdot\}_2$ are the $\lambda$-brackets on $\mathcal{V}_{[d,m]}$. We have
\begin{enumerate}[(i)]
\item $\{\ \n z^m\ _\lambda\ \mathcal{V}_{[d,m]}\ \}_1= \{\ \n z^m\  _\lambda\ \mathcal{V}_{[d,m]}\ \}_2=0,$
\item if $a \in \bigoplus_{i>-1}\g(i) \oplus\bigoplus_{t>0}\hat{\g}^t$ then $ \{ xz^m_\lambda\ a \}_1= 0$ and $\{ xz^m\ _\lambda \ f \}_1=  f z^{m+1} = -1,$
\item if $a \in \g \oplus\bigoplus_{i>-1} \g(i) z \oplus\bigoplus_{t>1}\hat{\g}^t$ then $\{xz^m\ _\lambda\ a\}_1= 0$ and $\{ xz^m\ _\lambda\  fz \}_2=  f z^{m+1} = -1 .$
\end{enumerate}
In $m=0$ case,  (i) and (iii) are not true since $\{z_{i\ \lambda } z_j^*\}_2 =-1$ and $\{x_\lambda z_i\} _2 = \frac{1}{2} z_i.$
\end{rem}

\begin{lem}\label{Lem:0326_7.7}
Let $v,w \in \mathcal{V}_{[d,m]}$. Then for any $k\geq 0$, the following equation holds:
\begin{equation} \label{Eqn:0326_7.17}
\begin{aligned}
&\sum_{i_1, \cdots, i_r=1}^{2s+1} \frac{1}{r!} \{\{v,w\}, z_{i_1}^*, z_{i_2}^*, \cdots, z_{i_r}^*\} \\
&= \sum_{l=0}^{r}\frac{1}{l!}\frac{1}{(r-l)!}\sum_{i_1, \cdots, i_r=1}^{2s+1} \{v, z_{i_1}^*, z_{i_2}^*, \cdots, z_{i_l}^*\}\{w, z_{i_{l+1}}^*, z_{i_{l+2}}^*, \cdots, z_{i_r}^*\}.
\end{aligned}
\end{equation}
\end{lem}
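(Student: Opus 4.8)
The plan is to prove the identity~\eqref{Eqn:0326_7.17} by induction on~$r$, treating it as a ``Leibniz rule for the iterated bracket with the $z_i^*$'s'' once we know how a single extra bracketing against $z_\ell^*$ interacts with a product. The key observation is that the multi-bracket operation $a \mapsto \{a, z_{i_1}^*, \dots, z_{i_r}^*\}$ is, up to the combinatorial bookkeeping of which $z^*$'s land on which factor, exactly the $r$-fold iteration of the derivations $D_\ell := \{\,\cdot\,, z_\ell^*\}$ of the Poisson algebra $\mathcal{V}_{[d,m]}$; and each $D_\ell$ is a derivation of the commutative product because $\{\cdot,\cdot\}$ is the Poisson bracket on $\mathcal{V}_{[d,m]}$ induced from the Lie bracket on $\hat\g$. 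So the heart of the matter is the standard ``higher Leibniz'' formula for iterated derivations.

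First I would record the base case $r=0$, which is the trivial identity $\{v,w\}=\{v,w\}$, and the case $r=1$, which is precisely the Leibniz rule
\begin{equation*}
\sum_{i=1}^{2s+1}\{\{v,w\}, z_i^*\} = \sum_{i=1}^{2s+1}\bigl(\{v,z_i^*\}\,w + v\,\{w,z_i^*\}\bigr),
\end{equation*}
valid since each $D_i$ is a derivation. Then, assuming \eqref{Eqn:0326_7.17} for a given $r$, I would apply $\tfrac{1}{r+1}\sum_{i_{r+1}} D_{i_{r+1}}$ to both sides. On the left this directly produces the $r{+}1$ case of the left-hand side, because $\{\{v,w\},z_{i_1}^*,\dots,z_{i_{r+1}}^*\} = D_{i_{r+1}}\{\{v,w\},z_{i_1}^*,\dots,z_{i_r}^*\}$ and the indices are all summed symmetrically so the prefactor $\tfrac{1}{(r+1)!}$ comes out correctly. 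On the right, $D_{i_{r+1}}$ hits a product of two iterated brackets; by the Leibniz rule it either extends the first factor to an $(l{+}1)$-fold bracket or the second to an $(r{-}l{+}1)$-fold bracket, and a short count of binomial coefficients (the identity $\tfrac{1}{l!(r-l)!}\cdot\tfrac{1}{r+1}$ summed appropriately regroups into $\tfrac{1}{l!(r+1-l)!}$) gives exactly the right-hand side for $r+1$. This is the one genuinely computational step, but it is the familiar Leibniz-rule reindexing and nothing deeper.

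The main obstacle I anticipate is purely notational rather than conceptual: one must be careful that the symbols $z_{2s+1}=x$ and $z_{2s+1}^*=e$ are included in all the sums, and that the ``$(t-1)$-bracket'' convention $\{a_1,\dots,a_t\}=\{\{\cdots\{a_1,a_2\},\dots\},a_t\}$ is left-nested, so that appending an index always acts on the outermost slot --- which is exactly what makes $\{a,z_{i_1}^*,\dots,z_{i_{r+1}}^*\} = D_{i_{r+1}}\{a,z_{i_1}^*,\dots,z_{i_r}^*\}$ hold on the nose. Once that bookkeeping is fixed, the induction closes. I would also remark that no use is made of any special property of the $z_i^*$ beyond $D_i$ being a derivation, so the lemma is really a general statement about iterated derivations of a commutative algebra applied to a product; stating it in that generality first and then specializing would make the proof cleanest, though the inductive argument above is self-contained as written.
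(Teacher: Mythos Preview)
There is a genuine gap, stemming from a misreading of the notation. On the right-hand side of \eqref{Eqn:0326_7.17}, the juxtaposition
\[
\{v, z_{i_1}^*, \dots, z_{i_l}^*\}\{w, z_{i_{l+1}}^*, \dots, z_{i_r}^*\}
\]
is the \emph{Poisson bracket} of the two iterated brackets, not their commutative product. The paper's typography is admittedly ambiguous here, but the intent is fixed by context: the paper's own proof invokes the Jacobi identity (not the Leibniz rule), and when the lemma is applied in the proof of Theorem~\ref{Thm:8.9} this term reappears written explicitly as $\{\,\{a, z_{i_1}^*, \dots\},\,\{b, z_{i_{l+1}}^*, \dots\}\,\}$.

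Under your product reading the base case $r=0$ already fails: the right-hand side would collapse to $v\cdot w$, not $\{v,w\}$. Likewise your $r=1$ identity $\{\{v,w\}, z_i^*\} = \{v,z_i^*\}\,w + v\,\{w,z_i^*\}$ is false in general (take $v=e$, $w=f$, $z_i^*=e$ in $\sll_2$: the left side is $2e$, the right side is $-eh$). What you actually need is that $D_i=\{\cdot,z_i^*\}$ is a derivation of the \emph{bracket}, i.e.\ $\{\{v,w\},z_i^*\}=\{\{v,z_i^*\},w\}+\{v,\{w,z_i^*\}\}$, and that is the Jacobi identity.

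Once this is corrected---replace ``product'' by ``Poisson bracket'' and ``Leibniz rule'' by ``Jacobi identity'' throughout---your inductive scheme is exactly the paper's: assume the identity at level $n$, apply $\tfrac{1}{n+1}\sum_{i_{n+1}}D_{i_{n+1}}$ to both sides, distribute $D_{i_{n+1}}$ over the outer bracket on the right via Jacobi, and regroup the binomial factors $\tfrac{1}{l!(n-l)!}$ into $\tfrac{1}{l!(n+1-l)!}$. So your overall strategy is the right one; only the binary operation on the right-hand side was misidentified.
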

\begin{proof}
If $r=0$, equation (\ref{Eqn:0326_7.17}) holds obviously. Suppose that we have (\ref{Eqn:0326_7.17}) when $r=n$. Then by Jacobi identity, we have
\begin{equation} 
\begin{aligned}
&\sum_{i_1, \cdots, i_{n+1}=1}^{2s+1} \frac{1}{(n+1)!} \{\{\{v,w\}, z_{i_1}^*\}, z_{i_2}^*, \cdots, z_{i_{n+1}}^*\} \\
&= \frac{1}{n+1}\sum_{l=0}^{n}\frac{1}{l!}\frac{1}{(n-l)!}\sum_{i_1, \cdots, i_{n+1}=1}^{2s+1} \{v, z_{i_1}^*, z_{i_2}^*, \cdots, z_{i_l}^*, z_{i_{n+1}}^*\}\{w, z_{i_{l+1}}^*, z_{i_{l+2}}^*, \cdots, z_{i_n}^*\}\\
& + \frac{1}{n+1}\sum_{l=0}^{n}\frac{1}{l!}\frac{1}{(n-l)!}\sum_{i_1, \cdots, i_{n+1}=1}^{2s+1} \{v, z_{i_1}^*, z_{i_2}^*, \cdots, z_{i_l}^*\}\{w, z_{i_{l+1}}^*, z_{i_{l+2}}^*, \cdots, z_{i_n}^*, z_{i_{n+1}}^*\}\\
&= \frac{1}{n+1}\sum_{l=0}^{n}\frac{(l+1)+(n-l)}{(l+1)!(n-l)!}\sum_{i_1, \cdots, i_{n+1}=1}^{2s+1} \{v, z_{i_1}^*, z_{i_2}^*, \cdots, z_{i_l}^*, z_{i_{n+1}}^*\}\{w, z_{i_{l+1}}^*, z_{i_{l+2}}^*, \cdots, z_{i_n}^*\} \\
&+ \frac{1}{(n+1)!} \sum_{i_1, \cdots, i_{n+1}=1}^{2s+1}\{v,\{w,z_1^*, \cdots, z_{i_{n+1}}^*\}\}\\
&= \sum_{l=0}^{n+1}\frac{1}{l!}\frac{1}{((n+1)-l)!}\sum_{i_1, \cdots, i_{n+1}=1}^{2s+1} \{v, z_{i_1}^*, z_{i_2}^*, \cdots, z_{i_l}^*\}\{w, z_{i_{l+1}}^*, z_{i_{l+2}}^*, \cdots, z_{i_{n+1}}^*\}.
\end{aligned}
\end{equation}
Hence (\ref{Eqn:0326_7.17}) holds for any $r\geq 0.$
\end{proof}

$\bold{Proof\ of\ Theorem\ \ref{Thm:8.9}:}$

$\newline$

If $ a,b \in \bigoplus_{i>-1} \g(i)\oplus(\bigoplus_{t>0}\hat{\g}^t)$, by Remark \ref{Rem:0326_7.6} and Lemma \ref{Lem:0326_7.7}, we have
\begin{equation}
\begin{aligned}
& \{\eta_m(a)_\lambda \eta_m(b)\}_1= \{\tilde{\eta}_m(a)_\lambda \tilde{\eta}_m(b)\}_1 \\
& =\sum_{r=0}^{4}  \sum_{i_1, \cdots, i_r=1}^{2s+1} \frac{1}{r!}\left((z_{i_1} z^m) \cdots (z_{i_r} z^m)\right) (-\{\{a,b\}, z_{i_1}^*, z_{i_2}^*, \cdots, z_{i_r}^*\} z)\\
 &=- \tilde{\eta}_m([a,b]z)= -\eta_m([a,b]z).
\end{aligned}
\end{equation}
Also, we have 
\begin{equation}
\begin{aligned}
&\{\eta_m(f)_\lambda \eta_m(f)\}_1=- \{\tilde{\eta}_m(f), \tilde{\eta}_m(f)\}z+\{f_\lambda(-k\partial(xz^m))\}_1+\{(-k\partial(xz^m)_\lambda f\}_1\\
&\ \ \ \ \ \ \ \ \ =\{f_\lambda(-k\partial(xz^m))\}_1+\{(-k\partial(xz^m))_\lambda f\}_1=-2k\lambda,
\end{aligned}
\end{equation}
and, if $b \in \bigoplus_{i>-1}\g(i)\oplus(\bigoplus_{t>0}\hat{\g}^t)$, we obtain 
\begin{equation}
\begin{aligned}
& \\
&\{\eta_m(f)_\lambda \eta_m(b)\}_1\\
& =  -\tilde{\eta}_m([f,b]z) + \sum_{p=0}^2 \sum_{i_i, \cdots, i_p=1}^{2s}\frac{1}{p!} (z_{j_1} z^m) \cdots (z_{j_p}z^m)\{\{b, e\}, z_{i_1}^*,\cdots, z_{i_p}^*\} + \{b,e,e\}(xz^m)\\
&  =-\tilde{\eta}_m([f,b]z)+\tilde{\eta}_m([b,e])= -\eta_m([f,b]z)+\eta_m([b,e]).
\end{aligned}
\end{equation}

\vskip 3mm

Next, let us compute the second $\lambda$-brackets. If $a, b\in \g$, we have
\begin{equation}
\{\eta_m(a)_\lambda \eta_m(b)\}_2 = \{\tilde{\eta}_m(a)_\lambda \tilde{\eta}_m(b)\}_2
\end{equation}
 by Remark \ref{Rem:0326_7.6}, and 
 \begin{equation}
\begin{aligned}
& \{\tilde{\eta}_m(a)_\lambda \tilde{\eta}_m(b)\}_2 \\
 &= \tilde{\eta}_m([a,b]) + \sum_{r=0}^{4}  \sum_{l=0}^{r} \sum_{i_1, \cdots, i_r=1}^{2s+1} 
\left((z_{i_{l+1}} z^m) \cdots (z_{i_r} z^m)\right)\\
 & \ \ \ \ \ \cdot\left( k(\lambda+\partial)(\{a, z_{i_1}^*, z_{i_2}^*, \cdots, z_{i_l}^*\},\{b, z_{i_{l+1}}^*, z_{i_{l+2}}^, \cdots, z_{i_r}^*\})\left((z_{i_1} z^m) \cdots (z_{i_l} z^m)\right)\right)\\
 &\\
 & =  \tilde{\eta}_m([a,b]) +k(\lambda+\partial)(a,b) + X_0+\lambda X_1 =\eta_m([a,b]) +k\lambda(a,b) + Y_0+\lambda Y_1 ,\\
 & 
 \end{aligned}
 \end{equation}
where $X_0, X_1 ,Y_0, Y_1\in \CC_{\text{diff}}\left[\left(\g\left(\frac{1}{2}\right) \oplus \g(1)\right)\cdot z^m\right]$ with zero constant term. Since $ \{\eta_m(a)_\lambda \eta_m(b)\}_2$ and  $\eta_m([a,b]) +k\lambda(a,b)$ are in $\WW(\g, \Lambda_m, k)[\lambda]$, the element $Y_0$ and $Y_1$ should be in  $\WW(\g, \Lambda_m, k)$. However, we know that  $(\CC_{\text{diff}}\left[\left(\g\left(\frac{1}{2}\right) \oplus \g(1)\right)\cdot z^m\right]\cap \WW(\g, \Lambda_m, k))=\CC$. Hence $Y_0=Y_1=0$. \\

If $a, b\in \bigoplus_{i>-1}\g(i)z \oplus \bigoplus \hat{\g}^{\geq 2}$, then 
 \begin{equation}
\begin{aligned}
\\
& \{\eta_m(a)_\lambda \eta_m(b)\}_2=\{\tilde{\eta}_m(a)_\lambda \tilde{\eta}_m(b)\}_2 \\
 &=  \sum_{r=0}^{4}  \sum_{l=0}^{r} \sum_{i_1, \cdots, i_r=1}^{2s+1}  \frac{1}{l!}\frac{1}{(r-l)!}\left((z_{i_1} z^m) \cdots (z_{i_l} z^m)\right)\left((z_{i_{l+1}} z^m) \cdots (z_{i_r} z^m)\right)\\
 & \hskip 25mm \cdot(- \{ \ \{a, z_{i_1}^*, z_{i_2}^*, \cdots, z_{i_l}^*\}\ ,\ \{b, z_{i_{l+1}}^*, \cdots, z_{i_r}^*\}\ \})\\
 & = - \tilde{\eta}_m([a,b]) = -\eta_m([a,b]).
 \end{aligned}
 \end{equation}

The second $\lambda$-bracket between $\eta_m(fz)$ and $\eta_m(f)$ is
\begin{equation}
\begin{aligned}
&\{\eta_m(fz)_\lambda \eta_m(f)\}_2=   \sum_{p=0}^2\sum_{i_i, \cdots, i_p=1}^{2s}\frac{1}{p!} (z_{j_1} z^m) \cdots (z_{j_p}z^m)\{\{f, e\}, z_{i_1}^*,\cdots, z_{i_p}^*\}\left\{fz \ _\lambda \   (x z^m)\right\}_2 \\
&\hskip 30mm +  \{f,e,e\}\left\{fz \ _\lambda \   \frac{1}{2}(xz^m)^2 \right\}_2 -k\{fz_\lambda \partial(xz^m)\}_2 \\
& =  \sum_{p=0}^2\sum_{i_i, \cdots, i_p=1}^{2s}\frac{1}{p!} (z_{j_1} z^m) \cdots (z_{j_p}z^m)\{-2x, z_{i_1}^*,\cdots, z_{i_p}^*\} + \{-2x,e\}(xz^m)-k\lambda\\
&\\
& = -2\eta_m(x)-k\lambda.\\
&\\
\end{aligned}
\end{equation}

If $u \in \bigoplus_{i\geq-\frac{1}{2}}\g(i)$, then 
\begin{equation}
\begin{aligned}
&\{\eta_m(fz)_\lambda \eta_m(u)\}_2=   \sum_{p=0}^2\sum_{i_i, \cdots, i_p=1}^{2s}\frac{1}{p!} (z_{j_1} z^m) \cdots (z_{j_p}z^m)\{\{u, e\}, z_{i_1}^*,\cdots, z_{i_p}^*\}\left\{fz \ _\lambda \   (x z^m)\right\}_2 \\
& =\sum_{p=0}^2  \sum_{i_i, \cdots, i_p=1}^{2s}\frac{1}{p!} (z_{j_1} z^m) \cdots (z_{j_p}z^m)\{\{u,e\}, z_{i_1}^*,\cdots, z_{i_p}^*\} 
 = \eta_m([u,e]).
\end{aligned}
\end{equation}

If $ b\in  \bigoplus_{i>0} \hat{\g}^i$, then 
\begin{equation}
\begin{aligned}
 \{\eta_m(fz)_\lambda \eta_m(b)\}_2 &=  \sum_{r=0}^{4}  \sum_{l=0}^{r} \sum_{i_1, \cdots, i_r=1}^{2s+1}  \frac{1}{l!}\frac{1}{(r-l)!}\left((z_{i_1} z^m) \cdots (z_{i_l} z^m)\right)\left((z_{i_{l+1}} z^m) \cdots (z_{i_r} z^m)\right)\\
 & \hskip 25mm\cdot (- [\ \{fz, z_{i_1}^*, z_{i_2}^*, \cdots, z_{i_l}^*\}\ ,\ \{b, z_{i_{l+1}}^*, z_{i_{l+2}}^, \cdots, z_{i_r}^*\}\ ])\\
 & +  \sum_{p=0}^2\sum_{i_i, \cdots, i_p=1}^{2s}\frac{1}{p!} (z_{j_1} z^m) \cdots (z_{j_p}z^m)\{\{b, e\}, z_{i_1}^*,\cdots, z_{i_p}^*\}\left\{fz \ _\lambda \   (x z^m)\right\}_2 \\
 &  + \ \ \  \{b,e,e\}\left\{fz \ _\lambda \   \frac{1}{2}(xz^m)^2 \right\}_2\\
 &\\
 &= -\eta_m([fz, b]) + \eta_m([b,e]).
\end{aligned}
\end{equation}
So we proved Theorem  \ref{Thm:8.9}.

\section{Integrable systems related to classical affine fractional $\WW$-algebras} \label{Sec:9}

Assume that $\Lambda_m:=-f z^{-m}-p z^{-m-1}$ is a semisimple element in $\hat{\g}$. Then
\begin{equation} \label{Eqn:9.1}
\hat{\g} = \ker(\ad \Lambda_m) \oplus \text{im} (\ad \Lambda_m).
\end{equation}
In this case, the following property holds.

\begin{prop} \label{Prop:9.1}
Let $L_m=k\partial+q_m+\Lambda_m\otimes 1$ be a Lax operator defined as in Definition \ref{Def:7.3}. Then there exist unique $S \in \bigoplus_{i>0}\hat{\g}_{i} \otimes  \mathcal{V}_{[d,m]}$ and unique $h(q_m)\in (\ker \ad \Lambda_m \cap \hat{\g}) \otimes  \mathcal{V}_{[d,m]}$ such that  
\begin{equation} \label{Eqn:9.2}
L_{m,0}= e^{\ad S}(L_m)=k \partial +\Lambda_m \otimes1 + h(q_m).
\end{equation}
\end{prop}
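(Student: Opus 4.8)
The plan is to construct $S$ and $h(q_m)$ recursively with respect to the $\gr_2$-grading on $\hat{\g}$, exactly mirroring the inductive argument in the proof of Proposition \ref{Prop:8.1}. First I would expand the conjugation identity: writing $S=\sum_{i>0}S_i$ with $S_i\in\hat{\g}_i\otimes\mathcal{V}_{[d,m]}$, the equation $e^{\ad S}(L_m)=k\partial+\Lambda_m\otimes 1+h(q_m)$ is equivalent to
\begin{equation}\label{Eqn:9.3proof}
h(q_m)+[\Lambda_m\otimes 1,S] = q_m+[S,k\partial+q_m]+\tfrac{1}{2}[S,[S,\mathcal{L}_m]]+\tfrac{1}{6}[S,[S,[S,\mathcal{L}_m]]]+\cdots,
\end{equation}
where $\mathcal{L}_m=k\partial+q_m+\Lambda_m\otimes 1$. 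Here I use that $\Lambda_m$ is homogeneous of $\gr_2$-degree $-(d+1)m-1$ (from (\ref{Eqn:7.3_2})), that $q_m\in\hat{\g}^{\leq 0}_{>-(d+1)m-1}\otimes\mathcal{V}_{[d,m]}$ lies in strictly higher $\gr_2$-degrees than $\Lambda_m$, and that $k\partial$ does not shift the $\gr_2$-degree while $[S,k\partial]$ contributes only a derivative term $-k\,\partial S$ of the same $\gr_2$-degree as $S$.

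The key step is to solve (\ref{Eqn:9.3proof}) degree by degree. Decompose $h(q_m)=\sum_{j}h_j$ and $q_m=\sum_j q_j$ into $\gr_2$-homogeneous pieces. Because of the semisimplicity assumption (\ref{Eqn:9.1}), the operator $\ad\Lambda_m$ restricts to a bijection from $\mathrm{im}(\ad\Lambda_m)\otimes\mathcal{V}_{[d,m]}$ onto itself, and the $\gr_2$-grading is preserved (up to the fixed shift by $-(d+1)m-1$) since $\Lambda_m$ is homogeneous. So at the lowest relevant $\gr_2$-degree, the right-hand side of (\ref{Eqn:9.3proof}) in that degree involves only $q_m$ (all the nested brackets with $S$ are of strictly higher degree); projecting onto $\ker(\ad\Lambda_m)$ determines the first component of $h(q_m)$ uniquely, and projecting onto $\mathrm{im}(\ad\Lambda_m)$ and inverting $\ad\Lambda_m$ determines the first component $S_{i_0}$ uniquely. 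Inductively, assuming $h_{<j}$ and the components of $S$ of $\gr_2$-degree $<i$ are already fixed, the degree-$j$ part of (\ref{Eqn:9.3proof}) expresses $[\Lambda_m\otimes1,S_i]$ (and hence the next $S$-component, via $\ad\Lambda_m|_{\mathrm{im}}$) plus $h_j$ in terms of already-known data; the $\ker$/$\mathrm{im}$ splitting again gives existence and uniqueness. I would note that the "$o(\lambda)$"-type coefficient bookkeeping for $k\partial$ is the same as in Proposition \ref{Prop:8.1}, and that all coefficients stay in $\mathcal{V}_{[d,m]}$ because the bracket and $\partial$ preserve it.

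The main obstacle — or rather the point requiring care — is verifying that the recursion actually terminates in each $\gr_2$-degree and is well-posed: one must check that the nested commutators $[S,[S,\cdots[S,\mathcal{L}_m]\cdots]]$ contribute, in any fixed $\gr_2$-degree, only finitely many terms all involving $S$-components of strictly lower degree than the one being solved for, so that the recursion is not circular. This follows because every factor $S_i$ raises the $\gr_2$-degree by a positive amount bounded below (the $\gr_2$-grading takes values in a discrete set, as $\gr_2(z)=d+1$ and $\gr_2(g)\in\tfrac12\ZZ$ is bounded), while $\mathcal{L}_m$ has a lowest-degree part $\Lambda_m\otimes1$ and $\ad\Lambda_m$ then returns us to the working degree; I would spell out this degree count explicitly. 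The uniqueness of both $S$ and $h(q_m)$ is then immediate from the uniqueness at each step together with the normalization $S\in\bigoplus_{i>0}\hat{\g}_i\otimes\mathcal{V}_{[d,m]}$ and $h(q_m)\in(\ker\ad\Lambda_m\cap\hat{\g})\otimes\mathcal{V}_{[d,m]}$.
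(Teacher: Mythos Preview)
Your proposal is correct and follows essentially the same route as the paper: rewrite $e^{\ad S}L_m=k\partial+\Lambda_m\otimes1+h(q_m)$ as $h(q_m)+[\Lambda_m\otimes1,S]=q_m+[S,k\partial+q_m]+\tfrac12[S,[S,L_m]]+\cdots$, decompose everything into $\gr_2$-homogeneous components, and solve inductively from the lowest degree upward using the splitting (\ref{Eqn:9.1}). Your additional remarks on the well-posedness of the recursion (finitely many lower-degree contributions at each step) make explicit what the paper leaves implicit; note only that your symbol $\mathcal{L}_m$ in the expanded identity should be the given $L_m$, since in the paper $\mathcal{L}_m$ denotes the universal Lax operator.
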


\begin{proof}
Let $h(q_m)= \sum_{i > -(d+1)m-1} h_{i}(q_m)$, $ q_m = \sum_{i > -(d+1)m-1} q_{m,i} $ and $ S = \sum_{i> 0 } S_i$, where $h_{i}(q_m), q_{m,i},  S_i \in \hat{\g}_i \otimes  \mathcal{V}_{[d,m]}$. 
Then equation (\ref{Eqn:9.2}) can be rewritten as
\begin{equation} \label{Eqn:9.3}
h(q_m) +[\Lambda_m\otimes 1, S] = q_m + [S, k\partial+q_m] + \frac{1}{2} [S, [S, L_m]] + \frac{1}{6} [S, [S, [S, L_m]]] + \cdots .
\end{equation}
We project the both sides of equation (\ref{Eqn:9.3}) onto $\hat{\g}_{-(d+1)m-\frac{1}{2}} \otimes \mathcal{V}_{[d,m]}$ and get the formula:
$$ h_{-(d+1)m-\frac{1}{2}}(q_m) + [\Lambda_m, S_{\frac{1}{2}}]= 	q_{-(d+1)m-\frac{1}{2}}.$$
By (\ref{Eqn:9.1}), we can find  $h_{-(d+1)m-\frac{1}{2}}(q_m)$ and $S_{\frac{1}{2}}$ uniquely.  Similarly, equating the $(\hat{\g}_{-(d+1)m}\otimes \mathcal{V}_{[d,m]})$-part of (\ref{Eqn:9.3}), we obtain 
$h_{-(d+1)m}(q_m)$ and $S_{1}$. Inductively, $h_i$ and $S_j$ for any $i > -(d+1)m-1$ and $j>0$ are determined uniquely.
\end{proof}

Since $ \Lambda_m z^k \otimes 1$ is in $ (\ker(\ad \Lambda_m) \cap \hat{\g}_{-(d+1)m-1}) \otimes \CC$ , the space $ (\ker(\ad \Lambda_m) \cap \hat{\g}) \otimes \CC $ is nontrivial. Hence we can choose a nonzero element
\begin{equation}\label{Eqn:b}
 b \in (\ker(\ad \Lambda_m) \cap \hat{\g}) \otimes \CC. 
\end{equation}

Let $S\in \n \otimes  \mathcal{V}_{[d,m]}$ be from Proposition \ref{Prop:9.1} when $L_m$ is substituted by the universal Lax operator $\mathcal{L}_m$. Also, let $(e^{-\ad S(x)}b)^{>0}$ and $(e^{-\ad S(x)}b)^{\leq 0}$ be the  projections of $e^{-\ad S(x)}b$ onto $\hat{\g}^{>0}\otimes \mathcal{V}_{[d,m]}$ and $\hat{\g}^{\leq 0}\otimes \mathcal{V}_{[d,m]}$. Then the following two evolution equations
\begin{align}
\frac{\partial \phi(y)}{\partial t} &  = \int-\left((e^{-\ad S(x)}b)^{>0}\delta(x-w), \left[\frac{\delta \phi(y)}{\delta u} \delta(y-w), \mathcal{L}_m(w)(q_m)\right]\right)_wdx , \label{Eqn:9.4}\\
\frac{\partial \phi(y)}{\partial t} & = \int \left((e^{-\ad S(x)}b)^{\leq 0}\delta(x-w) ,\left[\frac{\delta \phi(y)}{\delta u} \delta(y-w) , \mathcal{L}_m(w)(q_m)\right] \right)_w dx, \label{Eqn:9.5}
\end{align}
where $\phi \in \WW(\g, \Lambda_m,k)$ and $\frac{\delta\phi}{\delta u}=\sum_{(i,j) \in \mathcal{I}} u_i^j \otimes   \frac{\delta \phi}{\delta u_i^j}$, are useful to find an integrable system associated to the algebra $\WW(\g, \Lambda_m, k)$.

\begin{prop} \label{Prop:9.2}
Two equations (\ref{Eqn:9.4}) and (\ref{Eqn:9.5}) are the same evolution equation.
\end{prop}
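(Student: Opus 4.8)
The plan is to show that the difference of the two right-hand sides of (\ref{Eqn:9.4}) and (\ref{Eqn:9.5}) vanishes, which amounts to proving that replacing $(e^{-\ad S}b)^{>0}$ by $-(e^{-\ad S}b)^{\leq 0}$ inside the bracket $\left(\ \cdot\ ,\ [\frac{\delta\phi}{\delta u}\delta(y-w),\mathcal{L}_m(w)]\ \right)_w$ changes nothing. Since $e^{-\ad S}b = (e^{-\ad S}b)^{>0}+(e^{-\ad S}b)^{\leq 0}$, the two equations agree if and only if
\begin{equation}\label{Eqn:pf9.2}
\int\left( e^{-\ad S(x)}b\,\delta(x-w),\ \left[\frac{\delta\phi(y)}{\delta u}\delta(y-w),\ \mathcal{L}_m(w)(q_m)\right]\right)_w dx = 0.
\end{equation}
So the whole proof reduces to establishing (\ref{Eqn:pf9.2}).

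To prove (\ref{Eqn:pf9.2}) I would use the defining property of $S$ from Proposition \ref{Prop:9.1}: $e^{\ad S}\mathcal{L}_m = k\partial+\Lambda_m\otimes 1 + h$ with $h\in (\ker\ad\Lambda_m\cap\hat\g)\otimes\mathcal{V}_{[d,m]}$, and $b\in(\ker\ad\Lambda_m\cap\hat\g)\otimes\CC$. First, by invariance of the bilinear form $(\cdot,\cdot)_w$ under the adjoint action (equivalently, by conjugating everything by $e^{\ad S(w)}$), the left side of (\ref{Eqn:pf9.2}) equals
\begin{equation*}
\int\left( b\,\delta(x-w),\ \left[e^{\ad S(w)}\Big(\frac{\delta\phi(y)}{\delta u}\delta(y-w)\Big),\ k\partial+\Lambda_m\otimes 1 + h(w)\right]\right)_w dx.
\end{equation*}
Because $b$ is a constant element lying in $\ker\ad\Lambda_m$ and is central-like with respect to the gradation, and because $(b,[\,\cdot\,,\Lambda_m])_w = -([b,\Lambda_m],\cdot)_w = 0$ and likewise $(b,[\,\cdot\,,h])_w = -([b,h],\cdot)_w = 0$, the terms involving $\Lambda_m\otimes 1$ and $h$ drop out. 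What remains is the term with $k\partial$, and here one uses that $\phi$ is gauge invariant, so $\frac{\delta\phi}{\delta u}$ is — up to the appropriate identification and after the $\int dx$, which turns $\partial_x^n\delta$ into $(-\partial)^n$ — the variational derivative, and $(b,[\,\cdot\,,k\partial])_w$ contributes a total $x$- or $w$-derivative whose integral over $S^1$ is zero. I would make this precise by expanding $\frac{\delta\phi}{\delta u}\delta(y-w) = \sum_{(i,j)\in\mathcal I} u_i^j\otimes\frac{\delta\phi}{\delta u_i^j}\delta(y-w)$ and noting that $\left(b\,\delta(x-w),[u_i^j\otimes g\,\delta(y-w),k\partial]\right)_w = -k\,(b,u_i^j)\,g\,\partial_w\delta(y-w)\delta(x-w)$, then integrating in $x$ and $w$ and using $\int_{S^1}\partial_w(\cdots)\,dw=0$.

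The main obstacle I expect is bookkeeping with the distributions and the two independent variables $x,y,w$: one must be careful that the $\int dx$ in (\ref{Eqn:9.4})–(\ref{Eqn:9.5}) acts before the $(\cdot,\cdot)_w$ pairing is evaluated, and that the identification of $\frac{\delta\phi}{\delta u}$ with the variational derivative (legitimate precisely because $\phi\in\WW(\g,\Lambda_m,k)$ is gauge invariant, cf. Lemma \ref{Prop:7.10} and the computations preceding Proposition \ref{Prop:7.11}) is applied correctly. The genuinely conceptual inputs are just three: invariance of $(\cdot,\cdot)_w$ under $\ad$, the fact that $b$ and $h$ both lie in $\ker\ad\Lambda_m$ so that $[b,\Lambda_m]=[b,h]=0$ and the form-orthogonality kills those summands, and the vanishing of the integral of a total derivative on $S^1$. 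Everything else is the routine distributional computation sketched above, which I would state compactly rather than grind through in full.
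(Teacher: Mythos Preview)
Your approach is essentially the same as the paper's: reduce to showing that the pairing with the full $e^{-\ad S}b$ vanishes, then use $\ad$-invariance of $(\cdot,\cdot)_w$ to pass to a commutator $[b,\,e^{\ad S}\mathcal{L}_m]=[b,\,k\partial+\Lambda_m\otimes1+h]$, and conclude from $[b,\Lambda_m]=[b,h]=0$ together with the $k\partial$ contribution being a total derivative on $S^1$. The paper's write-up is terser and swaps the two slots of $(\cdot,\cdot)_w$ before conjugating (using the integration-by-parts identity it records at the start), whereas you conjugate first; the substance is identical.
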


\begin{proof}
We notice that the bilinear form $(\cdot , \cdot )$ on $\hat{g}\otimes \mathcal{V}_{[d,m]}$ is  $\ad (\hat{g}\otimes \mathcal{V}_{[d,m]})$-invariant, i.e. $(a_1\otimes f_1, [a_2\otimes f_2, a_3\otimes f_3])= ([a_1\otimes f_1,a_2\otimes f_2], a_3\otimes f_3).$ Also, we have $\int (a\otimes f, [b\otimes g, \partial_x])dx= - \int (a, b) f \partial_x g \ dx = \int (b,a)  g\partial_x f \ dx=- \int (b\otimes g, [ a \otimes f, \partial_x])dx.$ By subtracting equation (\ref{Eqn:9.4}) from (\ref{Eqn:9.5}), we obtain
\begin{equation*}
 (\ref{Eqn:9.5})-(\ref{Eqn:9.4}) =- \int \left(\frac{\delta \phi(y)}{\delta u} \delta(y-w), \left[  e^{-\ad S(x)}b \delta(x-w) ,  \mathcal{L}_m(w) \right] \right)_w dx. 
\end{equation*}
Moreover, since $e^{\ad S(x)}\left[ e^{-S(x)} b \delta(x-w), \mathcal{L}_m (w)  \right] = \left[  b \delta(x-w), e^{\ad S(x)}\mathcal{L}_m (w)  \right] $ and the invariance of the bilinear form holds, we have 
\begin{equation*}
 (\ref{Eqn:9.5})-(\ref{Eqn:9.4})=- \int \left(e^{-\ad S(y)}\frac{\delta \phi(y)}{\delta u} \delta(y-w), \left[ b \delta(x-w)  ,  e^{\ad S(w)} \mathcal{L}_m(w) \right]\right)_w dx =0.
\end{equation*}
\end{proof}

Given $b$ in (\ref{Eqn:b}), let  $H_b: \hat{\g}^{\leq 0}_{>-(d+1)m-1} \otimes \mathcal{V}_{[d,m]}\to \mathcal{V}_{[d,m]}$ be a functional defined by 
\begin{equation} \label{Eqn:9.6_0228}
 H_b(q_m(x)):= (b , h(q_m(x))).
\end{equation}
Then $H_b$ has the following property.

\begin{prop} \label{Thm:9.3}
Let $e^{-\ad S} b^{\geq0}_{<(d+1)m+1}$ be the projection of $e^{-\ad S} b$ onto $\hat{\g}^{\geq0}_{<(d+1)m+1}.$ Then 
$$ \frac{\delta}{\delta u} H_b = e^{-\ad S}b^{\geq0}_{<(d+1)m+1}.$$

\end{prop}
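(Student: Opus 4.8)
The plan is to compute the variational derivative $\frac{\delta}{\delta u}H_b$ directly from the defining formula $H_b(q_m) = (b, h(q_m))$ and compare it against the identity characterizing $e^{-\ad S}b^{\geq 0}_{<(d+1)m+1}$. The key input is Proposition \ref{Prop:9.1}, which produces $S$ and $h(q_m)$ satisfying $e^{\ad S}\mathcal{L}_m = k\partial + \Lambda_m\otimes 1 + h$, together with Lemma \ref{Prop:7.9} (variational derivative via the directional derivative) and Lemma \ref{Prop:7.10}/Proposition \ref{Prop:7.14}-style computations that tell us how $S$ and $h$ vary when $q_m$ is perturbed.

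First I would perturb $q_m \rightsquigarrow q_m + \epsilon r$ with $r \in \hat{\g}^{\leq 0}_{>-(d+1)m-1}\otimes\mathcal{V}_{[d,m]}$ and differentiate the relation $e^{\ad S}(k\partial + q_m + \Lambda_m\otimes 1) = k\partial + \Lambda_m\otimes 1 + h$ at $\epsilon=0$. Writing $\dot S, \dot h$ for the derivatives, this gives $[\dot S, \mathcal{L}_{m,0}] + e^{\ad S}\dot q_m\, e^{-\ad S}|_{\text{conjugated form}} = \dot h$, i.e. after conjugating, $e^{-\ad S}\dot h - e^{-\ad S}[\dot S, \mathcal{L}_{m,0}] = \dot q_m = r$. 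Then I pair with $b$ and use $\ad$-invariance of the bilinear form plus $[b, \cdot]=0$ on $\ker\ad\Lambda_m$ to kill the $\dot S$ term: since $b\in(\ker\ad\Lambda_m\cap\hat\g)\otimes\CC$ and $\mathcal{L}_{m,0}=k\partial+\Lambda_m\otimes 1 + h$, we get $(b,[\dot S,\mathcal{L}_{m,0}]) = (b, [\dot S, k\partial]) + (b,[\dot S,\Lambda_m\otimes 1]) + (b,[\dot S, h])$; the middle term vanishes by invariance and $[b,\Lambda_m]=0$, and after the $\int dx$ the $k\partial$ term also vanishes by integration by parts combined with $[b,\Lambda_m]=0$ bookkeeping, while the $[\dot S,h]$ term requires care — one shows it drops out because $h$ also lies in $\ker\ad\Lambda_m$ and $S\in\bigoplus_{i>0}$, so the relevant commutators have no component pairing nontrivially with $b$, or more cleanly, one reorganizes so that $(b,\dot h) = (e^{-\ad S}b, \dot q_m) = (e^{-\ad S}b, r)$. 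Combined with Lemma \ref{Prop:7.9}, which says $\frac{d}{d\epsilon}H_b(q_m+\epsilon r)|_{\epsilon=0} = \sum \frac{\partial H_b}{\partial u_i^{j(n)}}(u_i^j\otimes\partial_x^n\delta(x-w), r(w))_w = (\frac{\delta H_b}{\delta u}, r)$ after passing to functionals, we conclude $(\frac{\delta H_b}{\delta u}, r) = (e^{-\ad S}b, r)$ for all $r\in\hat{\g}^{\leq 0}_{>-(d+1)m-1}\otimes\mathcal{V}_{[d,m]}$.

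Finally, since $r$ ranges over $\hat{\g}^{\leq 0}_{>-(d+1)m-1}\otimes\mathcal{V}_{[d,m]}$ and the functional derivative $\frac{\delta H_b}{\delta u}$ by construction lies in the dual space $\hat{\g}^{\geq 0}_{<(d+1)m+1}\otimes\mathcal{V}_{[d,m]}$ (recall from the discussion after \eqref{Eqn:7.4_2} that these two subspaces are dual under $(\cdot,\cdot)$), matching pairings against all such $r$ forces $\frac{\delta H_b}{\delta u}$ to equal the projection of $e^{-\ad S}b$ onto $\hat{\g}^{\geq 0}_{<(d+1)m+1}\otimes\mathcal{V}_{[d,m]}$, which is exactly $e^{-\ad S}b^{\geq 0}_{<(d+1)m+1}$. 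The main obstacle I anticipate is bookkeeping the $k\partial$ and $[\dot S, h]$ contributions in the pairing step: one must be careful that the differential-operator part $k\partial$ inside $\mathcal{L}_{m,0}$ contributes only a total derivative (harmless under $\int dx$ and hence under $\frac{\delta}{\delta u}$ by \eqref{Eqn:1.5}), and that no stray component of $[\dot S, \mathcal{L}_{m,0}]$ survives pairing with $b$ — this is where $b\in\ker\ad\Lambda_m$ and the grading constraint $S\in\bigoplus_{i>0}\hat\g_i$ are used essentially, and it parallels the invariance argument already carried out in the proof of Proposition \ref{Prop:9.2}.
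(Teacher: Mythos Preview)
Your approach is essentially the paper's: perturb $q_m\mapsto q_m+\epsilon r$, differentiate $e^{\ad S}L_m=L_{m,0}$, pair with $b$, use $\ad$-invariance to pass $e^{\ad S}$ across, and kill the remaining commutator term after $\int dx$. The paper arrives at exactly your target identity $\int(\tfrac{\delta H_b}{\delta u},r)\,dx=\int(e^{-\ad S}b,r)\,dx$ and concludes by duality, just as you do.

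One point to sharpen: your justification for why $(b,[\dot S,h])$ vanishes is not quite right. You invoke ``$[b,\cdot]=0$ on $\ker\ad\Lambda_m$'' and the grading $S\in\bigoplus_{i>0}\hat\g_i$, but neither of these gives the vanishing --- $b$ need not be central in the centralizer, and the $\gr_2$-grading alone does not force orthogonality. The correct reason is that the uniqueness in Proposition~\ref{Prop:9.1} forces $S$ (hence $\dot S$) to lie in $\operatorname{im}(\ad\Lambda_m)\otimes\mathcal{V}_{[d,m]}$, while $[b,h]\in\ker(\ad\Lambda_m)\otimes\mathcal{V}_{[d,m]}$ since $\ker(\ad\Lambda_m)$ is a subalgebra; then $(b,[\dot S,h])=-([b,h],\dot S)=0$ by the orthogonality $\ker(\ad\Lambda_m)\perp\operatorname{im}(\ad\Lambda_m)$ with respect to the invariant form. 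The paper's own proof is terse on this same point, so your instinct that it ``requires care'' is well placed.
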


\begin{proof}
For $ r \in  \hat{\g}^{\leq 0}_{>-(d+1)m-1} \otimes \mathcal{V}_{[d,m]}$, 
let $L_m(\epsilon) = k\partial + (q_m(x) + \epsilon r) + \Lambda_m \otimes 1$. By Proposition \ref{Prop:9.1}, there exist $h(q_m(x)+\epsilon r)\in \ker \ad\Lambda_m  \otimes \mathcal{V}_{[d,m]}$ and $ S\in \text{im} (\ad\Lambda_m) \otimes \mathcal{V}_{[d,m]}$ satisfying the equation:
\begin{equation*}
L_{m,0}(\epsilon) = k \partial + h(q_m(x)+ \epsilon r) + \Lambda_m\otimes 1 = e^{\ad S(\epsilon)} L_m(\epsilon).
\end{equation*}
Then
\begin{equation}\label{Eqn:9.6_0228}
\frac{d}{d \epsilon} L_{m,0}(\epsilon) = e^{\ad S(\epsilon)} r  + \left[ \frac{\partial S(\epsilon)}{\partial \epsilon}, e^{\ad S(\epsilon)}L_m(\epsilon)\right]= e^{\ad S(\epsilon)} r  + \left[ \frac{\partial S(\epsilon)}{\partial \epsilon}, L_{m,0}(\epsilon)\right].
\end{equation}
Using Taylor expansion and formula (\ref{Eqn:9.6_0228}), we have
\begin{equation}\label{Eqn:9.7_0228}
\begin{aligned}
\left(  \frac{\delta}{\delta u} H_b(q_m(x)) , r  \right)& = \frac{d}{d \epsilon} (b  , h(q_m(x)+\epsilon r))|_{\epsilon=0} = \frac{d}{d \epsilon} (b , L_{m,0}(y)(\epsilon))|_{\epsilon=0} \\
& = (e^{-\ad S(x)} b, r) +  \left(b, -\partial_x \left.\frac{d S(x)(\epsilon)}{d\epsilon}\right|_{\epsilon=0} \right).
\end{aligned}
\end{equation}
Applying $\int \ dx$ to (\ref{Eqn:9.7_0228}), we obtain
\begin{align}\label{Eqn:9.8_0228}
 \int \left(  \frac{\delta}{\delta u} H_b(q_m(x))  , r  \right) dx = \int  (e^{-\ad S(x)} b, r) dx.
\end{align}
Since (\ref{Eqn:9.8_0228}) holds for any  $r\in  \hat{\g}^{\leq 0}_{>-(d+1)m-1} \otimes \mathcal{V}_{[d,m]}$, we conclude that  $\frac{\delta H_b}{\delta u} =  e^{-\ad S} b^{\geq0}_{<(d+1)m+1}.$
\end{proof}

By Proposition \ref{Thm:9.3}, we have
$$ \int \left( \frac{\delta H_{z^{-1} b}(q_m(x))}{\delta  u} \delta(x-w), z r(y) \right)_w dx 
= \int ( (e^{-\ad S(x)}  b)^{> 0} \delta(x-w), r(y))_w dx.$$
Hence (\ref{Eqn:9.4}) is equivalent to the equation 
\begin{equation}\label{Eqn:9.6}
\begin{aligned}
\frac{\partial \phi(y)}{\partial t} = \int \{ H_{z^{-1}b}(x), \phi(y) \}_1 dx.
\end{aligned}
\end{equation}

To write the second equation as a Hamiltonian equation, we denote $\sum_{(i,0)\in \mathcal{I}} {u_i^0} \otimes \frac{\delta H_b}{\delta u_i^0}$ by $\frac{\delta H_b}{\delta u^0}$ and $\sum_{(i,j)\in\mathcal{I}, \ j >0}  {u_i^j} \otimes \frac{\delta H_b}{\delta u_i^j}$ by $\frac{\delta H_b}{\delta u^>}.$ Then
\begin{equation} \label{Eqn:9.7}
\begin{aligned}
\frac{\partial \phi(y)}{\partial t} & = \int \left(e^{-\ad S(x)}b^{\leq 0}\delta(x-w), \left[\frac{\delta \phi(y)}{\delta u} \delta(y-w) , \mathcal{L}_m(w)(q_m)\right] \right)_w dx \\
& = \int \left( (e^{-\ad S(x)}b^0-e^{-\ad S(x)}b^{\geq 0})\delta(x-w), \left[\frac{\delta \phi(y)}{\delta u} \delta(y-w) , \mathcal{L}_m(w)(q_m)\right] \right)_w dx \\
& = \int \left( \frac{\delta H_b(x)}{\delta u^0} \delta(x-w), \left[\frac{\delta \phi(y)}{\delta u}\delta(y-w), \mathcal{L}_m(w)(q_m)\right] \right)_w dx \\
 & -  \int \left(z^{-1} \frac{\delta H_b(x)}{\delta u^{>}}\delta(x-w),z \left[\frac{\delta \phi(y)}{\delta u} \delta(y-w) , \mathcal{L}_m(w)(q_m)\right] \right)_w dx \\
& = \int \{ H_b(x), \phi(y) \}_2 dx.
\end{aligned}
\end{equation}

By equations (\ref{Eqn:9.6}) and (\ref{Eqn:9.7}), we get the following theorem.

\begin{thm}\label{Thm:9.5}
Let $b \in (\ker  (\ad\Lambda_m) \cap \hat{\g}_{<0}) \otimes 1$ and $H_b:= ( b, h(q_m(x))$. Then $$ \frac{\partial \phi}{\partial t} =  \int \{H_{z^{-1}b}(x), \phi(y) \}_1 dx = \int \{H_b(x), \phi(y) \}_2 dx, \ \ \phi\in \WW(\g, \Lambda_m, k).$$

\end{thm}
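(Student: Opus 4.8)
The plan is to identify both displayed expressions with the single flow $\partial_t\phi$ defined by the two evolution equations (\ref{Eqn:9.4}) and (\ref{Eqn:9.5}), which Proposition \ref{Prop:9.2} already shows to coincide. Thus it suffices to check that the right-hand side of (\ref{Eqn:9.4}) equals $\int\{H_{z^{-1}b}(x),\phi(y)\}_1\,dx$ and that the right-hand side of (\ref{Eqn:9.5}) equals $\int\{H_b(x),\phi(y)\}_2\,dx$; these are exactly the chains of equalities displayed as (\ref{Eqn:9.6}) and (\ref{Eqn:9.7}), so the theorem follows by combining them with Proposition \ref{Prop:9.2}.

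For the first equality I would start from Proposition \ref{Thm:9.3}. Since $\Lambda_m$ commutes with $z$, the element $z^{-1}b$ again lies in $(\ker(\ad\Lambda_m)\cap\hat{\g})\otimes 1$, so $H_{z^{-1}b}$ is defined and $\frac{\delta}{\delta u}H_{z^{-1}b}$ equals the projection of $e^{-\ad S}(z^{-1}b)=z^{-1}e^{-\ad S}b$ onto $\hat{\g}^{\geq 0}_{<(d+1)m+1}$. Multiplying by $z$ recovers the positive $\gr_1$-part $(e^{-\ad S}b)^{>0}$ appearing in (\ref{Eqn:9.4}), up to a piece of high $\gr_2$-degree that is paired against $[\,\cdot\,,\mathcal{L}_m]$ and vanishes by the same duality-plus-$\ad$-invariance argument used in the proof of Proposition \ref{Prop:9.2}. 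Substituting this into the defining formula (\ref{Eqn:7.17}) of $\{\cdot,\cdot\}_1$, written in terms of $d_{q_m(x)}\phi$, and integrating over $x$ produces (\ref{Eqn:9.6}). For the second equality I would use Proposition \ref{Prop:9.2} once more to replace $e^{-\ad S}b^{\leq 0}$ by $e^{-\ad S}b^{0}-e^{-\ad S}b^{\geq 0}$ inside the pairing (the two choices differ by a pairing against all of $e^{-\ad S}b$, which is precisely the quantity shown to vanish in Proposition \ref{Prop:9.2}), then apply Proposition \ref{Thm:9.3} to rewrite the $\gr_1$-degree-zero piece as $\frac{\delta H_b}{\delta u^0}$ and the remaining positive-degree piece as $z^{-1}\frac{\delta H_b}{\delta u^{>}}$; matching these against the two summands of (\ref{Eqn:7.18}) --- the plain $\mathcal{L}_m$-commutator on the degree-zero part and the $z$-shifted one on the positive part --- yields (\ref{Eqn:9.7}).

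The step I expect to be the main obstacle is the combined bookkeeping in these last two reductions: one must simultaneously control the $\gr_1$-projections $\hat{\g}^{>0},\hat{\g}^{0},\hat{\g}^{\leq 0}$, the $\gr_2$-truncation at $(d+1)m+1$, the shift operator $z$ that appears both in $\{\cdot,\cdot\}_1$ and in the second summand of $\{\cdot,\cdot\}_2$, and the mismatch between $e^{-\ad S}b$ and $\frac{\delta H_b}{\delta u}$ described by Proposition \ref{Thm:9.3}. Every term that is inserted or discarded along the way has to be shown to pair to zero against the relevant Lax-type commutator, and this is exactly where one reuses the duality of $\hat{\g}^{\geq 0}_{<(d+1)m+1}$ and $\hat{\g}^{\leq 0}_{>-(d+1)m-1}$ under $(\cdot,\cdot)$ together with the $\ad$-invariance of the bilinear form. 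Once the identifications (\ref{Eqn:9.6}) and (\ref{Eqn:9.7}) are in place, the theorem exhibits each $H_b$ as a Hamiltonian generating the same flow for both of the compatible $\lambda$-brackets of Proposition \ref{Prop:7.18}, which is the bi-Hamiltonian input required to run the Lenard scheme and obtain the associated hierarchy.
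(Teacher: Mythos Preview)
Your proposal is correct and follows essentially the same route as the paper: the theorem is deduced from Proposition \ref{Prop:9.2} together with the identifications (\ref{Eqn:9.6}) and (\ref{Eqn:9.7}), which in turn are obtained by applying Proposition \ref{Thm:9.3} to $z^{-1}b$ and to $b$ respectively, with the $\leq 0$ part rewritten as the degree-zero part minus the $\geq 0$ part. Your third paragraph is somewhat more cautious about the $\gr_2$-truncation than the paper itself, which simply records the equalities (\ref{Eqn:9.6}) and (\ref{Eqn:9.7}) without dwelling on the vanishing of boundary terms; but the argument you outline is the intended one.
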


In terms of Poisson vertex algebra theories, Theorem \ref{Thm:9.5} can be stated as follows:

\begin{thm} \label{Thm:9.6}
Let $b$ and $H_b$ be defined as in Theorem \ref{Thm:9.5} and (\ref{Eqn:9.6_0228}). Then
$$ \frac{\partial \phi}{\partial t} =  \{H_{z^{-1}b \ \lambda} \phi \}_1(y)|_{\lambda=0}=  \{H_{b \ \lambda} \phi \}_2(y)|_{\lambda=0}, \ \ \ \phi\in \WW(\g, \Lambda_m, k).$$
\end{thm}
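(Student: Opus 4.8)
The plan is to show that Theorem \ref{Thm:9.6} is merely the translation of Theorem \ref{Thm:9.5} into the Poisson vertex algebra language established in Section \ref{Section:PVA,IS}. The key bridge is equation (\ref{Eqn:1.10}), which records that for any Hamiltonian operator $H(\partial)$ one has $\{\int f, g\}_H = \{f_\lambda g\}_H|_{\lambda=0}$; the same identity holds for the two local Poisson brackets $\{\cdot,\cdot\}_i$ and the associated Poisson $\lambda$-brackets $\{\cdot_\lambda \cdot\}_i$ on $\WW(\g, \Lambda_m, k)$, by the very definition (\ref{Eqn:7.25}) of the latter. Indeed, (\ref{Eqn:7.25}) says $\{\phi_\lambda\psi\}_i(y) = \int e^{\lambda(x-y)}\{\phi(x),\psi(y)\}_i\, dx$, so setting $\lambda = 0$ gives $\{\phi_\lambda\psi\}_i(y)|_{\lambda=0} = \int \{\phi(x),\psi(y)\}_i\, dx$.

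First I would fix $\phi \in \WW(\g,\Lambda_m,k)$ and apply Theorem \ref{Thm:9.5}, which already gives
\[
\frac{\partial \phi}{\partial t} = \int \{H_{z^{-1}b}(x), \phi(y)\}_1\, dx = \int \{H_b(x), \phi(y)\}_2\, dx.
\]
Then I would rewrite each of the two integrals using the specialization-at-$\lambda=0$ formula just described: the first integral equals $\{H_{z^{-1}b}\,_\lambda\,\phi\}_1(y)|_{\lambda=0}$ and the second equals $\{H_b\,_\lambda\,\phi\}_2(y)|_{\lambda=0}$. This requires only that $H_{z^{-1}b}$ and $H_b$ lie in $\WW(\g,\Lambda_m,k)$ (equivalently, are gauge invariant), which follows from Proposition \ref{Thm:9.3}: the functional derivatives $\frac{\delta H_b}{\delta u}$ and $\frac{\delta H_{z^{-1}b}}{\delta u}$ are expressed through $e^{-\ad S}b$, and since $S$ and $h(q_m)$ are the unique objects produced by Proposition \ref{Prop:9.1}, these functionals transform correctly under gauge transformations — alternatively one observes directly that $(b, h(q_m))$ depends only on the gauge class of $q_m$ because $L_{m,0}$ does.

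There is essentially no obstacle here; the only point requiring a line of care is confirming that the passage from the local bracket $\{\phi(x),\psi(y)\}_i$ under $\int dx$ to the $\lambda$-bracket evaluated at $\lambda = 0$ is legitimate, i.e. that $\int \{\phi(x),\psi(y)\}_i\, dx$ is exactly the $\lambda\to 0$ term of (\ref{Eqn:7.25}). This is immediate from Remark \ref{Note:7.10}, which writes $\{\phi(x),\psi(y)\}_i = \sum_{n\geq 0}\Phi_{i,n}(y)\partial_y^n\delta(x-y)$, together with $\int \partial_y^n\delta(x-y)\, dx = \delta_{n,0}$, so that $\int\{\phi(x),\psi(y)\}_i\, dx = \Phi_{i,0}(y) = (\phi_{(0)}\psi)_i(y) = \{\phi_\lambda\psi\}_i(y)|_{\lambda=0}$. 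Combining this with Theorem \ref{Thm:9.5} yields the claimed identity and completes the proof.
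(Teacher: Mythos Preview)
Your proof is correct and follows essentially the same approach as the paper: both reduce Theorem \ref{Thm:9.6} to Theorem \ref{Thm:9.5} by verifying that $\int \{H(x),\phi(y)\}_i\,dx = \{H_\lambda \phi\}_i(y)|_{\lambda=0}$. The paper carries this out by expanding the local bracket through the explicit formula (\ref{Eqn:7.27}) and integrating by parts to reach the functional derivative, whereas you obtain it more directly from the defining relation (\ref{Eqn:7.25}) together with Remark \ref{Note:7.10}; the two arguments are equivalent.
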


\begin{proof}
It suffices to show that this theorem is equivalent to Theorem \ref{Thm:9.5}.
By the results of previous sections, we have
\begin{equation}\label{Eqn:9.11_0302}
\begin{aligned}
&\int \{H_{z^{-1}b}(x), \phi(y) \}_1 dx = \sum_{(i,j)\in \mathcal{I},(p,q)\in\mathcal{I},n,l\geq0} \int \frac{\partial H_{z^{-1}b}(x)}{\partial u_i^{j(n)}} \frac{\partial  \phi(y)}{\partial u_p^{q(l)}} \partial_x^n \partial_y^l \{u_i^j,  u_p^{q+1}\}(y) \delta(x-y) dx \\
 & =\sum_{(i,j),(p,q)\in \mathcal{I}, l\geq 0, 0 \leq l_1\leq l} \left( \partial_y^{l-l_1} \frac{\delta H_{z^{-1}b}(y)}{\delta u_i^{j}}\right) \frac{\partial  \phi(y)}{\partial u_p^{q(l)}} {l \choose l_1} \partial_y^{l_1} ( \{u_i^j,  u_p^{q+1}\}(y)) \\
 & = \sum_{(i,j),(p,q)\in \mathcal{I}, l\geq 0}\frac{\partial  \phi(y)}{\partial u_p^{q(l)}} \partial_y^{l} \{u_i^j,  u_p^{q+1}\}(y) \frac{\delta H_{z^{-1}b}(y)}{\delta u_i^{j}}\ \ =\ \ \{H_{{z^{-1}b} \ \lambda} \phi \}_1(y) |_{\lambda=0}.
\end{aligned}
\end{equation}
Hence
$\int \{ H_{b} (x), \phi(y) \}_1 dx = \{H_{{z^{-1}b} \ \lambda} \phi \}_1(y) |_{\lambda=0}.$
The same procedure works for the second bracket. 
\end{proof}

The following theorem is the main goal of this section.

\begin{thm}\label{Thm:9.7}
Suppose $\frac{\delta H_b}{\delta u^0}$
is not a constant.
Then the evolution equation
$$ \frac{\partial \phi}{\partial t} = \{ H_{z^{-1}b \ \lambda} \phi\}_1|_{\lambda=0}$$
is an integrable system. In fact, $H_{z^{-i} b}$, $i \geq 0$, are integrals of motion and they are linearly  independent.
\end{thm}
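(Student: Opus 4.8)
The plan is to exhibit a bi-Hamiltonian recursion and invoke the Lenard scheme. By Theorem~\ref{Thm:9.5} and Theorem~\ref{Thm:9.6}, the key identity is
\begin{equation*}
\{H_{z^{-i-1}b\ \lambda}\,\phi\}_1|_{\lambda=0}=\{H_{z^{-i}b\ \lambda}\,\phi\}_2|_{\lambda=0}
\qquad\text{for all }i\geq 0,\ \phi\in\WW(\g,\Lambda_m,k),
\end{equation*}
which follows because $z^{-i}b\in(\ker(\ad\Lambda_m)\cap\hat{\g})\otimes 1$ for every $i\geq 0$ (since $z$ commutes with $\Lambda_m$, the space $\ker(\ad\Lambda_m)$ is stable under multiplication by $z^{-1}$), so Theorem~\ref{Thm:9.5} applies with $b$ replaced by $z^{-i}b$. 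Together with $\{H_{b\ \lambda}\phi\}_2|_{\lambda=0}=\{H_{z^{-1}b\ \lambda}\phi\}_1|_{\lambda=0}$ restated, we also need the anchor condition $\{H_b{}_{\ \lambda}\phi\}_1|_{\lambda=0}$ or $\{H_{b'}{}_{\ \lambda}\phi\}_2|_{\lambda=0}$ to vanish for a suitable first term; since $b\in\ker(\ad\Lambda_m)$ and the gradation forces the lowest-weight component of $h(q_m)$ to be a constant multiple of $\Lambda_m$, the functional $H_{z b}$ (or an appropriate lowest one) has constant variational derivative, giving $\{H_{zb}{}_{\ \lambda}\phi\}_2|_{\lambda=0}=0$ by $\frac{\delta}{\delta u}\circ\partial=0$ together with the structure of the brackets. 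This supplies the initial relation of the Lenard chain.

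First I would set up the Lenard scheme precisely: put $h_t:=H_{z^{-t}b}$ (with the indexing chosen so the chain starts correctly), and verify the two hypotheses of the Lenard Scheme box, namely $\{h_{0\ \lambda}u_i\}_K|_{\lambda=0}=0$ and $\{h_{t\ \lambda}u_i\}_H|_{\lambda=0}=\{h_{t+1\ \lambda}u_i\}_K|_{\lambda=0}$, where $H=\{\cdot_\lambda\cdot\}_2$ and $K=\{\cdot_\lambda\cdot\}_1$ (or vice versa, depending on which bracket is non-degenerate on $\WW(\g,\Lambda_m,k)$). The recursion identity displayed above is exactly the second family of relations, so the Lenard Scheme yields that all $\int h_t=\int H_{z^{-t}b}$ are in involution with respect to both Poisson $\lambda$-brackets. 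This establishes that each $\frac{\partial\phi}{\partial t}=\{H_{z^{-i}b\ \lambda}\phi\}_1|_{\lambda=0}$ is a Hamiltonian equation possessing infinitely many commuting conserved densities.

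Next I would prove linear independence of the local functionals $\int H_{z^{-i}b}$, $i\geq 0$. The idea is a grading/degree argument: from Proposition~\ref{Prop:9.1} the component $h(q_m)=\sum_{i>-(d+1)m-1}h_i(q_m)$ is built recursively, and pairing with $z^{-i}b$ picks out successively higher $\gr_1$-homogeneous pieces of $q_m$ and of $S$; thus $H_{z^{-i}b}$ and $H_{z^{-j}b}$ for $i\neq j$ have representatives whose leading terms (in the $\gr_1$-filtration on $\mathcal{V}_{[d,m]}$, equivalently in the number of powers of $z$ among the generators) are of different degree, and the leading term is not a total derivative because $\frac{\delta H_b}{\delta u^0}$ is assumed non-constant and is transported up the chain by the recursion. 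More carefully, if a finite linear combination $\sum c_i\int H_{z^{-i}b}=0$, then applying $\frac{\delta}{\delta u}$ and using Proposition~\ref{Thm:9.3} gives $\sum c_i\,e^{-\ad S}(z^{-i}b)^{\geq0}_{<(d+1)m+1}=0$ modulo $\partial(\cdots)$; comparing top $\gr_1$-degrees (the $z^{-i}b$ are linearly independent in $\hat{\g}$ and $e^{-\ad S}$ is unipotent with respect to the filtration) forces all $c_i=0$.

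The main obstacle I expect is pinning down the correct anchor of the Lenard chain and confirming non-degeneracy. Which of $\{\cdot_\lambda\cdot\}_1$, $\{\cdot_\lambda\cdot\}_2$ plays the role of the non-degenerate operator $K$ on the quotient $\WW(\g,\Lambda_m,k)/\partial\WW(\g,\Lambda_m,k)$ is not obvious a priori, and the hypothesis ``$\frac{\delta H_b}{\delta u^0}$ is not a constant'' is precisely what is needed to guarantee that the chain does not terminate or collapse to constants — so the delicate point is to show that this single non-triviality assumption propagates: that $\frac{\delta H_{z^{-i}b}}{\delta u^0}$ is non-constant (hence $\int H_{z^{-i}b}\neq 0$) for every $i$, and that the functionals stay linearly independent. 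This is where the interplay between the $\gr_1$-grading (powers of $z$) and the explicit recursion of Proposition~\ref{Prop:9.1} does the real work; everything else is a formal application of the Lenard scheme together with Theorems~\ref{Thm:9.5} and~\ref{Thm:9.6}.
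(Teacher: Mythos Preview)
Your overall strategy is close in spirit, but there are two concrete gaps, and in each case the paper takes a more direct route that sidesteps the obstacle you flag.

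\textbf{Involution.} You propose to run the Lenard scheme, but as you yourself note, the anchor condition $\{h_0{}_\lambda u_i\}_K|_{\lambda=0}=0$ is not established, and your suggestion that $H_{zb}$ has constant variational derivative is not justified (Theorem~\ref{Thm:9.5} assumes $b\in\hat\g_{<0}$, so $zb$ may fall outside its scope). The paper avoids the anchor entirely by proving involution directly: using Proposition~\ref{Thm:9.3} one has $\frac{\delta H_{z^{-j}b}}{\delta u}=(e^{-\ad S}z^{-j}b)^{\geq 0}_{<(d+1)m+1}$, so
\[
\int\{H_{z^{-1}b},H_{z^{-i}b}\}_1\,dy
=-\iint\bigl((e^{-\ad S}z^{-1}b)^{\geq 0}\delta,\ [\,(e^{-\ad S}z^{-i+1}b)^{\geq 0}\delta,\ \mathcal L_m\,]\bigr)_w.
\]
Conjugating by $e^{\ad S}$ replaces $\mathcal L_m$ by $\mathcal L_{m,0}$ and strips $e^{-\ad S}$ from the other arguments; invariance of $(\cdot,\cdot)_w$ then moves the bracket onto the first two slots, giving $[z^{-1}b^{\geq 0},z^{-i+1}b^{\geq 0}]=0$ since these are $z$-shifts of the same element. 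No Lenard anchor is needed.

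\textbf{Linear independence.} Your $\gr_1$-degree argument does not work: $S$ from Proposition~\ref{Prop:9.1} lies in $\bigoplus_{i>0}\hat\g_i\otimes\mathcal V$ where the subscript is $\gr_2$, so $e^{-\ad S}$ is unipotent for $\gr_2$, not $\gr_1$. Moreover every $\frac{\delta H_{z^{-i}b}}{\delta u}$ lands in the \emph{same} finite window $\hat\g^{\geq 0}_{<(d+1)m+1}\otimes\mathcal V_{[d,m]}$, so the $\hat\g$-component cannot separate them by top $z$-power. The paper instead tracks the \emph{total derivative order} in $\mathcal V_{[d,m]}$: writing the recursion of Theorem~\ref{Thm:9.6} as
\[
\{u_i^j{}_\partial u_p^q\}_{1\to}\,\frac{\delta H_{z^{-(t+1)}b}}{\delta u_i^j}
=\{u_i^j{}_\partial u_p^q\}_{2\to}\,\frac{\delta H_{z^{-t}b}}{\delta u_i^j},
\]
one observes that on $\hat\g^0$ the second bracket carries a $k\lambda(u_i^0,u_p^0)$ term, so $\{\cdot_\partial\cdot\}_{2\to}$ raises the derivative order by one, while the first bracket has no $\lambda$ term and preserves it. The hypothesis that $\frac{\delta H_b}{\delta u^0}$ is not constant starts the induction, and successive $H_{z^{-i}b}$ then have strictly increasing derivative order, forcing linear independence.
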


\begin{proof}
In order to show that each $H_{z^{-i}b}$ is an integral of motion, we need to prove that $  \int \{H_{z^{-1}b}, H_{z^{-i}b} \}_1(y) dy  =0. $
By Proposition \ref{Thm:9.3}, we have
\begin{align*}
&\int \{H_{z^{-1}b}, H_{z^{-i}b} \}_1(y) dy \\
=- &\iint ( e^{-\ad S(x)} z^{-1}b^{\geq 0}_{<(d+1)m+1} \delta(x-w), [e^{-\ad S(y)} z^{-i+1}b^{\geq 0}_{<(d+1)m+1} \delta(y-w), \mathcal{L}_m(w) ])_w dxdy \\
=-&  \iint ( z^{-1}b^{\geq 0}_{<(d+1)m+1} \delta(x-w), [ z^{-i+1}b^{\geq 0}_{<(d+1)m+1} \delta(y-w), \mathcal{L}_{m,0}(w) ])_w dxdy \\
=-& \iint ( [z^{-1}b^{\geq 0}_{<(d+1)m+1} \delta(x-w),  z^{-i+1}b^{\geq 0}_{<(d+1)m+1} \delta(y-w)], \mathcal{L}_{m,0}(w) )_w dxdy =0.
\end{align*}
Furthermore, Theorem \ref{Thm:9.6} implies the independency of the integrals of motion. Indeed, if we denote 
$$ \{a_\partial b\}_{\to}:= \sum_n c_n \partial^n \text{ where } \{a_\lambda b\}=\sum_n c_n \ \lambda^n \in \mathcal{V}_{[d,m]}[\lambda],$$
 Theorem \ref{Thm:9.6} can be written as
$$\{ u^j_{i \ {\partial}} u_p^q\}_{1 \ \to} \frac{\delta H_{z^{-(i+1)}b}}{\delta u_i^j} = \{ u^j_{i \ {\partial}} u_p^q\}_{2 \ \to} \frac{\delta H_{z^{-i}b}}{\delta u_i^j },$$
for any $i \geq 0$, $u_i^j, u_p^q \in \mathcal{B}$. Since $\frac{\delta H_b}{\delta u^0}$ is not a constant, there is $u_p^0$ such that $(u_i^0, u_p^0)\neq 0$. Then $\{ u^0_{i \ {\partial}} u_p^0\}_{2 \ \to} $ increases the total derivative order of $\frac{\delta H_{b}}{\delta u_i^0}$ by 1. However, $\{ u^0_{i \ {\partial}} u_p^0\}_{1 \ \to} $ preserve the total derivative order of $\frac{\delta H_{z^{-1}b}}{\delta u_i^0}.$ Hence $H_{b}$ and $H_{z^{-1}b}$ are linearly independent. Inductively, each $H_{z^{-i}b}$, for any $i\geq 0$, has a different total derivative order. So  $H_{z^{-i} b}$ are linearly independent.
\end{proof}

\begin{ex}
Let $\g=\sll_2$ and $m=1$. Then the associated universal Lax operator is 
$$ L_1 = k\partial + q_1 + \Lambda_1,$$
where
$$
q_1=  \left[ \begin{array}{cc}
x  & f  \\
e & -x 
\end{array} \right] 
+ 
\left[ \begin{array}{cc}
xz & fz \\
0 & -xz
\end{array} \right]  z^{-1}
\text{ \  and \  } 
\Lambda_1= 
-\left[ \begin{array}{cc}
0  &  z^{-2} \\
z^{-1} & 0 
\end{array} \right] .
$$
There exists a unique linear map $\ \gamma: \g \oplus \g_f z \to \WW(\sll_2, \Lambda_1,k)$ such that 
$$
q_1^{can}:=  \left[ \begin{array}{cc}
\gamma_x  & \gamma_f\\
\gamma_e & -\gamma_x 
\end{array} \right] 
+ 
\left[ \begin{array}{cc}
0  & \gamma_{fz} \\
0 & 0 .
\end{array} \right]  z^{-1}
$$
is gauge equivalent to $q_1.$  Then $\gamma_x, \gamma_f, \gamma_e, \gamma_{fz}$ freely generate the differential algebra $\WW(\g, \Lambda_1, k)$ (see (\ref{Eqn:8.10})). Let us find $S\in \text{im} (\ad \Lambda_1)$ and $h(q_1^{can}) \in \ker(\Lambda_1) \otimes \mathcal{V}_1$ such that 
\begin{equation}\label{Eqn:9.8_0301}
k\partial +h(q_1^{can}) + \Lambda_1 \otimes 1 = e^{\ad S} L_1^{can}, 
\end{equation}
where  $L_1^{can} = k\partial+ q_1^{can} + \Lambda_1$.
Suppose $h(q_1^{can}) = \sum_{i\geq -1} h_i$, $S= \sum_{i>0} S_i$ and $h_i$, $S_i \in \hat{\g}_i \otimes \mathcal{V}_1.$
Then the  $(\hat{\g}_{-1} \otimes \mathcal{V}_1)$-part of (\ref{Eqn:9.8_0301}) is:
$$ h_{-1} + [\Lambda_1, S_2] = \left( q_{1}^{can} \right)_{-1}= ez^{-1} \otimes \gamma_{fz} + f \otimes \gamma_e. $$
Hence we get
\begin{equation}\label{Eqn:9.8}
\begin{aligned}
 h_{-1}   = \Lambda_{1} z \otimes -\frac{1}{2}  \left( \gamma_{fz} + \gamma_e  \right), \ \  S_2  = hz \otimes \frac{1}{4} \left( \gamma_{fz}-\gamma_{e} \right). 
\end{aligned}
\end{equation}
Similarly, by equating $(\hat{\g}_{0} \otimes \mathcal{V}_1)$, $(\hat{\g}_{1} \otimes \mathcal{V}_1)$, $(\hat{\g}_{2} \otimes \mathcal{V}_1)$,  $(\hat{\g}_{3} \otimes \mathcal{V}_1)$-parts of (\ref{Eqn:9.8_0301}), we have
\begin{equation}\label{Eqn:9.9}
\begin{aligned}
h_0 & = 0, \ \ \ S_3 = Kz \otimes -\frac{1}{2}\gamma_x,\\
h_1 & = \Lambda_1 z^2 \otimes \left( -\frac{1}{2} \gamma_{f} - \frac{1}{4} (\gamma_{fz}-\gamma_{e})^2 \right) , \ \ \ 
S_4  = hz^2 \otimes \left( \frac{1}{4} \gamma_f +  \frac{1}{16} \gamma_{fz}^2 -\frac{3}{16} \gamma_e^2 +\frac{1}{8} \gamma_{fz}\gamma_{e} \right), \\
h_2 & = 0, \\
h_3 & = \Lambda_1  z^3 \otimes \left( \frac{1}{2} \gamma_x^2 +( \gamma_e-\gamma_{fz}) \left( \frac{1}{16} \gamma_{fz}^2-\frac{1}{16} \gamma_e^2 + \frac{1}{4} \gamma_f \right)\right),
\end{aligned}
\end{equation}
where $ K= (-e+fz).$\\
Let $b z^{-n}:= -\frac{1}{2}(e+f z) z^{-n} \otimes 1 $ and let 
$ H_n(q_1):=H_{b z^{-n}}(q_1) = ( b z^{-n} , h(q_1)).$
Then we obtain
$$
H_0  =  -\frac{1}{2} \gamma_{f} - \frac{1}{4} (\gamma_{fz}-\gamma_{e})^2, \ \ \ H_1 =  \frac{1}{2} \gamma_x^2 +( \gamma_e-\gamma_{fz}) \left( \frac{1}{16} \gamma_{fz}^2-\frac{1}{16} \gamma_e^2 + \frac{1}{4} \gamma_f \right).
$$
Using formulas (\ref{Eqn:8.11})-(\ref{Eqn:8.14}), we obtain the following Poisson $\lambda$-brackets:

\begin{equation}\label{Eqn:9.11}
\begin{aligned}
\{H_{0 \lambda} \gamma_f \}_1 & = k \lambda, \\ 
\{H_{0 \lambda} \gamma_x \}_1 & = \{H_{0 \lambda} \gamma_e \}_1= \{H_{0 \lambda} \gamma_{fz} \}_1=0 ,
\end{aligned}
\end{equation}

\begin{equation}\label{Eqn:9.12}
\begin{aligned}
&\{H_{0 \lambda} \gamma_f \}_2 =\{H_{1 \lambda} \gamma_f \}_1= \left(\gamma_ x+\frac{1}{2} k (\lambda+\partial) \right)(\gamma_e-\gamma_{fz}),\\
&\{H_{0 \lambda} \gamma_x \}_2  =\{H_{1 \lambda} \gamma_x \}_1= -\frac{1}{2}\gamma_e(\gamma_{fz}-\gamma_e) -\frac{1}{2} \gamma_f, \\
&\{H_{0 \lambda} \gamma_e \}_2  =\gamma_x-\frac{1}{2} k\lambda, \ \ \ \{H_{1 \lambda} \gamma_e \}_1=\gamma_x, \\
&\{H_{0 \lambda} \gamma_{fz} \}_2  = -\gamma_x +\frac{1}{2} k \lambda, \ \ \{H_{1 \lambda} \gamma_{fz} \}_1= -\gamma_x,
\end{aligned}
\end{equation}

\begin{equation}\label{Eqn:9.13}
\begin{aligned}
&\{H_{1 \lambda} \gamma_f \}_2 = \left( \frac{1}{2} \gamma_x+ \frac{1}{4} k (\lambda + \partial) \right) (\gamma_{fz}^2 -\gamma_e^2) + \frac{1}{2} k (\lambda+\partial) \gamma_f ,\\
&\{H_{1 \lambda} \gamma_x \}_2 = \left( \frac{1}{4} \gamma_e (\gamma_e-\gamma_{fz}) - \frac{1}{4} f\right) (\gamma_e+\gamma_{fz}) ,\\
&\{H_{1 \lambda} \gamma_e \}_2 = \gamma_e \gamma_x + \left( -\frac{1}{2} \gamma_x + \frac{1}{4}k(\lambda+\partial) \right)(\gamma_e -\gamma_{fz}),\\
&\{H_{1 \lambda} \gamma_{fz} \}_2 =-\gamma_e \gamma_x - \left( -\frac{1}{2} \gamma_x + \frac{1}{4}k(\lambda+\partial) \right)(\gamma_e -\gamma_{fz}).
\end{aligned}
\end{equation}
As a consequence, two equations
\begin{equation}\label{Eqn:9.14}
\begin{aligned}
 \frac{d\gamma}{dt} &= \{ H_{0 \lambda} \gamma\}_2 |_{\lambda=0} =  \{ H_{1 \lambda} \gamma\}_1 |_{\lambda=0},\\
  \frac{d\gamma}{dt} &= \{ H_{1 \lambda} \gamma\}_2 |_{\lambda=0},
\end{aligned}
\end{equation}
are compatible integrable systems. By (\ref{Eqn:9.12}),  the first equation in (\ref{Eqn:9.14}) is as follows:

\begin{equation}\label{Eqn:9.15}
\left\{
\begin{aligned}
\frac{d\gamma_{f}}{dt} & = (\gamma_x + \frac{1}{2} k \partial) (\gamma_{fz}-\gamma_e) \\
\frac{d\gamma_{x}}{dt} & = - \frac{1}{2} \gamma_e (\gamma_{fz} - \gamma_e) -\frac{1}{2} \gamma_f \\
\frac{d\gamma_{e}}{dt} & = -\frac{d\gamma_{fz}}{dt} = \gamma_x.
\end{aligned}
\right.
\end{equation}

Since $\gamma_e+\gamma_{fz}$ is in the center of $\WW(\g, f,k)$, we take quotient by the center of $\WW(\g, f,k)$ and obtain the following equation:
\begin{equation}
\left\{
\begin{aligned}
\frac{d\gamma_{f}}{dt} & = -2(\gamma_x + \frac{1}{2} k \partial) (\gamma_e) \\
\frac{d\gamma_{x}}{dt} & =  \gamma_e^2 -\frac{1}{2} \gamma_f \\
\frac{d\gamma_{e}}{dt} & = \gamma_x.
\end{aligned}
\right.
\end{equation}

Eliminating $\gamma_{f} $and $\gamma_x$, we get the equation
\begin{equation}\label{Eqn:9.16}
(\gamma_{e})_{ttt} = 3 \gamma_e (\gamma_e)_t +\frac{1}{2} k (\gamma_e)_x.
\end{equation}
This equation is the KdV equation with $x$ and $t$ exchanged. 

\end{ex}

\end{document}